\documentclass{article}
\usepackage{amssymb}
\usepackage[numbers]{natbib}
\usepackage{enumitem}
\usepackage{pdfpages}
\usepackage{amssymb}
\usepackage{subeqnarray}
\usepackage{float}
\usepackage{pifont}
\usepackage{graphicx}
\usepackage{epstopdf}
\setcounter{tocdepth}{3}
\usepackage{url}
\usepackage{amsmath}
\usepackage{amsthm}
\usepackage[utf8]{inputenc}
\usepackage[T1]{fontenc}
\usepackage{lmodern}
\usepackage{ae}
\usepackage[normalem]{ulem}

\newtheorem{theorem}{\bf Theorem}[section]
 \newtheorem{proposit}[theorem]{\bf Proposition}
 \newtheorem{coro}[theorem]{\bf Corollary}
\newtheorem{lem}[theorem]{\bf Lemma}

 \newtheorem{example}[theorem]{\bf Example}

\def\thm#1\par{\medskip\par\noindent\begin{theorem} \strut \sl #1 \end{theorem}\par}
\def\propo#1\par{\medskip\par\noindent\begin{proposit} \strut \sl #1 \end{proposit}
\par}
\def\cor#1\par{\medskip\par\noindent\begin{coro} \strut \sl #1 \end{coro}\par}
\def\lm#1\par{\medskip\par\noindent\begin{lem} \strut \sl #1 \end{lem}\par}
%
\date{}

\title{Loopless Algorithms to Generate Maximum Length Gray Cycles wrt. $k$-Character Substitution}

\author{Jean N\'eraud\\
{\small{\it Univ Rouen Normandie, LITIS UR 4108, F-76000 Rouen, France}}}
\begin{document} 

\maketitle

\begin{abstract}
Given a binary word relation $\tau$ onto $A^*$ and a finite language $X\subseteq A^*$, a $\tau$-Gray cycle over $X$ consists in a permutation $\left(w_{[i]}\right)_{0\le i\le |X|-1}$ of $X$
such that each word $w_{[i]}$ is an image  under $\tau$ of the previous word $w_{{[i-1]}}$.
We define  the complexity measure $\lambda_{A,\tau}(n)$, equal to the largest  cardinality of a language $X$ having words of length at most $n$, and st. some $\tau$-Gray cycle  over $X$  exists. 
The present paper is concerned with  $\tau=\sigma_k$,
the so-called $k$-character substitution,
st. $(u,v)\in\sigma_k$ holds  if, and only if, the Hamming distance of $u$ and $v$ is $k$.
We present loopless (resp., constant amortized time) algorithms for computing specific maximum length $\sigma_k$-Gray cycles. 
\end{abstract}

\section{Introduction}
\label{Intro}

In the framework of combinatorial algorithms, one of the most well-documented issues concerns the development of methods for generating, once and for all, each object of a specific class.  \cite{LEH64}. 
Many topics are concerned by such a problem: suffice it to mention sequence counting \cite{BLPP99},
signal encoding \cite{L81}, and  data compression \cite{R86}.

\smallbreak
In the whole paper we fix some alphabet, say $A$, and we assume that $|A|$, the cardinality of $A$, is not less than 2.
The so-called {\it binary Gray codes} 
first appeared in \cite{G58}:
given a binary alphabet $A$ and some positive integer $n$,
such objects referred to  sequences  with maximum length of pairwise different $n$-tuples  of characters (that is, words in  $A^n$), provided that any pair of consecutive items  differ by exactly one character.
Shortly after, a similar study was drawn in the framework of non-binary alphabets \cite{E84}.
Regarding other famous combinatorial classes of objects, the  term of {\it combinatorial Gray code}, for its part,
appeared in \cite{JWW80}:
actually,  the difference between successive items, although being fixed, need not to be small \cite{S97}. 
Generating all permutations of  a given $n$-element set
constitutes a noticeable example \cite{BV02,E73,V03,W09}.
The so-called bubble languages \cite{BFLRS20,RSW12} are also involved,  as well as  cross-bifix-free sets \cite{BBPSV14}, 
 Debruijn sequences \cite{FM78}, Dyck words \cite{VW06}, Fibonacci words \cite{BKV22},  Lyndon words \cite{V08},
Motzkin words \cite{V02}, necklaces \cite{RSW92,V08}, set partitions \cite{K76},  subsets of fixed size  \cite{EMcK84, JW80}:
the list is  far from exhaustive. For some surveys we suggest the reader report to \cite{T23,S97,vDH13,XM09}. 
From an algorithmic point of view, the ultimate feature  is to develop methods for producing each new object  with constant, or at least constant amortized time delay, that is 
 {\it loopless} or {\it constant amortized time} algorithms  are desired \cite{E73,LMNW23,SW13,V03,W09}.

\smallbreak
The Combinatorial Gray sequences are often required to be {\it cyclic} \cite{CDG92},
 in the sense that 
the initial term itself can be retrieved as  successor of the last one. 
Such a condition justifies the terminology of {\it Gray cycle} \cite[Sect. 7.2.1.1]{K05}.
In order to develop a formal framework, we note that each of the sequences we have  mentioned above is concerned with a binary word relation $\tau\subseteq A^*\times A^*$
($A^*$ stands for the free monoid generated by $A$).
For its part, the combinatorial class of objects can be modeled by some finite language $X\subseteq A^*$. 
Given  a sequence of words we denote in square brackets the corresponding indices: 
this will allow us to clearly distinguish the difference with $w_i$, the character in {\it position} $i$ in a given word $w$.
In addition, we set $\tau(w)=\{w':(w,w')\in\tau\}$.
We define a {\it Gray cycle over $X$ wrt. $\tau$} (for short: {\it $\tau$-Gray cycle over $X$}) as 
every finite sequence of words $\left(w_{[i]}\right)_{i\in [0,|X|-1]}$ satisfying each of the three following conditions:
\begin{enumerate}[label={\rm (G\arabic*)}, leftmargin=2cm]
\item 
For every word $x\in X$, some $i\in [0,|X|-1]$ exists st. we have $x=w_{[i]}$;
\label{iva}
\item
\label{ivb}
 For every $i\in [1,|X|-1] $,  we have $w_{[i]}\in\tau\left(w_{[i-1]}\right)$; in addition, the cond.  $w_{[0]}\in\tau\left(w_{[|X|-1]}\right)$ holds;
\item 
\label{ivc} For every pair $i,j\in [0,|X|-1]$, $i\neq j$ implies $w_{[i]}\neq w_{[j]}$.
\end{enumerate}
With this definition, 
in the Gray cycle the terms may have a variable length.
For instance, given the alphabet $A=\{0,1\}$, take for $\tau$  the binary word relation $\Lambda_1$ which,
with every word $w$, associates all the strings located  within a {\it Levenshtein distance} of $1$ from $w$ (see e.g.  \cite{N21}).
Actually  the  sequence $(0,00,01,11,10,1)$ is a $\Lambda_1$-Gray cycle over $X=A\cup A^2$.  

\smallbreak
In addition to the topics we mentioned above, two other  fields  involved by those Gray cycles should be mentioned.
Firstly,  regarding graph theory, a $\tau$-Gray cycle over $X$ exists iff. 
there is some Hamiltonian circuit  in the  graph of the relation $\tau$ (see \cite{S97} and for some surveys on such a notion \cite{G91,KM09,SW18,LPR19}). Secondly, 
the existence of a  $\tau$-Gray cycle over $X$ implies $\tau(X)\subseteq X$ (with $\tau(X)=\{\tau(w):w\in X\}$): as defined in   \cite{N21}, $X$ is $\tau$-{\it closed}.
Actually, closed sets  constitute a special subfamily in the famous {\it dependence systems} \cite{JSY94}.
However,   the fact that $X$ is $\tau$-{\it closed}
does not guarantees that  some  $\tau$-Gray cycle may exist  over $X$. 
A typical example is provided by $\tau=id_{A^*}$, the identity over $A^*$.
Indeed,  although every finite set $X\subseteq A^*$ is $\tau$-closed,
non-empty $\tau$-Gray cycle can only exist over $X$  if  $|X|=1$.

\smallbreak
In the present paper, we consider
the family of  all sequences 
that can be a $\tau$-Gray cycle over 
some subset $X$ of  $A^{\le n}$ (with $A^{\le n}=\{w\in A^*: |w|\le n\}$).
This is a natural question to focus on those sequences of maximum  length:
clearly, they correspond to subsets $X$ of maximum cardinality.  
We denote by $\lambda_{A,\tau}(n)$ that maximum length. 
This actually means introducing some  complexity measure for the binary word relation $\tau$ \cite{Ne21}.
We focus on  the case where $\tau$ is  $\sigma_k$, the so-called {\it $k$-character substitution}.
With every word with length at least $k$, say $w$, this relation associates all  the words $w'$, 
with $|w'|=|w|$, and st. the character $w'_i$ differs from $w_i$ in exactly $k$ values of  $i\in[1,|w|]$:
in other words, the Hamming distance of $w$ and $w'$ is $k$. 

\smallbreak
Some words on the word binary relation $\sigma_k$: firstly, as commented in \cite{JK97,N21}, this relation  has noticeable inference in the famous framework of {\it error detection}.
Secondly, the cond. $w'\in\sigma_k(w)$ implies $|w'|=|w|$, therefore 
if  there is some $\sigma_k$-Gray cycle over a language $X$, then all the words in $X$ have a common length: by definition $X$ is a {\it uniform} set. 
From this point of view, the classical Gray codes that allow to generate all $n$-tuples over $A$,  
correspond to $\sigma_1$-Gray cycles over $A^n$. 
Actually, in the case where $A$ is  a binary alphabet, some maximum length $\sigma_2$-Gray cycles  have also been constructed
(we have $\lambda_{A,\sigma_2}(n)=2^{n-1}$) \cite[Exercice 8, p. 77]{K05}. 
However, in the most general case, although an exhaustive description of $\sigma_k$-closed variable-length codes has been  provided in \cite{N21},
the question of generating some $\sigma_k$-Gray cycle of maximum length  has remained open.
The present paper present algorithmic constructions to generate those sequences of maximum length.
More precisely, we establish the following result:
\medbreak\noindent
 {\bf Theorem}
 {\it  
Given a finite alphabet  $A$, $k\ge 1$, and $n\ge k$, 
there is a loopless algorithm that allows to generate some maximum length $\sigma_k$-Gray cycle.
In addition  the following equation  holds:}
\begin{eqnarray}
\lambda_{A,\sigma_k}(n)=\left\{
\begin{array}{ccccc}
|A^n|&|A|\ge 3, n\ge k\\
2&|A|=2, n=k\\
|A|^n&~|A|=2, n\ge k+1, k~ {\rm is~ odd}\\
~~~ |A|^{n-1}&~~~ |A|=2, n\ge k+1,  k~ {\rm is~ even.}
\end{array}
\right.
\nonumber
\end{eqnarray}

\smallbreak
Beforehand, the computation of such maximum length $\sigma_k$-Gray cycles  is done thanks to induction-based equations.
In addition, in each case  we present an iteration-based method for computing those sequences: it directly allows to compute the term of index $i$ by starting from the term of index $i-1$.

\bigbreak
We now shortly describe the contents of the paper:

\smallbreak
-- Section \ref{Prelim}, 
is devoted to the preliminaries. We fix some complementary definitions and notations; in addition, we recall the  two famous examples of
the  {\it binary} (resp., $|A|$-{\it ary}) {\it reflected Gray code}. 

\smallbreak
-- In Sect. 2, 
 we focus on the case where the alphabet $A$ possesses  at least $3$ letters.
Starting with the $|A|$-{\it ary} reflected Gray code, by  establishing some induction formula we prove that, given a pair of  positive integers $n,k$,   there is a  peculiar Gray cycle, namely $h^{n,k}=\left(h^{n,k}_{[i]}\right)_{0\le i\le |A|^n-1}$: its length is $|A|^n$.
\smallbreak
-- An iteration-based method for computing the preceding cycle is developed in Sect. 3. 

\smallbreak 
-- In Sect. 4, 
in the case where $A$ is a binary alphabet, with $k$ being an odd integer, we also
compute a maximum length $\sigma_k$-Gray cycle. Once more this is done by establishing some
inductive method: practically it relies on two peculiar $k$-Gray cycles.

\smallbreak
-- A corresponding iteration-based method of computation is developed in Sect. 5. 

\smallbreak
-- At last, in Sect. 6, 
in the case where $A$ is a binary alphabet, with $k$ being an even positive integer,
we  also compute a  maximum length Gray cycles : this leads to complete the proof of the  theorem we mentioned above.

In addition, it is common in the literature to define a $k$-Gray code as the sequence
where two consecutive items have distance at most $k$ \cite{T23}. In the case of the Hamming distance, this notion corresponds to the so-called $\Sigma_k$-Gray cycles, where the relation $\Sigma_k$ st. $(w,w')\in\Sigma_k$ iff. the Hamming distance of $w$ and $w'$ is not greater than $k$.
We discuss where our results intersect with such a topic.
Some further development is  also raised.
\section{Preliminaries}
\label{Prelim}
Several definitions and notation  have already been fixed.
In  the whole paper, $A$ stands for a finite alphabet, with $|A|\ge 2$.
Given a word $w\in A^*$,  we denote by $|w|$ its length;  in addition, for every $a\in A$, we denote by $|w|_a$  the number of occurrences of the character $a$ in $w$.
Given a pair of words $w,w'\in A^*$, $w'$ is a {\it prefix} (resp., {\it suffix}) of $w$ if some word $u\in A^*$ exists st. $w=w'u$ (resp. $w=uw'$).
Given a word $w\in A^*$, $m\in [0,|w|]$, we denote by $A^{-m}w$ the unique suffix of $w$ with length $|w|-m$. 

Historically Gray cycles have been developed in order to generate integers in $|A|$-ary numeration system.  From this point of view, regarding the position of the characters in the word $w$, 
it is convenient to set $w=w_n\cdots w_1$, with $w_i\in A$, for every $i\in [1,n]$ (we say that $w_i$ is the character with {\it position} $i$ in the word $w$).
In other words, in the $|A|$-ary numeration system 
 the word $w=w_n\cdots w_1$ is the representation of the integer  $w_n|A|^{n-1}+w_{n-1}|A|^{n-2}+\cdots +w_1$.  

\bigbreak\noindent
{\it The reflected binary Gray cycle}\\
Let $A=\{0,1\}$ and $n\ge 1$. The most famous example of $\sigma_1$-Gray cycle over $A^n$ is certainly the  so-called {\it reflected binary Gray code} (see e.g. \cite[pp. 5-6]{K05} or \cite[Sect. 3.1]{T23}):
 in the present paper we denote it by $g^{n,1}$. It can be computed in different ways:
 
\smallbreak
-- Firstly, the sequence can be defined recursively by the following rule:
\begin{eqnarray}
\label{BGRC-recursion}
g^{0,1}=(\varepsilon);~~~g^{n+1,1}=\left(\left(0g^{n,1}\right),\left(1\left(g^{n,1}\right)^R\right)\right)
\end{eqnarray}
In this notation  the comma stands for the sequence concatenation.
Moreover, given a finite sequence, say $x=\left(x_1,\cdots x_n\right)$ and a word $w\in A^*$, we set $x^R=\left(x_n,\cdots,x_1\right)$ and $wx=\left(wx_1,\cdots, wx_n\right)$.
Actually  Eq. (\ref{BGRC-recursion}) leads to construct the whole sequence $g^{n+1,1}$ by applying a series of one character concatenations on the left over the words of $g^{n,1}$.

\smallbreak
-- Secondly, in the literature there is a  famous   constant amortized-time  iterative algorithm (in the paper we denote it by Algorithm (a)) that allows to compute $g^{n,1}$. 
The method starts by  setting $g^{n,1}_{[0]}=0^n$. After that, for every $i\in [1,|A|^n-1]$, the word  $g^{n,1}_{[i]}$ is computed from right to left by starting from $g^{n,1}_{[i-1]}$. Actually 
  a unique integer $j\in [1,n]$ exists st.  in both  words $g^{n,1}_{[i]}$ and  $g^{n,1}_{[i-1]}$, the corresponding characters in position $j$ differ:
with the preceding convention over character positions,  $j$  is chosen in order to satisfy the following condition:
\begin{eqnarray}
\label{choix-j-g}
j~{\rm is~ the~ {\it minimum}~position~in~}g^{n,1}_{[i]}{\rm ~  st.~}g^{n,1}_{[i]}\notin\{g^{n,1}_{[0]},\cdots,g^{n,1}_{[i-1]}\}.
\end{eqnarray}
\begin{example}
\label{Ex1}
Below are column representations of the sequences $g^{2,1}$ and $g^{3,1}$:
$$
\begin{array}{c}
 g^{2,1}\\ \overbrace{}\\00\\ 01\\11\\~10
\ \\ \ \\ \ \\ \ \\ \ \\
\end{array}
~~~~~~~~~~~~~~~~
\begin{array}{c}
 g^{3,1}
\\ \overbrace{~~~~~~}\\000\\ 001\\ 011\\ 010\\ 110\\ 111 \\101\\ 100\\ 
\end{array}
$$
\end{example}
By construction, for every $n\ge 1$, each of the following identities holds:
\begin{eqnarray}
\label{g-termes-remarquables}
g^{n,1}_{[0]}=0^n,~~g^{n,1}_{[1]}=0^{n-1}1,~~g^{n,1}_{[2^{n}-2]}=10^{n-2}1,~~g^{n,1}_{[2^{n}-1]}=10^{n-1}.
\end{eqnarray}

\bigskip\noindent
{\it The  $|A|$-ary reflected Gray cycle}\\
The preceding constructions can be extended in order to obtain the so-called {\it $|A|$-ary reflected Gray code} over $A^n$, that we denote by $h^{n,1}$.  
More precisely we set $A=\{0, \cdots,p-1\}$ and we denote  by $\theta$  the cyclic permutation defined by $0\rightarrow  1,\dots, p-2\rightarrow p-1, p-1\rightarrow 0$. 

\smallbreak
 -- Firstly, $h^{n,1}$ can be defined recursively by the following rule \cite[Sect. 3.19]{T23}:
\begin{eqnarray}
\label{BGRC-recursion-gen}
h^{0,1}=(\varepsilon);~~~h^{n+1,1}=\left(0h^{n,1},1\left(h^{n,1}\right)^R, 2h^{n,1}, 3\left(h^{n,1}\right)^R,\cdots, (p-1)\Gamma\right)\\
{\rm with~~}
\label{CCg}
\Gamma=\left\{
\begin{array}{l}
h^{n,1}~~{\rm if}~~p ~~{\rm is~~ odd }\\
\left(h^{n,1}\right)^R ~~{\rm otherwise.}
\end{array}
\right.
\nonumber
\end{eqnarray}

-- Secondly, the sequence $h^{n,1}$ can be generated by applying a constant amortized-time algorithm \cite{C63,E84}, that we denote by Algorithm (b).
The method starts by setting  $h^{n,1}_{[0]}=0^n$.
For every $i\in [1,|A|^n-1]$, starting from $h^{n,1}_{[i-1]}$ the word $h^{n,1}_{[i]}$ is computed from right to left. 
More precisely, there are $j\in [1,p^{n_0}-1]$, $c,d\in A$, where  $c$ and  $d$ are the characters respectively in position $j$ in $h^{n,1}_{[i-1]}$ and  $h^{n,1}_{[i]}$,
st. both the  each of following conditions holds:
\begin{eqnarray}
\label{CCh}
 j {\rm~  is~ the~ {\it minimum}~ position ~in~} h^{n,1}_{[i]} {\rm~ st.}~ h^{n,1}_{[i]}\notin\{h^{n,1}_{[0]},\cdots,h^{n,1}_{[i-1]}\},\\
d=\left\{
\begin{array}{l}
\theta(c)~{\rm if}~ i\div p^j~
{\rm is~~ even}\\
\theta^{-1}(c) ~{\rm otherwise.}
\end{array}
\right.
\end{eqnarray}

\begin{example}
\label{Ex2}
For $A=\{0,1,2\}$ the sequence $h^{3,1}$ 
is the concatenation in this order of the three following subsequences:
$$
\begin{array}{c}
h^{3,1}
\\ \overbrace{~~~~~~}\\000\\ 001\\ 002\\ 012\\ 011\\ 010 \\020\\ 021\\  022
\end{array}
~~~~~~~~
\begin{array}{c}
\ \\ \ \\ 122\\ 121\\ 120\\ 110\\ 111\\ 112 \\102\\  101\\100
\end{array}
~~~~~~~~
\begin{array}{c}
\ \\ \ \\ 
200\\ 201\\ 202\\ 212\\ 211\\ 210 \\220\\ 221\\  222
\end{array}
$$
\end{example}
%
\section{The case where we have $|A|\ge 3$}
\label{A>3}
Let  $n\ge k\ge 1$, $p\ge 3$, and $A=\{0,1,\cdots,p-1\}$. In what follows, we indicate the construction of  a peculiar  $\sigma_k$-Gray cycle over $A^n$, namely  $ h^{n,k}$.  
This is done by applying some induction over $k\ge 1$: in view of that, we set $n_0=n-k+1$.

\smallbreak
-- The starting point corresponds to $h^{n_0,1}$, the  $p$-ary reflected Gray code over  $A^{n_0}$ as reminded  in Section \ref{Prelim}.

\smallbreak
 -- For the induction stage, by starting with  a $\sigma_{k-1}$-Gray cycle over $A^{n-1}$,  namely $ h^{n-1,k-1}$,
we compute the corresponding sequence $ h^{n,k}$ as indicated in the following:

\noindent
Let $i\in [0,p^{n}-1]$, and let $q\in [0,p-1]$, $r\in [0,p^{n-1}-1]$ be  the unique pair of non-negative integers st. $i=qp^{n-1}+r$. We  set:
\begin{eqnarray}
\label{Gamma-A-ge3}
h^{n,k}_{[i]}=h^{n,k}_{[qp^{n-1}+r]}=
\theta^{q+r}(0) h^{n-1,k-1}_{[r]}.
\end{eqnarray}
As shown in Example \ref{E3}, by construction the resulting sequence  $ h^{n,k}$ is the concatenation in this order of $p$ subsequences namely $C_0, \dots,  C_{p-1}$,
with $C_q=\left( h^{n,k}_{[qp^{n-1}+r]}\right)_{0\le r\le p^{n-1}-1}$, for each $q\in [0,p-1]$.
Since $\theta$ is one-to-one, given a pair of different integers $q,q'\in [0,p-1]$,
for every  $r\in [0,p^{n-1}-1]$,  in each of  the  subsequences $C_q$, $C_{q'}$, the words  $h^{n,k}_{[qp^{n-1}+r]}$ and $h^{n,k}_{[q'p^{n-1}+r]}$ only differ in their initial characters,
which  respectively are $\theta^{q+r}(0)$ and $\theta^{q'+r}(0)$.
In addition, since $h^{n-1,k-1}$ is a $\sigma_{k-1}$-Gray cycle over $A^{n-1}$, we have $\left | h^{n,k}\right |=p\left | h^{n-1,k-1}\right |=p^n$. 
\begin{example}
\label{E3} 
{\rm Let $A=\{0,1,2\}$,  $n=3$, $k=2$, thus  $p=3$, $n_0=2$. By starting with $h^{n-1,k-1}=h^{2,1}$,
we construct the sequence  $ h^{n,k}$ as the concatenation of $C_0$, $C_1$, and $C_2$: 
$$
\begin{array} {c}
 h^{n-1,k-1}\\ \overbrace{~~~~~}\\00\\ 01\\ 02\\ 12\\ 11\\ 10 \\20\\ 21\\~22
\ \\ \ \\
\end{array}
~~~~~~~~~~~~~~~~
\begin{array}{c}
 h^{n,k}
\\ \overbrace{~~~~~~}\\000\\ 101\\ 202\\ 012\\ 111\\ 210 \\020\\ 121\\ 222\\
\\
\end{array}
~~~~~
\begin{array}{c}
\ \\
100\\ 201\\ 002\\ 112\\ 211\\ 010 \\120\\ 221\\ 022\\
\end{array}
~~~~~
\begin{array}{c}
\ \\ 
200\\ 001\\ 102\\ 212\\ 011\\ 110 \\220\\ 021\\ 122\\
\end{array}
$$
}
\end{example}
\begin{proposit}
\label{Cycle-Age3-n-k}
The sequence $h^{n,k}$
is a $\sigma_k$-Gray cycle over $A^n$.
\end{proposit}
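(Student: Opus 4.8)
The plan is to argue by induction on $k\ge 1$, exploiting the block decomposition $h^{n,k}=(C_0,\dots,C_{p-1})$ furnished by Eq.~(\ref{Gamma-A-ge3}). For the base case $k=1$ one has $n_0=n$, so $h^{n,1}$ is precisely the $p$-ary reflected Gray code over $A^n$, which is a $\sigma_1$-Gray cycle by the construction recalled in Section~\ref{Prelim}. For the inductive step I assume that $h^{n-1,k-1}$ is a $\sigma_{k-1}$-Gray cycle over $A^{n-1}$ (so it has length $p^{n-1}$, its terms are pairwise distinct, and every pair of consecutive terms, the cyclic pair included, lies at Hamming distance exactly $k-1$), and I deduce the three conditions \ref{iva}--\ref{ivc} for $h^{n,k}$.

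Since the excerpt already shows $|h^{n,k}|=p^n=|A^n|$, conditions \ref{iva} and \ref{ivc} are equivalent, so it suffices to prove pairwise distinctness. Suppose $h^{n,k}_{[qp^{n-1}+r]}=h^{n,k}_{[q'p^{n-1}+r']}$. Comparing the length-$(n-1)$ suffixes gives $h^{n-1,k-1}_{[r]}=h^{n-1,k-1}_{[r']}$, whence $r=r'$ by the induction hypothesis; comparing the initial characters then gives $\theta^{q+r}(0)=\theta^{q'+r}(0)$, and since $\theta$ is a $p$-cycle this forces $q\equiv q'\pmod p$, i.e. $q=q'$. Thus \ref{ivc}, and hence \ref{iva}, holds.

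The heart of the argument is condition \ref{ivb}, for which I distinguish three kinds of consecutive pairs and in each case split the Hamming distance into the contribution of the initial character and that of the length-$(n-1)$ suffix. For a pair inside a single block $C_q$ (indices $r,r+1$), the suffixes $h^{n-1,k-1}_{[r]},h^{n-1,k-1}_{[r+1]}$ are consecutive in $h^{n-1,k-1}$ and so contribute $k-1$, while the initial characters $\theta^{q+r}(0)$ and $\theta^{q+r+1}(0)$ differ because $\theta$ has no fixed point, contributing $1$; the total is $k$. At an internal boundary (last term of $C_q$ to first term of $C_{q+1}$) and at the global cyclic boundary (last term of $C_{p-1}$ to $h^{n,k}_{[0]}$), the two suffixes are respectively $h^{n-1,k-1}_{[p^{n-1}-1]}$ and $h^{n-1,k-1}_{[0]}$, whose Hamming distance is $k-1$ by the cyclic part of the induction hypothesis.

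The delicate point, and the place where the hypothesis $p\ge 3$ is indispensable, is to check that the initial characters still differ across these boundaries. In both boundary cases the relevant exponents of $\theta$ differ by $p^{n-1}-2$ modulo $p$; since $n\ge k\ge 2$ in the inductive step we have $p\mid p^{n-1}$, so this difference is congruent to $-2\pmod p$, which is nonzero precisely because $p\ge 3$. Hence the initial characters contribute exactly $1$ and the total Hamming distance is again $k$, establishing \ref{ivb} and completing the induction. For $p=2$ this congruence would vanish, which is exactly why the binary alphabet must be treated separately in the later sections.
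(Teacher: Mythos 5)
Your proposal is correct and follows essentially the same route as the paper's proof: induction on $k$ with the reflected $p$-ary code as base case, distinctness via comparing the length-$(n-1)$ suffixes first (forcing $r=r'$ by the induction hypothesis) and then the initial characters (forcing $q=q'$), and the three-case verification of the adjacency condition (inside a block, internal boundary, cyclic boundary) in which $p\ge 3$ enters exactly as in the paper through the non-vanishing of the exponent difference $-2 \bmod p$. The only cosmetic differences are that you establish distinctness before adjacency, and that you merge the paper's two separate boundary computations (its steps (i.ii) and (i.iii)) into a single congruence observation, namely that both exponent gaps are $\equiv p^{n-1}-2 \equiv -2 \pmod p$.
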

\begin{proof}
We argue by induction over $k\ge 1$.
Regarding the base case, as indicated above  $h^{n_0,1}$
 is the $|A|$-ary reflected Gray sequence.
In view of the induction stage, we assume that the finite sequence $ h^{n-1,k-1}$ is a  $\sigma_{k-1}$-Gray cycle over $A^{n-1}$, for some $k\ge 2$.

\smallbreak\noindent
(i) We start by proving that $h^{n,k}$ satisfies Condition  \ref{ivb}. This will be  done through the three following steps:

\smallbreak
~~~~(i.i) Firstly, we prove that, for each $q\in [0,p-1]$, in  the subsequence $C_q$ 
two consecutive terms  are necessarily in correspondence under $\sigma_k$.
Given $r\in [0, p^{n-1}-1]$, by definition, we have $\theta^{r+q}(0)\in\sigma_1\left(\theta^{r+q-1}(0)\right)$.
Since $ h^{n-1,k-1}$ satisfies Condition \ref{ivb},  we have  $ h^{n-1,k-1}_{[r]}\in\sigma_{k-1}\left( h^{n-1,k-1}_{[r-1]}\right)$.
We obtain $\theta^{q+r}(0) h^{n-1,k-1}_{[r]}\in\sigma_k\left(\theta^{q+r-1}(0) h^{n-1,k-1}_{[r-1]}\right)$, thus  according to Eq. (\ref{Gamma-A-ge3}):
$h^{n,k}_{[qp^{n-1}+r]}\in\sigma_k\left( h^{n,k}_{[qp^{n-1}+r-1]}\right)$.

\smallbreak
~~~~(i.ii) Secondly, we prove that, for each $q\in [1,p-1]$, the last term of $C_{q-1}$ 
and the initial term of $C_{q}$ are also connected under $\sigma_k$. 
At first take $r=0$ in Eq. (\ref{Gamma-A-ge3}): 
it follows from $\theta^{p^{n-1}}=id_A$ that we have
$h^{n,k}_{[qp^{n-1}]}=\theta^{q}(0) h^{n-1,k-1}_{[0]}=\theta^{p^{n-1}+q}(0) h^{n-1,k-1}_{[0]}$.
Now take $r=p^{n-1}-1$ in Eq.~(\ref{Gamma-A-ge3}) , moreover substitute $q-1\in [0,p-2]$ to $q\in [1,p-1]$: we obtain
$h^{n,k}_{[qp^{n-1}-1]}=\theta^{q+p^{n-1}-2}(0) h^{n-1,k-1}_{[p^{n-1}-1]}$.
It follows from $p=|A|\ge 3$ that $\theta(0)\neq\theta^{-2}(0)$: since $\theta$ is one-to-one this implies
$\theta^{q+p^{n-1}}(0)\neq \theta^{q+p^{n-1}-2}(0)$, thus  $\theta^{q+p^{n-1}}(0)\in\sigma_1\left(\theta^{q+p^{n-1}-2}(0)\right)$.
Since by induction 
we have $ h^{n-1,k-1}_{[0]}=\sigma_{k-1}\left( h^{n-1,k-1}_{[p^{n-1}-1]}\right)$, we obtain
$h^{n,k}_{[qp^{n-1}]}\in\sigma_k\left(\theta^{q+p^{n-1}-2}(0) h^{n-1,k-1}_{[qp^{n-1}-1]}\right)$ that is,
 $h^{n,k}_{[qp^{n-1}]}\in\sigma_k\left( h^{n,k}_{[qp^{n-1}-1]}\right)$.

\smallbreak
~~~~(i.iii)
At last, we  prove that the first term of $C_0$ is connected under $\sigma_k$ with the last term of $C_{p-1}$.
In  Eq. (\ref{Gamma-A-ge3}), take  $q=0$ and $r=0$: we obtain $h^{n,k}_{[0]}=0h^{n-1,k-1}_{[0]}$.
Similarly, by setting $q=p-1$ and $r=p^{n-1}-1$,
we obtain $h^{n,k}_{[(p-1)p^{n-1}+p^{n-1}-1]}=\theta^{p^{n-1}+p-2}(0) h^{n-1,k-1}_{[p^{n-1}-1]}$,
thus $h^{n,k}_{[p^n-1]}=\theta^{-2}(0)h^{n-1,k-1}_{[p^{n-1}-1]}$.
Since $ h^{n-1,k-1}$  is a $\sigma_{k-1}$-Gray cycle over $A^{n-1}$, we have $ h^{n-1,k-1}_{[0]}\in\sigma_{k-1}\left( h^{n-1,k-1}_{[p^{n-1}-1]}\right)$.
In addition, it follows from $p\ge 3$, that $\theta^{-2}(0)\neq 0$, thus $0\in\sigma_1\left(\theta^{-2}(0)\right)$. 
We obtain
$ h^{n,k}_{[0]}
\in\sigma_k\left(\theta^{-2}(0)h^{n-1,k-1}_{[p^{n-1}-1]}\right)$, thus  $h^{n,k}_{[0]}\in\sigma_k\left( h^{n,k}_{[p^{n}-1]}\right)$ that is, the required property.

\smallbreak\noindent
(ii) Now, we prove that $ h^{n,k}$ satisfies Cond. \ref{ivb} that is, 
in its terms are pairwise different.
Let $i,i'\in [0,p^n-1]$ st. $h^{n,k}_{[i]}= h^{n,k}_{[i']}$ and consider the unique  $4$-tuple  of integers
$q,q'\in[0,p-1]$, $r,r'\in [0,p^{n-1}-1]$ st. $i=qp^{n-1}+r$ and $i'=q' p^{n-1}+r'$. 
According to  Eq. (\ref{Gamma-A-ge3}) we have $\theta^{q+r}(0)h^{n-1,k-1}_{[r]}=\theta^{q'+r'}(0)h^{n-1,k-1}_{[r']}$: since we have 
$\theta^{q+r}(0),\theta^{q'+r'}(0)\in A$, this implies $\theta^{q+r}(0)=\theta^{q'+r'}(0)$, whence we have $h^{n-1,k-1}_{[r]}=h^{n-1,k-1}_{[r']}$.
Since $h^{n-1,k-1}$ satisfies Cond.  \ref{ivc}, the second equation implies $r=r'$, whence
 the first one implies $\theta^{q}(0)=\theta^{q'}(0)$, thus  $q=q'\bmod p$. Since we have $q,q'\in [0,p-1]$ we obtain $q=q'$, thus  $i=i'$.

\smallbreak\noindent
(iii) At last, since the terms of $ h^{n,k}$ are pairwise different, we have:
$\left|\bigcup_{0\le i\le p^{n}-1}\{h^{n,k}_{[i]}\}\right|=p^{n}$,  thus: 
$\bigcup_{0\le i\le p^{n}-1}\{h^{n,k}_{[i]}\}=A^n$: this completes the proof.
\end{proof}
\section{An  alternative approach for computing the sequence $h^{n,k}$}
\label{A>3bis}
According to Eq. (\ref{Gamma-A-ge3}), given an integer pair $n,k$, the sequence $h^{n,k}$ can be computed by starting with $h^{n-1,k-1}$.
Clearly, in view of the full computation of  $h^{n,k}$, that type of approach leads to a recursive algorithm.
In the present section  we will provide some alternative method: actually  we will prove that,  for each $i\in [0,p^{n}-1]$, the word $h^{n,k}_{[i]}$ can be directly computed by starting with $h^{n,k}_{[i-1]}$.
Beforehand we  introduce  some complementary definitions and notations:

\smallbreak
-  Given an integer pair $a,b\in {\mathbb N}$, with $a\le b$, and a finite word sequence   $x=\left(x_{[i]}\right)_{a\le i\le b}$, for convenience we set  $x=x_{[a\cdot\cdot b]}$.

\smallbreak
- With the preceding notation, given a positive integer  $c$, we say that the sequence $x$ is $c$-{\it periodic}
if either we have $|x|=b-a+1\le c$, or the equation $x_{[i+c]}=x_{[i]}$ holds for every $i\in [a,b-c]$. 
In particular, wrt. finite sequence concatenation,  $x=y^q$ with $q>0$ implies $x$ being $|y|$-periodic.

\smallbreak
 - Given a non-negative integer $m$ with  $m\le \max\{|x_i|: i\in [a,b]\}$,
we set $A^{-m}x= \left(A^{-m}x_{[i]}\right)_{a\le i\le b}$.

\smallbreak
- At last, given a positive integer $m$, and given $w\in A^*$, with $|w|\ge m$, it is convenient to denote by ${\rm P}_m(w)$ the unique prefix of $w$ that belongs to $A^m$:
this notation can be extended in a straightforward way to any sequence of words in $A^mA^*$.
\subsection{A property involving periodicity}
Recall that we set $n_0=n-k+1$. With this notation the following prop. holds:
\begin{lem}
\label{h-power}
For every $j\in [n_0,n]$ the sequence $A^{j-n}h^{n,k}$ is  $p^{j}$-periodic. 
More precisely $A^{j-n}h^{n,k}$ is  a concatenation power of $h^{j,k-n+j}_{[0\cdot\cdot p^{j}-1]}$.
\end{lem}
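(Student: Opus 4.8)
The plan is to prove the stronger ``more precisely'' statement by induction on $k$ (equivalently, on $n-n_0=k-1$), since it immediately yields the $p^{j}$-periodicity: a concatenation power $y^{q}$ of a sequence $y$ of length $p^{j}$ is $p^{j}$-periodic by the definition recalled just above. Concretely, I would establish that for every $j\in[n_0,n]$,
$$A^{j-n}h^{n,k}=\left(h^{j,\,k-n+j}_{[0\cdot\cdot p^{j}-1]}\right)^{p^{n-j}},$$
a sequence of length $p^{j}\cdot p^{n-j}=p^{n}=\left|h^{n,k}\right|$, as it must be. The constraint $j\ge n_0$ is exactly what guarantees $k-n+j\ge 1$ and $j\ge n-k+1$, so that every subsequence written below is well defined.

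The engine of the induction is a single commutation identity read off from Eq.~(\ref{Gamma-A-ge3}). Stripping the leading character of $h^{n,k}_{[qp^{n-1}+r]}=\theta^{q+r}(0)\,h^{n-1,k-1}_{[r]}$ leaves exactly $h^{n-1,k-1}_{[r]}$, and since $r=i\bmod p^{n-1}$ sweeps through $[0,p^{n-1}-1]$ once for each $q\in[0,p-1]$, one obtains
$$A^{-1}h^{n,k}=\left(h^{n-1,k-1}_{[0\cdot\cdot p^{n-1}-1]}\right)^{p}.$$
I would record this first, then combine it with the observation that prefix-removal acts term-by-term and hence commutes with sequence concatenation: decomposing $A^{-(n-j)}=A^{-(n-1-j)}\circ A^{-1}$ and pushing $A^{-(n-1-j)}=A^{j-(n-1)}$ through the $p$-fold power gives
$$A^{j-n}h^{n,k}=\left(A^{j-(n-1)}h^{n-1,k-1}\right)^{p}.$$

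With this identity the induction is essentially automatic. For the base case $k=1$ we have $n_0=n$, so the only admissible index is $j=n$, where $A^{0}h^{n,1}=h^{n,1}$ is trivially the first concatenation power of itself ($p^{n}$-periodic because its length is $p^{n}$), and $k-n+j=1$ matches $h^{n,1}$. For the inductive step I would assume the claim for $h^{n-1,k-1}$ over its whole admissible range, which is $[\,(n-1)-(k-1)+1,\;n-1\,]=[n_0,n-1]$. The index $j=n$ is again the trivial self-power; for $j\in[n_0,n-1]$ I substitute the inductive hypothesis $A^{j-(n-1)}h^{n-1,k-1}=\left(h^{j,\,k-n+j}_{[0\cdot\cdot p^{j}-1]}\right)^{p^{n-1-j}}$ (using $(k-1)-(n-1)+j=k-n+j$) into the commutation identity, and the two stacked powers collapse to $\left(h^{j,\,k-n+j}_{[0\cdot\cdot p^{j}-1]}\right)^{p^{n-j}}$, the required form.

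I do not anticipate a real obstacle; the whole content is the remark that erasing the first column of $h^{n,k}$ returns $p$ consecutive copies of $h^{n-1,k-1}$. The only points needing care are bookkeeping rather than ideas: verifying that $j$ remains in the admissible range for $h^{n-1,k-1}$ (which is why the equality $(n-1)-(k-1)+1=n_0$ matters), checking the arithmetic identity $(k-1)-(n-1)+j=k-n+j$ so the second parameter of the inner subsequence is consistent, and justifying that $A^{-m}$ distributes over sequence concatenation term-wise. The $p^{j}$-periodicity then requires no separate argument, being an immediate consequence of the explicit concatenation-power form.
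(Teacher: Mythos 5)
Your proof is correct and follows essentially the same route as the paper: both rest on the single stripping identity $A^{-1}h^{m,\ell}=\left(h^{m-1,\ell-1}_{[0\cdot\cdot p^{m-1}-1]}\right)^{p}$ read off from Eq.~(\ref{Gamma-A-ge3}), combined with the fact that prefix removal acts term-by-term and so distributes over sequence concatenation. The differences are purely organizational --- you induct on $k$ over the pairs $(n,k)$, applying the identity at the top level and recursing to $h^{n-1,k-1}$, whereas the paper fixes $(n,k)$ and performs a top-down induction on $j\in[n_0,n]$, applying the same identity at level $j$; you also make the exact exponent $p^{n-j}$ explicit where the paper leaves it as an unspecified concatenation power --- and neither changes the substance.
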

\begin{proof}
We apply a top-down induction-based argument over  $j\in [n_0,n]$. 

\smallbreak
-- The base case corresponds to $j=n$. With this condition we have  $A^{j-n}h^{n,k}=h^{n,k}=h^{j,k-n+j}_{[0\cdot\cdot p^{j}-1]}$, thus trivially the prop.  holds.

\smallbreak
-- For the induction stage, we assume that  $A^{j-n}h^{n,k}$ is a concatenation power of the sequence $h^{j,k-n+j}_{[0\cdot\cdot p^{j}-1]}$.
With this condition,  the sequence $A^{j-n-1}h^{n,k}=A^{-1}\left(A^{j-n}h^{n,k}\right)$  is a concatenation power of $A^{-1}h^{j,k-n+j}_{[0\cdot\cdot p^{j}-1]}$.
 Let $i$ be an arbitrary integer   in $[0,p^{j}-1]$ and  let $q\in [0,p-1]$, $r\in [0,p^{j-1}-1]$ be the unique integer pair st. $i=qp^{j-1}+r$.
By substituting $j$ to $n$ and  $k-n+j$ to $k$ in Eq. (\ref{Gamma-A-ge3}), we obtain
$A^{-1}h^{j,k-n+j}_{[qp^{j-1}+r]}=h^{j-1,k-n+j-1}_{[r]}$.
 As a consequence,  we have
$A^{-1}h^{j,k-n+j}_{[qp^{j-1}\cdot\cdot(q+1)p^{j-1}-1]}=h^{j-1,k-n+j-1}_{[0\cdot\cdot p^{j-1}-1]}$ therefore, wrt.  sequence concatenation, we obtain
$\prod\limits_{0\leq q\le p-1} A^{-1}h^{j,k-n+j}_{[qp^{j-1}\cdot\cdot(q+1)p^{j-1}-1]}=\left(h^{j-1,k-n+j-1}_{[0\cdot\cdot p^{j-1}-1]}\right)^p$,
thus  $A^{-1}h^{j,k-n+j}_{[0\cdot\cdot p^{j}-1]}=\left( h^{j-1,k-n+j-1}_{[0\cdot\cdot p^{j-1}-1]} \right)^p$.  
Consequently, the sequence  $A^{j-n-1}h^{n,k}$,  which is a concatenation power of $A^{-1}h^{j,k-n+j}_{[0\cdot\cdot p^{j}-1]}$,  is a concatenation power of  
$h^{j-1,k-n+j-1}_{[0\cdot\cdot p^{j-1}-1]}$. 
\end{proof}
\subsection{Some map of the combinatorial structure of $h^{n,k}$}
For every $i\in [0,p^n-1]$ 
Eq.  (\ref{Gamma-A-ge3}) 
leads to  compute $h^{n,k}_{[i]}$ by applying  a series of one-character left-concatenations. 
On the other hand, with the convention over the character positions in words   the following equation holds:
\begin{eqnarray}
\label{decompose}
h^{n,k}_{[i]}=\left(h^{n,k}_{[i]}\right)_n\left(h^{n,k}_{[i]}\right)_{n-1}\cdots\left(h^{n,k}_{[i]}\right)_{1},
~~{\rm with~~}\left(h^{n,k}_{[i]}\right)_j\in A~~ (j\in [1,n]).
\end{eqnarray}
Our aim is to prove that the word $h^{n,k}_{[i]}$ can be directly computed starting from $h^{n,k}_{[i-1]}$. 
In order to do this, we are going to highlight a combinatorial structure common to all words $h^{j,k-n+j}_{[i]}$ $(j\in [n_0,n], i\in [0, p^{j}-1])$.
In what follows, we fix the integers $n$ and $k$. According to Eq. (\ref{decompose}), we set:
\begin{eqnarray}
\label{gamma-c0}
H^{[j]}_{[i]}=\left\{
\begin{array}{l}
\label{gamma-0}
\left(h^{n,k}_{[i]}\right)_j\in A~~{\rm if}~~j\in [n_0+1,n]\\
\label{gamma_1}
h^{n_0,1}_{[i]}=\left(h^{n,k}_{[i]}\right)_{n_0}\cdots \left(h^{n,k}_{[i]}\right)_1\in A^{n_0}~~{\rm if}~~j=n_0.
\end{array}
\right.
\end{eqnarray}
Let $H$ be the matrix with components $H^{[j]}_{[i]}$.
 The row index is  $i\in [1, p^{n}-1]$, the column index being $j\in [n_0+1,n]$. 
We emphasize on the fact that the row of index $i$ is:
\begin{eqnarray}
\label{the row H-i}
\left( H^{[n]}_{[i]}, H^{[n-1]}_{[i]},\cdots,H^{[j+1]}_{[i]}, H^{[j]}_{[i]},\cdots, H^{[n_0+1]}_{[i]},H^{[n_0]}_{[i]}   \right)=\nonumber\\
\left(\left(h^{n,k}_{[i]}\right)_n,\left(h^{n,k}_{[i]}\right)_{n-1},\cdots, \left(h^{n,k}_{[i]}\right)_{n_0+1},
\left(h^{n,k}_{[i]}\right)_{n_0}\cdots \left(h^{n,k}_{[i]}\right)_1\right).
\end{eqnarray}
Although at first glance the preceding notation may seem cumbersome, there are several reasons why we have adopted it:

\smallbreak
-- Firstly, the  notation is naturally connected to the computation process generated Eq.  (\ref{Gamma-A-ge3}).
In particular, it is  coherent with the right to left computation of $h^{n_0,1}$ using Algorithm (b) \cite[p.6]{K05}. 

\smallbreak
-- Secondly, with reference to the origins of Gray code topic, the word $h^{n,k}_{[i]}=H^{[n]}_{[i]}\cdots  H^{[n_0]}_{[i]}$ (or, equivalently, the polynomial in (\ref{the row H-i})) is actually the representation  in base $p=|A|$ of some non-negative  integer.

\smallbreak
-- Last and not least, regarding our experience concerning the present study, in  adopting  some alternative non-reversal representation, the results that we state below,  and their  proofs, would become much more difficult to read.
In particular  Eq.  (\ref{Gamma-A-ge3}), would be applied to more heavy indices, moreover the period of the columns in $H$ could not  take an  expression as simple as $p^j$  (see Lemma \ref{H-period}).

\medbreak
We introduce an additional notation: 
given $i\in[0,p^n-1]$ and $j\in [n_0,n]$,  we denote by $r(i,j)$ the unique integer in $[0,p^{j}-1]$ st. $i=r(i,j)\bmod p^j$. 
In particular, it follows from $i\in [0,p^n-1]$ that $r(i,n)=i$. 
As a consequence of Lemma \ref{h-power}, we obtain the following statement:
\begin{lem}
\label{H-period}
With the preceding notation, each of the following props. holds:

{\rm (i)}  For each $j\in [n_0,n]$ the sequence of words $H^{[j]}_{[0\cdot\cdot p^n-1]}$  is $p^{j}$-periodic.



{\rm (ii)}  For each $j\in [n_0+1,n]$, we have $H^{[j]}_{[i]}={\rm P}_1\left(h^{j,k-n+j}_{[r(i,j)]}\right)$. 
\end{lem}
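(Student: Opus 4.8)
The plan is to derive both statements directly from Lemma~\ref{h-power}, which already provides the periodicity of the suffix sequences $A^{j-n}h^{n,k}$; the only real work is to translate the assertion ``$A^{j-n}h^{n,k}$ is a concatenation power of $h^{j,k-n+j}_{[0\cdot\cdot p^{j}-1]}$'' into a closed formula for its individual terms, and then to read off the single character $H^{[j]}_{[i]}$ from that term. First I would record a consequence of the convention $w=w_n\cdots w_1$: taking $m=n-j$, the word $A^{j-n}h^{n,k}_{[i]}$ is exactly the length-$j$ suffix $(h^{n,k}_{[i]})_j\cdots(h^{n,k}_{[i]})_1$ of $h^{n,k}_{[i]}$. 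In particular, for $j\in[n_0+1,n]$ its leading character is $(h^{n,k}_{[i]})_j=H^{[j]}_{[i]}$, so that $H^{[j]}_{[i]}={\rm P}_1\!\left(A^{j-n}h^{n,k}_{[i]}\right)$; and for $j=n_0$ the entire suffix is $H^{[n_0]}_{[i]}$, i.e. $H^{[n_0]}_{[i]}=A^{n_0-n}h^{n,k}_{[i]}$.

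The key step is to extract a single term of the concatenation power. Since $h^{n,k}$ has length $p^n$ and, by Lemma~\ref{h-power}, $A^{j-n}h^{n,k}$ equals $\bigl(h^{j,k-n+j}_{[0\cdot\cdot p^{j}-1]}\bigr)^{p^{n-j}}$, the term of index $i\in[0,p^n-1]$ of this power is the term of index $i\bmod p^{j}=r(i,j)$ of the base block, namely
\begin{eqnarray}
A^{j-n}h^{n,k}_{[i]}=h^{j,k-n+j}_{[r(i,j)]}.\nonumber
\end{eqnarray}
Combining this with the identity $H^{[j]}_{[i]}={\rm P}_1\!\left(A^{j-n}h^{n,k}_{[i]}\right)$ from the previous paragraph yields $H^{[j]}_{[i]}={\rm P}_1\!\left(h^{j,k-n+j}_{[r(i,j)]}\right)$ for every $j\in[n_0+1,n]$, which is precisely statement (ii).

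For statement (i), periodicity then comes for free from the dependence on $r(i,j)$. For $j\in[n_0+1,n]$, part (ii) shows that $H^{[j]}_{[i]}$ depends on $i$ only through $r(i,j)=i\bmod p^{j}$, so $H^{[j]}_{[i+p^{j}]}=H^{[j]}_{[i]}$ whenever $i+p^{j}\le p^n-1$; that is, the sequence $H^{[j]}_{[0\cdot\cdot p^n-1]}$ is $p^{j}$-periodic. The boundary index $j=n_0$ is handled by the same extraction applied to the full suffix rather than to its first character: using $k-n+n_0=1$, the displayed identity gives $H^{[n_0]}_{[i]}=A^{n_0-n}h^{n,k}_{[i]}=h^{n_0,1}_{[r(i,n_0)]}$, which again depends only on $i\bmod p^{n_0}$ and is therefore $p^{n_0}$-periodic.

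I expect no deep obstacle: once Lemma~\ref{h-power} is available the argument is essentially bookkeeping. The two points requiring care are purely notational. First, converting the ``concatenation power'' statement into the clean term-indexing identity $A^{j-n}h^{n,k}_{[i]}=h^{j,k-n+j}_{[r(i,j)]}$, which rests on the divisibility $p^{j}\mid p^n$ together with the definition of $r(i,j)$. Second, keeping the case $j=n_0$ separate from $j\in[n_0+1,n]$, since there $H^{[n_0]}_{[i]}$ stores an entire length-$n_0$ word rather than a single character, so that ${\rm P}_1$ must be replaced by the identity on the whole suffix.
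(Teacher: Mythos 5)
Your proposal is correct and follows essentially the same route as the paper: both proofs reduce everything to Lemma~\ref{h-power}, identify $H^{[j]}_{[i]}$ with ${\rm P}_1\left(A^{j-n}h^{n,k}_{[i]}\right)$ (resp.\ with the full suffix $A^{n_0-n}h^{n,k}_{[i]}$ when $j=n_0$), and extract the term of a concatenation power via $A^{j-n}h^{n,k}_{[i]}=h^{j,k-n+j}_{[r(i,j)]}$. The only difference is organizational --- you establish the extraction identity (ii) first and deduce the $p^{j}$-periodicity (i) from the dependence on $r(i,j)$ alone, while the paper derives (i) directly from the periodicity statement of Lemma~\ref{h-power} before proving (ii) --- which changes nothing of substance.
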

\begin{proof}
(i) Firstly  assume  $j\in [n_0+1,n]$. According to Eq. (\ref{gamma-c0}),
in the matrix $H$ the component $H^{[j]}_{[i]}=\left(h^{n,k}_{[i]}\right)_j\in A$ is the initial character of the word $\left([H^{[n]}_{[i]}H^{[n-1]}_{[i]}\cdots  H^{[j+1]}_{[i]} \right)^{-1}h^{n,k}_{[i]}$, thus with the notation introduced above:
$H^{[j]}_{[i]}={\rm P}_1\left( A^{j-n}h^{n,k}_{[i]}\right)$.  According to Lemma \ref{h-power} the sequence $A^{j-n}h^{n,k}$ is  $p^{j}$-periodic, therefore $H^{[j]}_{[0\cdot\cdot p^n-1]}$ itself  is $p^{j}$-periodic.

\noindent
Now, we assume  $j=n_0$.  Once more according to Eq. (\ref{gamma-c0}),
we have:

  $H^{[n_0]}_{[i]}=h^{n_0,1}_{[i]}=\left(H^{[n]}_{[i]}H^{[n-1]}_{[i]}\cdots    H^{[n_0+1]}_{[i]}\right)^{-1}h^{n,k}_{[i]}\in A^{n-n_0}h^{n,k}_{[i]}$.

\noindent
Once more according to Lemma \ref{h-power}, the sequence $H^{[n_0]}_{[0\cdot\cdot p^n-1]}$ is $p^j$-periodic.
Consequently,  for each $j\in [n_0,n]$ the sequence of words $H^{[j]}_{[0\cdot\cdot p^n-1]}$  is $p^{j}$-periodic.

\smallbreak\noindent
(ii) Let $j\in [n_0+1,n_0]$. According to Lemma \ref{h-power}, the sequence  $A^{j-n}h^{n,k}$ is a concatenation power of
$h^{j,k-n+j}_{[0\cdot\cdot p^{j}-1]}$, hence 
 we have $A^{j-n}h^{n,k}_{[i]}=h^{j,k-n-j}_{[r(i,j)]}$ that is, 
by construction:
 $H^{[j]}_{[i]}={\rm P}_1\left(A^{j-n}h^{n,k}_{[i]}\right)={\rm P}_1\left(h^{j,k-n-j}_{[r(i,j)]}\right)$.
\end{proof}
\noindent
In addition  the following prop. holds:
\begin{lem}
\label{r(i,j)-i}
The condition $r(i,j)=0\bmod p^{j-1}$ is equivalent to $i=0\bmod p^{j-1}$. 
\end{lem}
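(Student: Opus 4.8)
The plan is to reduce the equivalence to the single elementary fact that $p^{j-1}$ divides $p^{j}$, so essentially all the work lies in unfolding the definition of $r(i,j)$. Recall that $r(i,j)$ is the unique element of $[0,p^{j}-1]$ with $i = r(i,j)\bmod p^{j}$; by definition of reduction modulo $p^{j}$, this says precisely that $i-r(i,j)$ is an integer multiple of $p^{j}$. First I would record this explicitly as $i-r(i,j)=mp^{j}$ for some $m\in\mathbb{N}$ (indeed $m=\lfloor i/p^{j}\rfloor\ge 0$, since $i\ge 0$ and $r(i,j)\le i$).

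Next, writing $p^{j}=p\cdot p^{j-1}$ shows that $i-r(i,j)=mp\,p^{j-1}$ is also a multiple of $p^{j-1}$; equivalently $i = r(i,j)\bmod p^{j-1}$. In words, reducing modulo the smaller power $p^{j-1}$ cannot distinguish $i$ from the residue $r(i,j)$ it already has modulo the larger power $p^{j}$. This single congruence carries the whole statement.

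From it the claimed equivalence is immediate by transitivity of divisibility: $i=0\bmod p^{j-1}$ holds iff $p^{j-1}$ divides $i$, iff $p^{j-1}$ divides $r(i,j)$ (since $i$ and $r(i,j)$ differ by a multiple of $p^{j-1}$), iff $r(i,j)=0\bmod p^{j-1}$. The only point requiring a word of care is that $p^{j-1}$ be a genuine positive modulus, i.e. that $j\ge 1$; this is guaranteed throughout, as $j\ge n_0=n-k+1\ge 1$ under the standing hypothesis $n\ge k$. Beyond this bookkeeping there is no real obstacle: the statement is pure modular arithmetic, so I expect the proof to run to just a couple of lines.
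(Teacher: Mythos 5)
Your proof is correct and follows essentially the same route as the paper: both arguments rest on the observation that the defining congruence $i=r(i,j)\bmod p^{j}$ implies $i=r(i,j)\bmod p^{j-1}$, from which the equivalence follows. Your version merely packages the two directions symmetrically via transitivity (the paper handles the converse by explicitly writing $i=mp^{j-1}$ and $i=r(i,j)+m'p^{j}$ to get $r(i,j)=(m-m'p)p^{j-1}$), a cosmetic rather than substantive difference.
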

\begin{proof} 
The cond. $i=r(i,j)\bmod p^j$ implies $i=r(i,j)\bmod p^{j-1}$. Consequently, $r(i,j)=0\bmod p^{j-1}$ implies $i=0\bmod p^{j-1}$.
Conversely, assume $i=mp^{j-1}$, with $m\in {\mathbb N}$. From the fact that $i=r(i,j)+m'p^{j}$, with $m'\in {\mathbb N}$, we obtain 
$r(i,j)=(m-m'p)p^{j-1}$, thus $r(i,j)=0\bmod p^{j-1}$.
\end{proof}
\subsection{An algorithmic interpretation}
On the basis of the above, an iteration-based method for computing the word $h^{n,k}_{[i]}$ can be drawn. 

-- We start by setting $H^{[n_0]}_{[0]}=h^{n_0,1}_{[0]}=0^{n_0}$.

-- Next,  for $i\in [0, p^{n_0}-1]$, according to Formula (\ref{gamma-c0}), the component $H^{[n_0]}_{[i]}$ is actually $h^{n_0,1}_{[i]}$, which can be generated starting from $h^{n_0,1}_{[i-1]}$ by construction.
 
-- In addition, for every $i\in [0,p^n-1]$, according to Lemma \ref{H-period},
 we have $H^{[n_0]}_{[i]}=h^{n_0,1}_{[r(i,n_0)]}$.
In other words,   wrt.  sequence concatenation, the column of index $n_0$ in the matrix $H$ is obtained by applying the following equation:
\begin{eqnarray}
\label{H-n-0}
H^{[n_0]}=\left(h^{n_0,1}\right)^{p^{n-n_0}}.
\end{eqnarray}
In what follows we introduce a permutation, namely
 $\omega$:

$\omega$ operates onto the  set $\left\{H^{[n_0]}_{[0]}, H^{[n_0]}_{[1]},\cdots,H^{[n_0]}_{[p^{n_0}-1]}\right\}$ as indicated in the following
\begin{eqnarray}
\label{CCD}
\omega\left(H^{[n_0]}_{[i]}\right)=\left\{
\begin{array}{l}
H^{[n_0]}_{[i+1]}~~{\rm if}~~i\in [0,p^{n_0}-2]\\
H^{[n_0]}_{[0]} ~~{\rm if}~~i=p^{n_0-1}.
\end{array}
\right.
\end{eqnarray}
Accordingly, for every $i\in [1,p^{n_0}-1]$ we have $H^{[n_0]}_{[i]}=h^{n_0,1}_{[i]}$. Note that, thanks to Algorithm (b) (see the preliminaries), $h^{n_0,1}_{[i]}$ can be generated from $h^{n_0,1}_{[i-1]}$ that is,
$H^{[n_0]}_{[i]}$ can be generated from $H^{[n_0]}_{[i-1]}$ . 
The following result is at the basis of an iterative algorithm for computing the whole matrix $H$
(recall that $\theta$ stands for the cyclic permutation $\left(0,\cdots,p-1\right)$):
\begin{proposit}\label{H}
Given $i\in [1,p^{n}-1]$, each of the following  equations hold:

{\rm (i)} $H^{[n_0]}_{[i]}=\omega\left(H^{[n_0]}_{[i-1]}\right)$.

{\rm (ii)} For every $j\in [n_0+1,n]$:
\begin{eqnarray}
 H^{[j]}_{[i]}=\left\{
\begin{array}{l}
\theta^2\left(H^{[j]}_{[i-1]}\right)~~{\rm if}~~i=0 \bmod p^{j-1}\\
\\
\theta\left(H^{[j]}_{[i-1]}\right)~~{\rm otherwise.}
\end{array}
\right.
\nonumber
\end{eqnarray}
\end{proposit}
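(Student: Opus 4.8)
The plan is to read off both equations from the combinatorial description of the matrix $H$ that has already been assembled, namely the column periodicity of Lemma \ref{H-period} together with the closed form for the leading letter coming from Eq. (\ref{Gamma-A-ge3}). Everything is computed modulo $p$, using $\theta^p=id_A$ and, crucially, the fact that $p\mid p^{j-1}$ for every $j\ge n_0+1\ge 2$ (the last inequality because $n_0=n-k+1\ge 1$). Part (i) is an easy warm-up; part (ii) carries the substance.

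For part (i) I would recall from Lemma \ref{H-period}(i), equivalently from Eq. (\ref{H-n-0}), that $H^{[n_0]}_{[0\cdot\cdot p^n-1]}$ is $p^{n_0}$-periodic, so that $H^{[n_0]}_{[i]}=h^{n_0,1}_{[r(i,n_0)]}$ with $r(i,n_0)=i\bmod p^{n_0}$. Writing $s=r(i-1,n_0)$, a one-line split finishes: if $i\ne 0\bmod p^{n_0}$ then $r(i,n_0)=s+1$ with $s\le p^{n_0}-2$, whereas if $i=0\bmod p^{n_0}$ then $s=p^{n_0}-1$ and $r(i,n_0)=0$. In either case the definition of $\omega$ in Eq. (\ref{CCD}) gives $H^{[n_0]}_{[i]}=\omega\left(H^{[n_0]}_{[i-1]}\right)$.

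For part (ii) the starting point is Lemma \ref{H-period}(ii), which yields $H^{[j]}_{[i]}={\rm P}_1\left(h^{j,k-n+j}_{[r(i,j)]}\right)$. Applying Eq. (\ref{Gamma-A-ge3}) with $j$ in place of $n$ and $k-n+j$ in place of $k$, and decomposing $r(i,j)=qp^{j-1}+r'$ with $q\in[0,p-1]$, $r'\in[0,p^{j-1}-1]$, I would read off the leading letter as $H^{[j]}_{[i]}=\theta^{q+r'}(0)$; set $e(i)=q+r'\bmod p$ for this exponent. Since $r(i-1,j)=(r(i,j)-1)\bmod p^j$, comparing $e(i)$ with $e(i-1)$ amounts to comparing the pair $(q,r')$ attached to $r(i,j)$ with that attached to $r(i,j)-1$. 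When $r(i,j)\ne 0\bmod p^{j-1}$, passing forward raises $r'$ by one and leaves $q$ fixed, so $e(i)-e(i-1)=1$ and $H^{[j]}_{[i]}=\theta\left(H^{[j]}_{[i-1]}\right)$. When $r(i,j)=0\bmod p^{j-1}$, the low part $r'$ moves from $p^{j-1}-1$ to $0$, while $q$ either rises by one (an ordinary block boundary) or, in the cyclic wrap $r(i,j)=0$, moves from $p-1$ to $0$; using $p^{j-1}\equiv 0$ and $p\equiv 0$ modulo $p$, each of the two contributions is $\equiv 1$, whence $e(i)-e(i-1)\equiv 2$ and $H^{[j]}_{[i]}=\theta^2\left(H^{[j]}_{[i-1]}\right)$. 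A final appeal to Lemma \ref{r(i,j)-i} rewrites the hypothesis $r(i,j)=0\bmod p^{j-1}$ as the stated condition $i=0\bmod p^{j-1}$.

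The step I expect to be the main obstacle is the bookkeeping at the block boundaries in part (ii): one must handle uniformly the passage from $C_{q-1}$ to $C_q$ and the cyclic wrap from the last term of $C_{p-1}$ to the first term of $C_0$, and check in both that the contribution of $q$ to the exponent is $\equiv 1$ modulo $p$. What makes this clean, and what I would flag explicitly, is that $p^{j-1}$ is a multiple of $p$ for every $j\ge 2$, so the ``$-2$'' that appears in the boundary exponents in the proof of Proposition \ref{Cycle-Age3-n-k} collapses and every boundary step is exactly a $\theta^2$.
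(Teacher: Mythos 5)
Your proof is correct and takes essentially the same route as the paper's: both parts rest on Lemma \ref{H-period} together with Eq. (\ref{Gamma-A-ge3}) via the decomposition $r(i,j)=qp^{j-1}+r'$, a case split on $r'=0$ versus $r'>0$, and Lemma \ref{r(i,j)-i} to restate the boundary condition as $i=0\bmod p^{j-1}$. If anything, your uniform modulo-$p$ bookkeeping with the exponent $e(i)=q+r'\bmod p$ is slightly more careful than the paper at the cyclic wrap $r(i,j)=0$, where the paper writes $r(i-1,j)=r(i,j)-1$, an identity that literally holds only modulo $p^j$.
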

\begin{proof}
(i) Firstly, assume $i\ne 0\bmod p^{n_0}$.  This condition implies $r(i,n_0)\in [1,p^{n_0}-1]$, 
hence  we have  $r(i-1,n_0)=r(i,n_0)-1\in [0,p^{n_0}-1]$. According to the  definition of $\omega$, this implies 
$H^{[n_0]}_{[r(i,n_0)]}=\omega\left(H^{[n_0]}_{[r(i-1,n_0)]}\right)$. According to the prop. (i) of Lemma \ref{H-period}, we obtain 
$H^{n_0,1}_{[i]}=H^{[n_0]}_{[r(i,n_0)]}=\omega\left(H^{[n_0]}_{[r(i-1,n_0)]}\right)=\omega\left(H^{[n_0]}_{[i-1]}\right)$.

\noindent
Now, we assume $i=0\bmod p^{n_0}$.
Once more according to the prop. (i) of Lemma \ref{H-period} we have 
$H^{[n_0]}_{[i]}=H^{[n_0]}_{[0]}$ and
$H^{[n_0]}_{[i-1]}=H^{[n_0]}_{[p^{n_0}-1]}$ (recall that, according to the condition of the present proposition we have $i\ge 1$). By the definition of $\omega$ (see (\ref{CCD}), we have
 $H^{[n_0]}_{[0]}=\omega\left(H^{[n_0]}_{[p^{n_0}-1]}\right)$, thus  $H^{[n_0]}_{[i]}=H^{[n_0]}_{[i-1]}$. 
This completes the proof of the prop. (i) of our proposition.

\medbreak
\noindent
(ii) Let $j\in [n_0+1,n]$, and  let $q, s$ be the unique integer pair st. $r(i,j)=qp^{j-1}+s$, with $0\le s\le  p^{j-1}-1$.
By substituting $r(i,j)$ to $i$, $j$ to $n$, and $k-n+j$ to $k$  in Eq. (\ref{Gamma-A-ge3}) ,we  obtain $h^{j,k-n+j}_{[r(i,j)]}=\theta^{q+s}(0)h^{j-1,k-n+j-1}_{[s]}$,  
whence $\theta^{q+s}(0)$ is the initial character of $h^{j,k-n+j}_{[r(i,j)]}$. According to the prop. (ii) of Lemma \ref{H-period}, we obtain
 $H^{[j]}_{[i]}=\theta^{q+s}(0)$. Now we examine the component $H^{[j]}_{[i-1]}$. 
Actually, according to the value of $s$ exactly one of the two following conds. occurs:

\medbreak
\noindent
{\underline {\it Cond. $s=0$}}\\
By definition this cond. is equivalent to  $r(i,j)=qp^{j-1}$.
It follows from $i\ge 1$ that $r(i-1,j)=r(i,j)-1=qp^{j-1}-1=(q-1)p^{j-1}+(p^{j-1}-1)$. Note that we have $p^{j-1}-1\in [0,p^{j-1}-1]$:
by substituting $r(i-1,j)$ to $i$, $j$ to $n$, and $k-n+j$ to $k$ in  Eq.  (\ref{Gamma-A-ge3}), we obtain
$h^{j,k+j-n}_{[r(i-1,j)]}=h^{j,k+j-n}_{[(q-1)p^{j-1}+(p^{j-1}-1)]}=
\theta^{(q-1)+(p^{j-1}-1)}(0)h^{j-1,k+j-n-1}_{[p^{j-1}-1]}$, whence $\theta^{q+p^{j-1}-2}(0)=\theta^{q-2}(0)$ is the initial  character of $h^{j,k-n+j}_{[r(i-1,j)]}$.
According to the prop.  (ii) of Lemma \ref{H-period}, we have  $H^{[j]}_{[i-1]}=\theta^{q-2}(0)$.
As indicated above, we have  $H^{[j]}_{[i]}=\theta^{q+s}(0)$: we obtain $H^{[j]}_{[i]}=\theta^{q}(0)=
\theta^2\left(\theta^{q-2}(0)\right)=\theta^2\left( H^{[j]}_{[i-1]}\right)$.

\smallbreak
\noindent
{\underline {\it Cond. $s>0$}}\\
With this cond. we have $r(i-1,j)=qp^{j-1}+(s-1)$, with  $0\le s-1\le p^{j-1}-1$.
Once more by substituting  in  Eq. (\ref{Gamma-A-ge3}) $r(i-1,j)$ to $i$, $j$ to $n$, and $k-n+j$ to $k$, we obtain $h^{j,k-n+j}_{r[(i-1,j)]}=\theta^{q+s-1}(0)h^{j-1,k-n+j-1}_{[s-1]}$,
whence $\theta^{q+s-1}(0)$ is the initial character of $h^{j,k-n+j}_{[r(i-1,j)]}$. 
According to the prop. (ii) of Lemma \ref{H-period}, this implies $H^{[j]}_{[i-1]}=\theta^{q+s-1}(0)$ that is, 
$H^{[j]}_{[i]}=\theta^{q+s}(0)=\theta\left(\theta^{q+s-1}(0)\right)=\theta\left(H^{[j]}_{[i-1]}\right)$.
\end{proof}
\noindent
In view of  Proposition \ref{H},  an iterative algorithm  to the sequence $h^{n,k}$, namely Algorithm 1,
can be drawn.
This algorithm computes row by row each component of the matrix $H$.
Such a computation actually makes use  of a unique generic row, namely:
  $${\cal H}=\left({\cal H}^{[n]},{\cal H}^{[n-1]},\cdots,{\cal H}^{[j]},{\cal H}^{[j-1]},\cdots, {\cal H}^{[n_0+1]},{\cal H}^{[n_0]}\right).$$
From this point of view, each time the counter $i$ is incremented, the generic row ${\cal H}$ is updated to 
$\left(H^{[n]}_{[i]},H^{[n-1]}_{[i]},\cdots,H^{[j]}_{[i]},H^{[j-1]}_{[i]},\cdots, H^{[n+0]}_{[i]},H^{[n_0]}_{[i]}\right)$.
Recall that, according to Eq. (\ref{H-n-0}) the following equalities  hold:
\begin{equation}
\label{H0}
H^{[n_0]}_{[0]}=h^{n_0,1}_{[0]}=0^{n_0},~~H^{[n_0]}_{[0\cdot\cdot p^{n_0}-1]}=h^{n_0,1},~~~~H^{[n_0]}_{[0\cdot\cdot p^{n}-1]}=\left(h^{n_0,1}\right)^{p^{n-n_0}}
\end{equation}
\begin{figure}[H]
\begin{center}
\includegraphics[width=15cm,height=8cm]{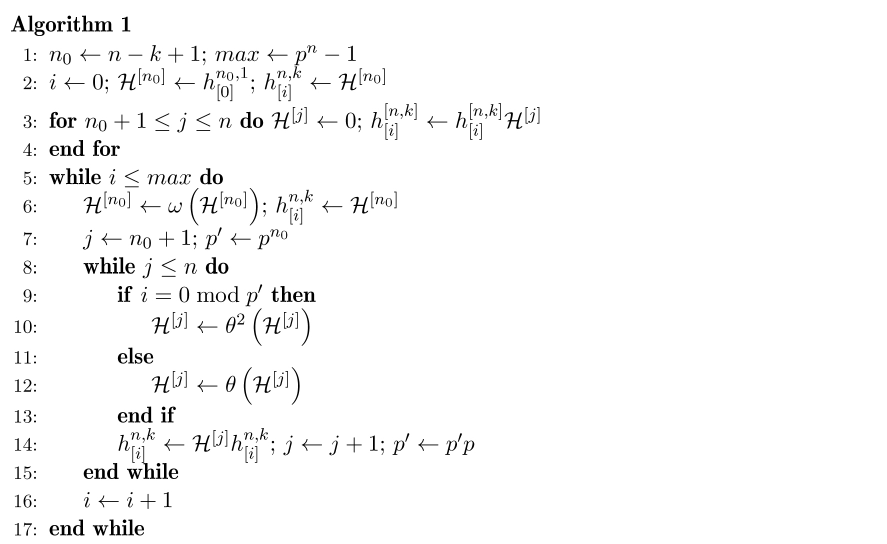}
\end{center}
\label{example-reg-subset-M}
\end{figure}
\medbreak\noindent
{\it A few comments on Algorithm 1}\\
-- The component  ${\cal H}^{[n_0]}$ is initialized to $H^{[n_0]}_{[0]}=h^{n_0,1}_{[0]}$. This is done by applying the process described in lines 2--4.

\noindent
-- Each time the counter $i$ is incremented, in the stage described in lines 6--16 Algorithm 1 computes from right to left the row  $\left(H^{[n]}_{[i]},\cdots, H^{[n_0+1]}_{[i]},H^{[n_0]}_{[i]}\right)$.
The term $h^{n,k}_{[i]}$ itself  is computed as the concatenation $H^{[n]}_{[i]}\cdots H^{[n_0+1]}_{[i]}\cdot H^{[n_0]}_{[i]}$. 
This  is done by updating the generic row ${\cal H}$: its  components are   computed by applying the formula from Proposition \ref{H}.
In addition, at line 14, after incrementation of the variable $j$,  $p^{j-1}$ is memorized in the variable $p'$, which took the initial value $p^{n_0}$ (line 7).

\noindent
-- The algorithm  stops when  the counter $i$ reaches the value {\it max}+1$=p^n$ (see l. 6).

\medbreak\noindent
{\it Questions related to complexity}\\
By construction, applying Algorithm 1 for computing the whole sequence $h^{n,k}$ that is, computing all the rows of the matrix $H$, requires at most $np^n$ insertions. 
In what follows, our goal is to improve such a bound.

Beforehand we note that, in any case, the alphabet $A$ and the permutation $\theta$ should be computed in some preprocessing stage. 

\smallbreak
-- On the one hand, for each incrementation of the counter $i$, updating  the generic sequence ${\cal H}=\left({\cal H}^{[n]},\cdots {\cal H}^{[n_0+1]}\right)$
 is done by applying the stage in lines 8--15:  there is a positive integer, say $\ell$,
st. applying that stage requires at most $\ell(n-n_0)\le \ell k$ insertions. 
When the counter $i$ reaches the value $max$,  ${\cal H}$ has been updated $p^{n}$ times, whence the total amount of corresponding operations  is at most $p^n\ell k$.

\smallbreak
-- On the other hand, in order to update the component  ${\cal H}^{[n_0]}$,  in l. 6. we need to apply the permutation $\omega$.
From this point of view,  there are actually two  strategies of implementation: 

\smallbreak
(a)  In the first approach,  for each value of $i$, in order to  compute the finite sequence $h^{n_0,1}=\left(H^{[n_0]}_{[i]}\right)_{0\le i\le p^{n_0-1}}$ 
 we apply the step (\ref{CCh}) from Algorithm (b), as  mentioned in the preliminaries.
 Actually, the right-most character flips each time, the second one flips every $p$ time, and so on:
classically,  that method
requires an amount of $p^{n_0}+p^{n_0-1}+\cdots +p\le p^{n_0}$ one-character substitutions.
For computing the whole column $H^{[n_0]}=\left(H^{[n_0]}_{[i]}\right)_{0\le i\le p^{n-1}}$, the total cost of the preceding operations  is bounded by $p^{n-n_0}\cdot p^{n_0}=p^n$.
Consequently, computing the whole sequence $h^{n,k}$ requires a total amount of operations bounded by $p^n\ell k+p^n=p^n(\ell k+1)$.
Note that in the computation of each term of  $h^{n_0,1}$ the amount of substitutions actually depends of the  value of the counter $i$ that is, the process cannot be loopless.
However, the amortized cost per operation
is $p^{-n}p^n\left(\ell k+1\right)=\ell k+1$ that is, $O(1)$. In other words,  with this strategy  Algorithm 1 runs in constant amortized-time wrt. $n$, with space linear in $n-n_0+n_0=n$.

\smallbreak
(b) The second approach consists in implementing in a preprocessing phase the sequence $h^{n_0,1}$ and the permutation $\omega$:
such an implementation requires  space $O(n_0p^{n_0})$ and, as indicated above,  a total amount of $O(p^{n_0})$ substitutions.
After that, in the processing phase, updating ${\cal H}^{n_0}$ will be performed by applying the result of lemma \ref{H-period}: this leads to a constant number of requests to $h^{n_0,1}$ and $\omega$ (see l. 6).
In other words,  updating the whole sequence ${\cal H}$ requires  constant time that is, 
with such a strategy  implementation,  Algorithm 1 is loopless and  requires space linear in $n_0p^{n_0}+n$.
%
\section{The case where $A$ is a binary alphabet, with $k$ odd}
\label{S4}
Let $A=\{0,1\}$ and $n\ge k$. With this condition, the cyclic permutation  $\theta$, which was introduced in Section \ref{Prelim}, is defined by $\theta(0)=1$ and $\theta(1)=0$.
Classically, this permutation can be extended into a one-to-one monoid homomorphism onto $A^*$:
in view of this, we set $\theta(\varepsilon)=\varepsilon$ and,  for any non-empty $n$-tuple of characters  $a_1,\cdots, a_n\in A$, $\theta(a_1\cdots a_n)=\theta(a_1)\cdots\theta(a_n)$.
Trivially, in the case where we have $n=k$, if a non-empty $\sigma_k$-Gray code exists over $X\subseteq A^n$, then  we have $X=\{x,\theta(x)\}$, for some $x\in A^n$.
In the sequel we assume $n\ge k+1$; with this condition  we will construct a pair of  peculiar $\sigma_k$-Gray cycles over $A^n$, namely $\gamma^{n,k}$ and $\rho^{n,k}$.
This will be done by induction over $k'$, the unique non-negative integer st.  $k=2k'+1$. Beforehand, we set $n_0=n-2k'=n-k+1$:
necessarily we have $n_0\ge 2$.

\smallbreak
-- For the base case, $\gamma^{n_0,1}$ and $\rho^{n_0,1}$ are computed by applying  some
reversal (resp., shift) over the sequence $g^{n_0,1}$, which was introduced  in  the cond. (\ref{CCg}) grom  Section \ref{Prelim}. We set : 
\begin{eqnarray}
\label{gamma-init}
\gamma^{n_0,1}_{[0]}=g^{n_0,1}_{[0]}~~{\rm and}~~
\gamma^{n_0,1}_{[i]}=g^{n_0,1}_{[2^{n_0}-i]}~~(1\le i\le 2^{n_0}-1).\\
\label{gamma-init1}
\rho^{n_0,1}_{[0]}=g^{n_0,1}_{[2^{n_0}-1]}~~{\rm and}~~
\rho^{n_0,1}_{[i]}=g^{n_0,1}_{[i-1]}~~(1\le i\le 2^{n_0}-1).
\end{eqnarray}
By construction,  $\gamma^{n_0,1}$ and $\rho^{n_0,1}$ are $\sigma_1$-Gray cycles over $A^{n_0}$. Moreover we have:
\begin{eqnarray}
\label{termes-initiaux}
\gamma^{n_0,1}_{[0]}=0^{n_0}~~{\rm and}~~\rho^{n_0,1}_{[0]}=10^{n_0-1}.\\
\label{termes-initiaux1}
\gamma^{n_0,1}_{[2^{n_0}-1]}=g^{n_0,1}_{[1]}=0^{n_0-1}1~~{\rm and}~~\rho^{n_0,1}_{[2^{n_0}-1]}=g^{n_0,1}_{[2^{n_0}-2]}=10^{n-2}1.
\end{eqnarray}
\begin{example}
\label{Exemple2}
Taking $n_0=2$, according to Eqs. \ref{gamma-init1}, the corresponding sequences $\gamma^{2,1}$ and $\rho^{n_0,1}$ are the following ones:
$$
\begin{array} {c}
g^{2,1}\\ \overbrace{~~}\\ 00\\ 01\\ 11\\ 10
\end{array}
~~~~~~~~~~~~~~~~~
\begin{array} {c}
\gamma^{2,1}\\ \overbrace{~~}\\ 00\\ 10\\ 11\\ 01
\end{array}
~~~~~~~~~~~~~~~~~
\begin{array} {c}
\rho^{2,1}\\ \overbrace{~~}\\ 10\\ 00\\ 01\\ 11
\end{array}
$$
\end{example}
\begin{example} 
\label{Ex3} 
For $n_0=3$ we obtain the following sequences:
$$
\begin{array} {c}
g^{3,1}\\ \overbrace{~~}\\ 000\\ 001\\ 011\\ 010\\ 110\\ 111\\ 101\\ 100
\end{array}
~~~~~~~~~~~~~~~~~
\begin{array} {c}
\gamma^{3,1}\\ \overbrace{~~}\\ 000\\ 100\\ 101\\ 111\\ 110\\ 010\\ 011\\ 001
\end{array}
~~~~~~~~~~~~~~~~~
\begin{array} {c}
\rho^{3,1}\\ \overbrace{~~}\\ 100\\ 000\\ 001\\ 011\\ 010\\ 110\\ 111\\ 101
\end{array}
$$
\end{example}

\smallbreak
-- In view of the induction step, we assume that we have computed  the $\sigma_k$-Gray cycles $\gamma^{n,k}$ and $\rho^{n,k}$. 
Note that we have $n+2=n_0+2(k'+1)=n_0+(k+2)-1$:
below we explain  the construction of  the two corresponding $2^{n+2}$-term sequences $\gamma^{n+2,k+2}$ and $\rho^{n+2,k+2}$. 
Given $i\in [0,2^{n+2}-1]$, let $q\in [0,3]$, $r\in [0,2^{n}-1]$ be the unique integer pair st. $i=q2^n+r$.
By assigning to $q$ the value  $0$ (resp., $1$, $2$, $3$), we state the corresponding Eq. (\ref{6}) (resp.,  Eqs. (\ref{7}),(\ref{8}),(\ref{9})):

\begin{subeqnarray}
\gamma^{n+2,k+2}_{[r]}=\theta^{r}(00)\gamma^{n,k}_{[r]}\label{6}\\
\gamma^{n+2,k+2}_{[2^n+r]}=\theta^{r}(01)\rho^{n,k}_{[r]} \label{7}\\
\gamma^{n+2,k+2}_{[2\cdot 2^n+r]}=\theta^{r}(11)\gamma^{n,k}_{[r]} \label{8}\\
\gamma^{n+2,k+2}_{[3\cdot 2^n+r]}=\theta^{r}(10)\rho^{n,k}_{[r]} \label{9}
\end{subeqnarray}
Similarly the sequence $\rho^{n+2,k+2}$ is computed by substituting,  in the preceding  Eqs., the $4$-tuple 
$(10,11,01,00)$ to $(00,01,11,10)$:

\begin{subeqnarray}
\rho^{n+2,k+2}_{[r]}=\theta^{r}(10)\gamma^{n,k}_{[r]} \label{10}\\
\label{H2} \rho^{n+2,k+2}_{[2^n+r]}=\theta^{r}(11)\rho^{n,k}_{[r]} \label{11}\\
\label{H3} \rho^{n+2,k+2}_{[2\cdot 2^n+r]}=\theta^{r}(01)\gamma^{n,k}_{[r]} \label{12}\\
\label{H4} \rho^{n+2,k+2}_{[3\cdot 2^n+r]}=\theta^{r}(00)\rho^{n,k}_{[r]}\cdot\label{13}
\end{subeqnarray}

\begin{example}
(Example \ref{Ex3} continued)
$\gamma^{5,3}$  is the concatenation,  in this order, of  the $4$  following  subsequences:
$$
\begin{array} {c}
~~~~~{\scriptstyle\gamma^{3,1}}\\ ~~~~\overbrace{}\\
 00~000\\ 11~100\\ 00~101\\ 11~111\\ 00~110\\ 11~010\\00~011\\ 11~001
\end{array}
~~~~~~~~~~~~~~~~~
\begin{array} {c}
~~~~~~~{\scriptstyle\rho^{3,1}}\\ ~~~~\overbrace{~~~~~ }\\
01~100\\ 10~000\\ 01~001\\ 10~011\\01~010\\ 10~110\\01~111\\ 10~101
\end{array}
~~~~~~~~~~~~~~~~~
\begin{array} {c}
~~~~~~~{\scriptstyle\gamma^{3,1}}\\ ~~~~\overbrace{~~~~~ }\\
11~000\\ 00~100\\ 11~101\\ 00~111\\ 11~110\\ 00~010\\11~011\\ 00~001
\end{array}
~~~~~~~~~~~~~~~~~
\begin{array} {c}
~~~~~~~{\scriptstyle\rho^{3,1}}\\ ~~~~\overbrace{~~~~~ }\\10~100\\01~000\\ 10~001\\ 01~011\\10~010\\ 01~110\\10~111\\ 01~101
\end{array}
$$
\end{example}
\begin{lem}
\label{pairwise-different}
Both the sequences $\gamma^{n,k}$,
$\rho^{n,k}$ satisfy each of the conditions \ref{iva}, \ref{ivc}.
\end{lem}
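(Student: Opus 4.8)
The plan is to argue by induction on $k'$, where $k=2k'+1$, following the very recursion (\ref{gamma-init})--(\ref{gamma-init1}) and (\ref{6})--(\ref{13}) that defines the two sequences. Before starting, I would reduce condition \ref{iva} to condition \ref{ivc}: each of $\gamma^{n,k}$ and $\rho^{n,k}$ has exactly $2^n$ terms, all of length $n$, and $|A^n|=2^n$; hence a sequence of $2^n$ \emph{pairwise distinct} words of $A^n$ automatically exhausts $A^n$. So it is enough to prove \ref{ivc}, after which \ref{iva} follows by this counting.

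For the base case $k'=0$ (so $k=1$, $n_0=n$), the sequences $\gamma^{n_0,1}$ and $\rho^{n_0,1}$ are obtained from the reflected binary Gray code $g^{n_0,1}$ merely by reindexing: the map $i\mapsto 2^{n_0}-i$ fixing $0$ in (\ref{gamma-init}), and the cyclic shift in (\ref{gamma-init1}). Both are bijections of $[0,2^{n_0}-1]$ onto itself, so each new sequence is a permutation of the terms of $g^{n_0,1}$; since $g^{n_0,1}$ satisfies \ref{iva} and \ref{ivc}, so do $\gamma^{n_0,1}$ and $\rho^{n_0,1}$ (this is in fact already recorded in the text after (\ref{gamma-init1})).

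For the inductive step I would handle $\gamma^{n+2,k+2}$, the sequence $\rho^{n+2,k+2}$ being entirely symmetric via (\ref{10})--(\ref{13}). Writing each term as a $2$-letter prefix $\theta^{r}(\cdot\cdot)$ followed by a length-$n$ suffix, split the $2^{n+2}$ terms into the four blocks $C_0,\dots,C_3$ of (\ref{6})--(\ref{9}). Distinctness \emph{within} one block is immediate: the suffixes are $\gamma^{n,k}_{[r]}$ (or $\rho^{n,k}_{[r]}$), which are pairwise distinct in $r$ by the induction hypothesis. For distinctness \emph{across} blocks I would use the prefixes: since $\theta$ is the bit-flip and $\theta^{r}$ depends only on the parity of $r$, blocks $C_0,C_2$ (whose suffixes come from $\gamma^{n,k}$) carry prefixes in $\{00,11\}$, while $C_1,C_3$ (suffixes from $\rho^{n,k}$) carry prefixes in $\{01,10\}$. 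As $\{00,11\}\cap\{01,10\}=\emptyset$, no term of a $\gamma$-block can coincide with one of a $\rho$-block. The remaining case is a possible collision between $C_0$ and $C_2$ (and symmetrically $C_1,C_3$): if $\theta^{r}(00)\gamma^{n,k}_{[r]}=\theta^{r'}(11)\gamma^{n,k}_{[r']}$, then equality of suffixes forces $r=r'$ by the induction hypothesis, whereas equality of prefixes together with $11=\theta(00)$ forces $r$ and $r'$ to have opposite parity; this contradiction rules the collision out. Hence all $2^{n+2}$ terms are distinct.

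The hard part is exactly this last case. Because \emph{all four} blocks have the same suffix-set $A^n$, the suffix alone never separates two terms, so the argument must combine the parity-dependent prefix with the injectivity of $r\mapsto\gamma^{n,k}_{[r]}$; the mechanism that makes it work is that the prefix alternates with the parity of $r$ in precisely the way that contradicts the forced equality $r=r'$. Everything else --- the block bookkeeping and the symmetric treatment of $\rho^{n+2,k+2}$ --- is routine.
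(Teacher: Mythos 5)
Your proposal is correct and follows essentially the same route as the paper: induction on $k'$, reduction of Condition \ref{iva} to \ref{ivc} by counting $2^n$ distinct words in $A^n$, the base case via the permutations (\ref{gamma-init})--(\ref{gamma-init1}) of $g^{n_0,1}$, and the inductive step splitting terms into a $2$-letter prefix $\theta^r(x)$ and a length-$n$ suffix, with $\{00,11\}$ versus $\{01,10\}$ separating the $\gamma$-suffix blocks from the $\rho$-suffix blocks and the suffix injectivity forcing $r=r'$. Your parity contradiction for the $C_0$/$C_2$ (and $C_1$/$C_3$) collisions is just a concrete restatement of the paper's step that $r=r'$ makes $\theta^r$ injective on prefixes, whence $x=x'$ and $q=q'$, so the two arguments coincide.
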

\begin{proof}
 Recall that we set $k=2k'+1$: we argue by induction over  $k'\ge 0$. 
The base case corresponds to  $k'=0$ that is,  $k=1$ and $n=n_0$: 
as indicated above, $\gamma^{n_0,1}$ and $\rho^{n_0,1}$ are $\sigma_1$-Gray cycles over $A^{n}$.
In view of  the induction step we assume that, for some $k'\ge 0$, both the sequences 
$\gamma^{n,k}$ and $\rho^{n,k}$
are $\sigma_k$-Gray cycles over $A^n$.

\smallbreak\noindent
(i) In order to prove that $\gamma^{n+2,k+2}$ satisfies Cond. \ref{ivc},
we consider an integer pair $i,i'\in [0,2^{n+2}-1]$ st. 
 $\gamma^{n+2,k+2}_{[i]}=\gamma^{n+2,k+2}_{[i']}$. Let
$q,q'\in [0,3]$, $r,r'\in [0,2^n-1]$ st. $i=q2^n+r$, $i'=q'2^n+r'$.
According to  Eqs. (\ref{6})--(\ref{9}) there are words $x,x'\in A^2$, $w,w'\in A^n$ st.
$\gamma^{n+2,k+2}_{[i]}=\theta^r(x)w$ and $\gamma^{n+2,k+2}_{[i']}=\theta^{r'}(x')w'$ that is,
$\theta^r(x)=\theta^{r'}(x')\in A^2$ and $w=w'$. By the definition of $\theta$, this implies either $x,x'\in\{00,11\}$ or $x,x'\in\{01,10\}$ that is, by construction,
either $q,q'\in\{0,2\}$, $x,x'\in\{00,11\}$, $w=\gamma^{n,k}_{[r]}=\gamma^{n,k}_{[r']}$, or 
$q,q'\in\{1,3\}$, $x,x'\in\{01,10\}$, $w=\rho^{n,k}_{[r]}=\rho^{n,k}_{[r']}$.
Since $\gamma^{n,k}$ and $\rho^{n,k}$ satisfies  Cond. \ref{ivc}, in any case we have  $r=r'$.
This implies  $\theta^r(x)=\theta^{r}(x')$, thus $x=x'$. 
With regard  to   Eqs. (\ref{6})--(\ref{9}), this corresponds to  $q=q'$, thus  $i=q2^n+r=q'2^n+r=i'$, therefore $\gamma^{n+2,k+2}$ satisfies Cond. \ref{ivc}.

\smallbreak\noindent
(ii) By substituting  $(10,11,01,00)$ to  $(00,01,11,10)$, according to (\ref{10})--(\ref{13}),
similar arguments lead to prove that $\rho^{n+2,k+2}_{[i]}=\rho^{n+2,k+2}_{[i']}$ implies $i=i'$ that is,  the sequence $\rho^{n+2,k+2}$ also satisfies Cond. \ref{ivc}.

\smallbreak\noindent
 (iii) Since $\gamma^{n+2,k+2}$ satisfies \ref{ivc}, we have
$\bigcup_{0\le i\le 2^{n+2}-1}\{\gamma^{n+2,k+2}_i\}=A^{n+2}$, hence 
our sequence satisfies \ref{iva}.
Similarly, since  $\rho^{n+2,k+2}$ satisfies Cond. \ref{ivc}
it  satisfies Cond. \ref{iva}.
\end{proof}
\noindent
 In order to prove that both the sequences $\gamma^{n,k}$ and $\rho^{n,k}$  satisfy Condition \ref{ivb}, beforehand we establish the following prop.:
\begin{lem}
\label{For-Gray-connections}
We have $\gamma^{n,k}_{[0]}\in\sigma_{k+1}\left(\rho^{n,k}_{[2^{n}-1]}\right)$ and 
$\rho^{n,k}_{[0]}\in\sigma_{k+1}\left(\gamma^{n,k}_{[2^{n}-1]}\right).$
\end{lem}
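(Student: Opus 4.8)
The statement asserts two Hamming-distance identities at the ``seams'' of the recursive construction, and both involve distance exactly $k+1$ between an initial term of one sequence and a final term of the other. The natural strategy is induction on $k'$, mirroring the inductive definition of $\gamma^{n,k}$ and $\rho^{n,k}$. First I would compute both sides explicitly in the base case $k'=0$, i.e. $k=1$, $n=n_0$, using the closed-form identities in Eqs.~(\ref{termes-initiaux}) and (\ref{termes-initiaux1}): there $\gamma^{n_0,1}_{[0]}=0^{n_0}$, $\rho^{n_0,1}_{[2^{n_0}-1]}=10^{n_0-2}1$, $\rho^{n_0,1}_{[0]}=10^{n_0-1}$, and $\gamma^{n_0,1}_{[2^{n_0}-1]}=0^{n_0-1}1$. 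One simply checks that the Hamming distance between $0^{n_0}$ and $10^{n_0-2}1$ is $2=k+1$, and likewise that the distance between $10^{n_0-1}$ and $0^{n_0-1}1$ is $2=k+1$. This settles the base case by direct inspection.

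\textbf{Induction step.}
For the step I would pass from $(n,k)$ to $(n+2,k+2)$ and read off the four relevant terms from Eqs.~(\ref{6})--(\ref{13}). The initial term $\gamma^{n+2,k+2}_{[0]}$ is obtained by setting $q=0,r=0$ in (\ref{6}), giving $00\,\gamma^{n,k}_{[0]}$. The final term $\rho^{n+2,k+2}_{[2^{n+2}-1]}$ is obtained by setting $q=3,r=2^n-1$ in (\ref{13}), giving $\theta^{2^n-1}(00)\,\rho^{n,k}_{[2^n-1]}$; since $\theta$ has order $2$ and $2^n-1$ is odd, $\theta^{2^n-1}(00)=11$. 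Hence the two words to compare are $00\,\gamma^{n,k}_{[0]}$ and $11\,\rho^{n,k}_{[2^n-1]}$. Their Hamming distance splits as the distance between the two-character prefixes ($00$ versus $11$, contributing $2$) plus the distance between $\gamma^{n,k}_{[0]}$ and $\rho^{n,k}_{[2^n-1]}$, which by the induction hypothesis is exactly $(k-1)+1=k$, for a total of $k+2=(k+2)+1-1$; I must be careful here that the induction hypothesis at level $k$ reads $\gamma^{n,k}_{[0]}\in\sigma_{k+1}(\rho^{n,k}_{[2^n-1]})$, so the tail contributes $k+1$ and the prefixes contribute $2$, giving $k+3$, which must be reconciled with the target $k+3=(k+2)+1$ for the $(n+2,k+2)$ claim. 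This arithmetic bookkeeping is exactly what must be pinned down.

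\textbf{The symmetric half and the main obstacle.}
The second identity $\rho^{n+2,k+2}_{[0]}\in\sigma_{k+3}(\gamma^{n+2,k+2}_{[2^{n+2}-1]})$ is handled identically: extract $\rho^{n+2,k+2}_{[0]}=10\,\gamma^{n,k}_{[0]}$ from (\ref{10}) and $\gamma^{n+2,k+2}_{[2^{n+2}-1]}=\theta^{2^n-1}(10)\,\rho^{n,k}_{[2^n-1]}=01\,\rho^{n,k}_{[2^n-1]}$ from (\ref{9}), compare the prefixes ($10$ versus $01$, contributing $2$), and invoke the same induction hypothesis on the tails. The one genuine subtlety, and the step I expect to need the most care, is verifying that the prefix contribution is \emph{exactly} $2$ and not $0$ or $1$: this rests on the parity computation $\theta^{2^n-1}=\theta$ (valid because $n\ge k+1\ge 2$ forces $2^n-1$ odd) together with the fact that the two seam prefixes are genuine $\theta$-images of each other at complementary letters. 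Once the prefix distances are confirmed to be $2$ and the tail distances are supplied by the induction hypothesis, additivity of Hamming distance over the prefix/suffix split closes the argument; the only real risk is an off-by-one in tracking how the distance index $k+1$ upgrades to $k+3$ across the two-letter extension.
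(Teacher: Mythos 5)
Your proposal is correct and follows essentially the same route as the paper's proof: induction on $k'$ with the base case read off from Eqs.~(\ref{termes-initiaux})--(\ref{termes-initiaux1}), and the step extracting $00\,\gamma^{n,k}_{[0]}$, $11\,\rho^{n,k}_{[2^n-1]}$, $10\,\gamma^{n,k}_{[0]}$ and $01\,\rho^{n,k}_{[2^n-1]}$ from Eqs.~(\ref{6}), (\ref{13}), (\ref{10}), (\ref{9}) via $\theta^{2^n-1}=\theta$, then adding the prefix distance $2$ to the tail distance $k+1$ supplied by the hypothesis $\gamma^{n,k}_{[0]}\in\sigma_{k+1}\left(\rho^{n,k}_{[2^n-1]}\right)$ (which, as in the paper, is the only half of the hypothesis actually needed in both cases). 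Your momentary miscount ``$(k-1)+1=k$'' is self-corrected in the same sentence, and the final bookkeeping $2+(k+1)=k+3=(k+2)+1$ is exactly the paper's.
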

\begin{proof}
We argue by induction over $k'\ge 0$.

\smallbreak\noindent
-- The base case corresponds to $k'=0$, thus $k=1$ and $n=n_0$.
According to  the identity (\ref{termes-initiaux}) we have $\gamma^{n_0,1}_{[0]}=0^{n_0}\in\sigma_2\left(10^{n_0-2}1\right)$ thus $\gamma^{n_0,1}_{[0]}\in\sigma_2\left(\rho^{n,k}_{[2^{n}-1]}\right)$. 
Similarly, according to (\ref{termes-initiaux1})  we have $ \rho^{n_0,1}_{[0]}=10^{n_0-1}\in\sigma_2\left(0^{n_0-1}1\right)$ that is, $\rho^{n_0,1}_{[0]}\in \sigma_2\left(\gamma^{n_0,1}_{[2^{n_0}-1]}\right)$.

\smallbreak\noindent
-- For the induction step, we assume that, for some $k'\ge 0$, we have $\gamma^{n,k}_{[0]}\in\sigma_{k+1}\left(\rho^{n,k}_{[2^{n}-1]}\right)$ and $\rho^{n,k}_{[0]}\in\sigma_{k+1}\left(\gamma^{n,k}_{[2^{n}-1]}\right)$.

\smallbreak
(i) In Eq. (\ref{6}), by setting $r=0$ we obtain
$\gamma^{n+2,k+2}_{[0]}=00\gamma^{n,k}_{[0]}$, hence by induction:
$\gamma^{n+2,k+2}_{[0]} \in 00\sigma_{k+1}\left(\rho^{n,k}_{[2^{n}-1]}\right)\subseteq \sigma_{k+3}\left(11\rho^{n,k}_{[2^{n}-1]}\right)$.
By setting $r=2^{n}-1$ in  Eq. (\ref{13}), we obtain $\rho^{n+2,k+2}_{[2^{n+2}-1]}=11\rho^{n,k}_{[2^{n}-1]}$: this implies
$\gamma^{n+2,k+2}_{[0]}\in\sigma_{k+3}\left(\rho^{n+2,k+2}_{[2^{n+2}-1]}\right)$.

\smallbreak
(ii) Similarly, by setting $r=0$  in Eq.  (\ref{10}), and  by induction we have:
$\rho^{n+2,k+2}_{[0]}=10\gamma^{n,k}_{[0]}\in\sigma_{k+3}\left(01\rho^{n,k}_{[2^{n}-1]}\right)$.
By taking $r=2^n-1$ in  Eq.  (\ref{9}) we obtain $\gamma^{n+2,k+2}_{[2^{n+2}-1]}=01\rho^{n,k}_{[2^n-1]}$, 
therefore we have $\rho^{n+2,k+2}_{[0]}\in\sigma_{k+3}\left(\gamma^{n+2,k+2}_{[2^{n+2}-1]}\right)$.
\end{proof}
Since  Eqs. (\ref{6})--(\ref{13}) look alike, one may be tempted to compress them by substituting to them some unique generic Formula.
Based on our tests, such a formula needs to introduce at least two additional technical parameters, with tedious handling. 
In the proof of the following result, we have opted to report some case-by-case basis argumentation:
this has the advantage of making use of arguments which, although being similar, are actually easily legible.
\begin{proposit}
\label{Gamma-n-k-odd}
 Both  the sequences $\gamma^{n,k}$ and $\rho^{n,k}$
are $\sigma_k$-Gray cycles over $A^n$.
\end{proposit}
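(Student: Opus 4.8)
The plan is to prove Proposition \ref{Gamma-n-k-odd} by induction over $k'\ge 0$, where $k=2k'+1$, exactly as in the preceding lemmas. Lemma \ref{pairwise-different} already guarantees that both $\gamma^{n,k}$ and $\rho^{n,k}$ satisfy Conditions \ref{iva} and \ref{ivc}; hence the only thing left to establish is Condition \ref{ivb}, namely that consecutive terms (cyclically) are in correspondence under $\sigma_k$. For the base case $k'=0$, i.e.\ $k=1$ and $n=n_0$, I would simply invoke the fact, recorded just after Eqs.\ (\ref{gamma-init})--(\ref{gamma-init1}), that $\gamma^{n_0,1}$ and $\rho^{n_0,1}$ are $\sigma_1$-Gray cycles over $A^{n_0}$, so \ref{ivb} holds by construction. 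The inductive hypothesis assumes that $\gamma^{n,k}$ and $\rho^{n,k}$ are genuine $\sigma_k$-Gray cycles over $A^n$, and I must deduce that $\gamma^{n+2,k+2}$ and $\rho^{n+2,k+2}$ satisfy \ref{ivb} for $\sigma_{k+2}$.

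The heart of the argument is to verify \ref{ivb} for $\gamma^{n+2,k+2}$ by checking three kinds of successor pairs, mirroring the structure of the proof of Proposition \ref{Cycle-Age3-n-k}. First, \emph{inside each block} $q\in\{0,1,2,3\}$: for $r\in[1,2^n-1]$, the $r$-th and $(r-1)$-th terms differ in the two-character prefix by $\theta^{r}(x)$ versus $\theta^{r-1}(x)$, which is a Hamming distance exactly $2$ (since $\theta$ flips both characters), while the suffixes are consecutive terms of the inductively-$\sigma_k$ cycle $\gamma^{n,k}$ or $\rho^{n,k}$, hence at Hamming distance $k$; the prefixes and suffixes being disjoint in position, the total distance is $k+2$, giving $\gamma^{n+2,k+2}_{[q2^n+r]}\in\sigma_{k+2}\bigl(\gamma^{n+2,k+2}_{[q2^n+r-1]}\bigr)$. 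Second, the \emph{block boundaries}: I would compute the last term of block $q-1$ (take $r=2^n-1$) and the first term of block $q$ (take $r=0$) from Eqs.\ (\ref{6})--(\ref{9}) and confirm the prefix pair is at distance exactly $2$ while the suffixes $\gamma^{n,k}_{[0]}$ and $\gamma^{n,k}_{[2^n-1]}$ (or the corresponding $\rho$-terms) are at distance $k$ by the cyclic property of the inductive cycles; the transition from a $\gamma$-block to a $\rho$-block is exactly where the prefix choices $00\to01$, $01\to11$, $11\to10$ are designed to keep the prefix distance at $2$. Third, the \emph{wrap-around}, connecting the last term of block $3$ back to the first term of block $0$: here the suffix changes from $\rho^{n,k}_{[2^n-1]}$ back to $\gamma^{n,k}_{[0]}$, so the inductive distance-$k$ fact is no longer available, and this is precisely where Lemma \ref{For-Gray-connections} supplies $\gamma^{n,k}_{[0]}\in\sigma_{k+1}\bigl(\rho^{n,k}_{[2^n-1]}\bigr)$; combined with the prefix change $10\to00$ (distance $1$), the total works out to $k+2$.

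I would then observe that the verification for $\rho^{n+2,k+2}$ is entirely parallel: Eqs.\ (\ref{10})--(\ref{13}) are obtained from (\ref{6})--(\ref{9}) by the prefix substitution $(00,01,11,10)\mapsto(10,11,01,00)$, which preserves all the pairwise prefix distances used above, so the three cases transfer verbatim, again invoking the second identity of Lemma \ref{For-Gray-connections}, namely $\rho^{n,k}_{[0]}\in\sigma_{k+1}\bigl(\gamma^{n,k}_{[2^n-1]}\bigr)$, at the wrap-around.

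The main obstacle, and the only step requiring real care rather than bookkeeping, is the boundary and wrap-around analysis in the second and third cases: one must track precisely which suffix cycle ($\gamma^{n,k}$ or $\rho^{n,k}$) appears on each side of a seam and confirm that the prefix distance is exactly $2$ at ordinary seams but drops to $1$ at the single seam where Lemma \ref{For-Gray-connections} raises the suffix distance to $k+1$, so that the totals land on $k+2$ in every case. The interior case is routine once the observation that $\theta$ flips both prefix characters is in hand; the bulk of the written proof will therefore be a disciplined case split over $q\in\{0,1,2,3\}$ tracking these prefix/suffix distances, and I would present it on a case-by-case basis as the authors announce, rather than through a compressed generic formula.
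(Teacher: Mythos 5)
Your induction frame, the reduction via Lemma \ref{pairwise-different} to Condition \ref{ivb}, and your interior-of-block case (prefix distance exactly $2$ since $\theta$ complements both prefix characters, suffix distance $k$ by induction) all match the paper's proof. But your seam analysis — which you yourself single out as the crux — is concretely wrong, and executed as stated it would fail. By Eqs.\ (\ref{6})--(\ref{9}) the four blocks of $\gamma^{n+2,k+2}$ carry suffixes drawn alternately from the \emph{two different} cycles: $\gamma^{n,k}$, $\rho^{n,k}$, $\gamma^{n,k}$, $\rho^{n,k}$. Hence \emph{every} seam, the three block boundaries as well as the wrap-around, joins a term ending in $\gamma^{n,k}_{[2^n-1]}$ to one ending in $\rho^{n,k}_{[0]}$ (or the other way round); the suffixes on the two sides of a seam are never a cyclically consecutive pair within a single inductive cycle, so "the cyclic property of the inductive cycles" gives you nothing at any boundary. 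The mixed-cycle estimate of Lemma \ref{For-Gray-connections}, $\gamma^{n,k}_{[0]}\in\sigma_{k+1}\bigl(\rho^{n,k}_{[2^n-1]}\bigr)$ and $\rho^{n,k}_{[0]}\in\sigma_{k+1}\bigl(\gamma^{n,k}_{[2^n-1]}\bigr)$, is therefore invoked at all four seams per sequence (this is exactly the paper's Conditions (\ref{EQ2}) for $q\in\{1,2,3\}$ plus (\ref{EQ3})), not only once at the wrap-around as you propose.

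Your prefix accounting at seams is backwards as well. Since $2^n-1$ is odd, the outgoing prefix of block $q-1$ is $\theta^{2^n-1}(x_{q-1})=\theta(x_{q-1})$, so the actual seam prefix pairs are $11\to01$, $10\to11$, $00\to10$ and (wrap) $01\to00$: all at Hamming distance $1$, not $2$ — indeed even the nominal pairs $00\to01$, $01\to11$, $11\to10$ you cite are at distance $1$, contradicting your own claim. The correct split at every seam is prefix distance $1$ plus suffix distance $k+1$ from Lemma \ref{For-Gray-connections}, totalling $k+2$; your proposed split $2+k$ with the suffix bound taken from the single-cycle closure is unobtainable from the induction hypothesis. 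The paper's own example confirms this: in $\gamma^{5,3}$, block $0$ ends with $11\,001$ and block $1$ begins with $01\,100$, giving prefix distance $1$ and suffix distance $2=k+1$. Until the seam bookkeeping is corrected in this way, Condition (\ref{EQ2}) cannot be verified, so the proposal as written has a genuine gap — though once corrected, it collapses onto the paper's proof.
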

%
\begin{proof}
Once more we argue by induction over $k'\ge 0$.
Since $\gamma^{n_0,1}$ and $\rho^{n_0,1}$ are  $\sigma_1$-Gray cycles over $A^n$, the prop. holds for $k'=0$.
In view of  the induction stage, we assume that, for some $k'\ge 0$ both the sequences $\gamma^{n,k}$ and $\rho^{n,k}$ are  $\sigma_k$-Gray cycles over $A^n$.
According to  Lemma \ref{pairwise-different}, it remains to establish that $\gamma^{n+2,k+2}$ and $\rho^{n+2,k+2}$ satisfy Cond. \ref{ivb} that is:
\begin{eqnarray}
\label{EQ1}
(\forall q\in \{0,1,2,3\})(\forall r\in [1,2^n-1]) ~\gamma^{n+2,k+2}_{[q2^n+r]}\in\sigma_{k+2}\left(\gamma^{n+2,k+2}_{[q2^n+r-1]}\right);\\
\label{EQ2}
(\forall q\in \{1,2,3\})~\gamma^{n+2,k+2}_{[q2^{n}]}\in 
\sigma_{k+2}\left(\gamma^{n+2,k+2}_{[q2^{n}-1]}\right);\\
\label{EQ3}\gamma^{n+2,k+2}_{[0]}\in\sigma_{k+2}\left(\gamma^{n+2,k+2}_{[2^{n+2}-1]}\right).
\end{eqnarray}
\begin{eqnarray}
\label{EQ4}
(\forall q\in \{0,1,2,3\})(\forall r\in [1,2^n-1]) ~\rho^{n+2,k+2}_{[q2^n+r]}\in\sigma_{k+2}\left(\rho^{n+2,k+2}_{[q2^n+r-1]}\right);\\
\label{EQ5}
(\forall q\in \{1,2,3\})~\rho^{n+2,k+2}_{[q2^{n}]}\in 
\sigma_{k+2}\left(\rho^{n+2,k+2}_{[q2^{n}-1]}\right);\\
\label{EQ6}
\rho^{n+2,k+2}_{[0]}\in\sigma_{k+2}\left(\rho^{n+2,k+2}_{[2^{n+2}-1]}\right).
\end{eqnarray}

 \noindent \underline{{\it Condition} (\ref{EQ1})}

\smallbreak\noindent
(i) At first assume $q=0$.
According to Eq.  (\ref{6}) and since by induction $\gamma^{n,k}$ satisfies Cond. \ref{ivb},
we have  $\gamma^{n+2,k+2}_{[r]}= \theta^{r}\left(00\right)\gamma^{n,k}_{[r]}\in\theta^{r}(00)\sigma_k\left(\gamma^{n,k}_{[r-1]}\right)$,
thus $\gamma^{n+2,k+2}_{[r]}\in\sigma_{k+2}\left(\theta^{r-1}(00)\gamma^{n,k}_{[r-1]}\right)$. 
In Eq.  (\ref{6}), by substituting $r-1$ to $r$ (we have $0\le r-1<2^n-1$):
we obtain  $\gamma^{n+2,k+2}_{[r-1]}=\theta^{r-1}(00)\gamma^{n,k}_{[r-1]}$: this implies
$\gamma^{n+2,k+2}_{[r]}\in\sigma_{k+2}\left(\gamma^{n+2,k+2}_{[r-1]}\right)$.

\smallbreak\noindent
(ii)  Now assume   $q=1$.
According to Eq. (\ref{7}), and since by induction $\rho^{n,k}$ satisfies Cond. \ref{ivb},
we have  $\gamma^{n+2,k+2}_{[2^n+r]}= \theta^{r}(01)\rho^{n,k}_{[r]}\in \sigma_{k+2}\left(\theta^{r-1}(01)\rho^{n,k}_{[r-1]}\right)$.
In Eq. (\ref{7}), substitute $r-1$ to $r$: we obtain $\gamma^{n+2,k+2}_{[2^n+r-1]}= \theta^{r-1}(01)\rho^{n,k}_{[r-1]}$, therefore we have
$\gamma^{n+2,k+2}_{[2^n+r]}\in\sigma_{k+2}\left(\gamma^{n+2,k+2}_{[2^n+r-1]}\right)$.

\smallbreak\noindent
(iii) For $q=2$, we make use of arguments very similar to those applied in (i): 
according to Eq. (\ref{8}) and since by induction $\gamma^{n,k}$ satisfies Cond. \ref{ivb},
we have  $\gamma^{n+2,k+2}_{[2\cdot 2^n+r]}= \theta^{r}(11)\gamma^{n,k}_{[r]}\in\theta^{r}(11)\sigma_k\left(\gamma^{n,k}_{[r-1]}\right)\subseteq \sigma_{k+2}\left(\theta^{r-1}(11)\gamma^{n,k}_{[r-1]}\right)$.
Once more in Eq.  (\ref{8}), by substituting $r-1$ to $r$,
we obtain  $\gamma^{n+2,k+2}_{[r-1]}=\theta^{r-1}(11)\gamma^{n,k}_{[r-1]}$: this implies
$\gamma^{n+2,k+2}_{[r]}\in\sigma_{k+2}\left(\gamma^{n+2,k+2}_{[r-1]}\right)$.

\smallbreak\noindent
(iv)  Finally, with the condition  $q=3$, according to Eq. (\ref{9}), and since $\rho^{n,k}$ satisfies Cond. \ref{ivb},
we have  $\gamma^{n+2,k+2}_{[3\cdot 2^n+r]}= \theta^{r}(10)\rho^{n,k}_{[r]}\in \sigma_{k+2}\left(\theta^{r-1}(10)\rho^{n,k}_{[r-1]}\right)$.
In Eq. (\ref{9}), substitute $r-1$ to $r$: we obtain $\gamma^{n+2,k+2}_{[3\cdot 2^n+r-1]}= \theta^{r-1}(10)\rho^{n,k}_{[r-1]}$, therefore we have
$\gamma^{n+2,k+2}_{[3\cdot 2^n+r]}\in\sigma_{k+2}\left(\gamma^{n+2,k+2}_{[3\cdot 2^n+r-1]}\right)$.

\medskip
 \noindent
\underline{{\it Condition} (\ref{EQ2})}

\smallbreak\noindent
(i) Assume $q=1$ and take $r=0$ in  Eq. (\ref{7}): we obtain $\gamma^{n+2,k+2}_{[2^n]}=01\rho^{n,k}_{[0]}$. It follows from Lemma \ref{For-Gray-connections},
that $\gamma^{n+2,k+2}_{[2^n]}\in01\sigma_{k+1}\left(\gamma^{n,k}_{[2^n-1]}\right)$, thus  $\gamma^{n+2,k+2}_{[2^n]}\in\sigma_{k+2}\left(11\gamma^{n,k}_{[2^n-1]}\right)$.
\noindent
By taking $r=2^n-1$ in  (\ref{6}), 
we obtain:
$\gamma^{n+2,k+2}_{[2^n-1]}=\theta^{2^n-1}\left(00\right)\gamma^{n,k}_{[2^n-1]}=11\gamma^{n,k}_{[2^n-1]}$: this implies
$\gamma^{n+2,k+2}_{[2^n]}\in\sigma_{k+2}\left(\gamma^{n+2,k+2}_{[2^n-1]}\right)$.

\smallbreak\noindent
(ii) Now, assume  $q=2$, and 
set $r=0$ in Eq. (\ref{8}): we obtain  $\gamma^{n+2,k+2}_{[2\cdot2^n]}=11\gamma^{n,k}_{[0]}$.
According to Lemma \ref{For-Gray-connections} 
we have $\gamma^{n+2,k+2}_{[2\cdot2^n]}\in11\sigma_{k+1}\left(\rho^{n,k}_{[2^n-1]}\right)\subseteq \sigma_{k+2}\left(10\rho^{n,k}_{[2^n-1]}\right)$.
By taking $r=2^n-1$ in (\ref{7}), we obtain $\gamma^{n+2,k+2}_{[2^n+ 2^n-1)]}=10\rho^{n,k}_{[2^n-1]}$, which implies
$\gamma^{n+2,k+2}_{[2\cdot 2^n]}\in\sigma_{k+2}\left(\gamma^{n+2,k+2}_{[2\cdot2^n-1]}\right)$.\\

\smallbreak\noindent
(iii)  Finally, for  $q=3$,  we take $r=0$ in Eq.  (\ref{9}). Once more according to Lemma \ref{For-Gray-connections},
we have $\gamma^{n+2,k+2}_{[3\cdot 2^n]}=10\rho^{n,k}_{[0]}\in 10\sigma_{k+1}\left(\gamma^{n,k}_{[2^n-1]}\right)\subseteq\sigma_{k+2}\left(01\gamma^{n,k}_{[2^n-1]}\right)$.
On the other hand, by taking $r=2^n-1$ in Eq. (\ref{9}):
we obtain $\gamma^{n+2,k+2}_{[2\cdot 2^n+2^n-1]}=\theta^{2^n-1}\left(10\right)\gamma^{n,k}_{[2^n-1]}=01\gamma^{n,k}_{[2^n-1]}$ that is,
$\gamma^{n+2,k+2}_{[3\cdot 2^n]}\in\sigma_{k+2}\left(\gamma^{n+2,k+2}_{[3\cdot 2^n-1]}\right)$.

\bigbreak
 \noindent
\underline{{\it Condition} (\ref{EQ3})}
\smallbreak
\noindent
Take $r=0$  in  Eq. (\ref{6}). According to Lemma \ref{For-Gray-connections}, we have  $\gamma^{n+2,k+2}_{[0]}=00\gamma^{n,k}_{[0]}\in 00\sigma_{k+1}\left(\rho^{n,k}_{[2^n-1]}\right)\subseteq\sigma_{k+2}\left(01\rho^{n,k}_{[2^n-1]}\right)$. 
By taking $r=2^n-1$ in Eq. (\ref{9}) we obtain 
$\gamma^{n,k}_{[3\cdot 2^n+ 2^n-1]}=01\rho^{n,k}_{[2^n-1]}$, thus
$\gamma^{n+2,k+2}_{[0]}\in\sigma_{k+2}\left(\gamma^{n,k}_{[2^{n+2}-1]}\right)$.

\bigbreak
 \noindent
\underline{{\it Condition} (\ref{EQ4})-- (\ref{EQ6})}

\smallbreak
\noindent
 According to the structures of  Eqs. (\ref{10})--(\ref{13}), for proving these conditions 
the method  consists in substituting
the word $\rho^{n+2,k+2}_{[r]}$ to $\gamma^{n+2,k+2}_{[r]}$, the $4$-tuple $(10,11,01,00)$ to $(00,01,11,10)$, and 
Eq.   (\ref{10}) {\large (}resp.,  (\ref{11}),  (\ref{12}),  (\ref{13}){\large )} to Eq.  (\ref{6})  {\large (}resp.,  (\ref{7}),  (\ref{8}),  (\ref{9}){\large )}.
\end{proof}

\section{Condition $k$ odd: a non recursive method for computing $\gamma^{n,k}$}
\label{S4bis}
Recall that 
we set $k=2k'+1$, $n_0=n-2k'=n-k+1\ge 2$. 
Let $J=\bigcup_{0\le\ell\le k'} \{n_0+2\ell\}=\{n_0, n_0+2,\cdots n-2,n\}$.
As in Sect. \ref{A>3bis}, we will establish an eq. that allows to compute the word $\gamma^{n,k}_{[i]}$ starting from $\gamma^{n,k}_{[i-1]}$, for every $i\in [1,2^n-1]$.
Beforehand, it is convenient to summarize such an approach.
\smallbreak
-- At first, some combinatorial study is drawn: for each $j\in J$ we describe, in term of periodicity, the structure of the sequence $A^{j-n}\gamma^{n,k}$
 (Lemma \ref{gamma-power}).

-- In order to provide some map of the whole family of words  $\gamma^{n,k}_{[i]}$ ($1\le i\le 2^n-1$), a first matrix, namely $C=\left(C^{[j]}_{[i]}\right)_{0\le i\le 2^n-1, j\in J}$, is introduced. 
The components  of $C$ are words that can  be computed, on the one hand by applying Eqs.  (\ref{Equations-C}), (\ref{Equations-Cn}) 
(such formulas actually come from the preceding eqs. (\ref{6})--(\ref{13})), and
on the other hand, 
by applying Lemma \ref{gamma-power}. 

\smallbreak
-- Actually  the matrix $C$ cannot be directly computed through an iteration-based method. 
To remedy this situation,  a second matrix namely $Q$ is introduced. From
 this point of view, Eqs. (\ref{Equations-q-C}), (\ref{Equations-q-Cn}) provide precision on Eqs. (\ref{Equations-C}), (\ref{Equations-Cn}).

\smallbreak
--At this stage, we have gathered sufficient material to obtain a first computation formula, which is presented in Lemma \ref{C}.
Some precision: in the case where $i$ is not a multiple of $2^{j-2}$,
our formula  allows to compute  the matrix $\left(Q^{[j]}_{[i]},C^{[j]}_{[i]}\right)$ by directly starting from $\left(Q^{[j]}_{[i-1]},C^{[j]}_{[i-1]}\right)$.
 In the case where $i$ is a multiple of $2^{j-2}$,  the formula requires  to start the computation from  the pair $\left(Q^{[j]}_{[i-2^{j-2}-1]},C^{[j]}_{[i-2^{j-2}-1]}\right)$:
 we have not yet achieved our goal.

\smallbreak
-- Furthermore, Proposition \ref{CC} sets a second formula:
in any case, it allows the computation of  $\left(Q^{[j]}_{[i]},C^{[j]}_{[i]}\right)$
by directly starting with  the pair $\left(Q^{[j]}_{[i-1]},C^{[j]}_{[i-1]}\right)$.

\smallbreak
-- Regarding the implementation of the above method, some pseudo-code is provided in Algorithm 2.
\subsection{A property involving periodicity}
We start by establishing the following result:
\begin{lem}
\label{gamma-power}
Wrt.  the sequence concatenation, for every $j\in J$ the sequence
$A^{j-n}\gamma^{n,k}$ is  $2^{j+1}$-periodic.
More precisely, given $j\in J\setminus {n}$,   $A^{j-n}\gamma^{n,k}$ is a power of the sequences concatenation  $\left(\gamma^{j,k-n+j}_{[0\cdot\cdot2^{j}-1]},\rho^{j,k-n+j}_{[0\cdot\cdot2^{j}-1]}\right)$.
\end{lem}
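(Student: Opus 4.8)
The plan is to reproduce, in the binary odd-$k$ setting, the top-down induction used for Lemma~\ref{h-power}, the only differences being that the induction on $j$ now descends through $J$ in steps of $2$ (because the construction in Eqs.~(\ref{6})--(\ref{13}) raises both $n$ and $k$ by $2$) and that the building block is the \emph{pair} $\left(\gamma^{j,k-n+j}_{[0\cdot\cdot2^{j}-1]},\rho^{j,k-n+j}_{[0\cdot\cdot2^{j}-1]}\right)$ rather than a single sequence. First I would note that the periodicity assertion follows from the power assertion: a concatenation power $y^{q}$ is automatically $|y|$-periodic, and the pair has length $2\cdot2^{j}=2^{j+1}$. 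Thus the base case $j=n$ needs only the trivial observation that $A^{0}\gamma^{n,k}=\gamma^{n,k}$ has length $2^{n}\le 2^{n+1}$ and hence is $2^{n+1}$-periodic, while the substantive content is the power form for $j\in J\setminus\{n\}$.

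The key preliminary step is a \emph{one-step relation}: for any level $(m,\mu)$ at which the recursion applies (i.e. $\mu=k-n+m\ge 3$), deleting the two leading characters from the length-$2^{m}$ prefix of $\gamma^{m,\mu}$ and of $\rho^{m,\mu}$ yields \emph{the same} sequence,
\begin{eqnarray}
A^{-2}\gamma^{m,\mu}_{[0\cdot\cdot2^{m}-1]}=A^{-2}\rho^{m,\mu}_{[0\cdot\cdot2^{m}-1]}=\left(\gamma^{m-2,\mu-2}_{[0\cdot\cdot2^{m-2}-1]},\rho^{m-2,\mu-2}_{[0\cdot\cdot2^{m-2}-1]}\right)^{2}.\nonumber
\end{eqnarray}
This is read directly off Eqs.~(\ref{6})--(\ref{9}) (resp.~(\ref{10})--(\ref{13})): each leading factor $\theta^{r}(ab)$ with $ab\in A^{2}$ is exactly what $A^{-2}$ removes, and both quadruples of equations leave the lower sequences arranged in the order $\gamma,\rho,\gamma,\rho$. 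I would also check that for $m\in J$ with $m\ge n_0+2$ one has $\mu=k-n+m\ge 3$, so Eqs.~(\ref{6})--(\ref{13}) are genuinely available at every level used.

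With this in hand the induction step is routine. Writing $A^{j-n}\gamma^{n,k}=A^{-2}\bigl(A^{(j+2)-n}\gamma^{n,k}\bigr)$ and using that $A^{-2}$ distributes over sequence concatenation, there are two cases. If $j+2=n$, I apply the one-step relation to $\gamma^{n,k}$ itself to obtain $A^{j-n}\gamma^{n,k}=\left(\gamma^{j,k-n+j}_{[0\cdot\cdot2^{j}-1]},\rho^{j,k-n+j}_{[0\cdot\cdot2^{j}-1]}\right)^{2}$. If $j+2<n$, the induction hypothesis presents $A^{(j+2)-n}\gamma^{n,k}$ as a power of the level-$(j+2)$ pair; applying the one-step relation to each of its two factors — which, crucially, produce the identical block — turns that power into a power of the level-$j$ pair. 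In both cases $A^{j-n}\gamma^{n,k}$ is a power of $\left(\gamma^{j,k-n+j}_{[0\cdot\cdot2^{j}-1]},\rho^{j,k-n+j}_{[0\cdot\cdot2^{j}-1]}\right)$, hence $2^{j+1}$-periodic, which completes the step.

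The main obstacle I anticipate is the asymmetry at the top of the induction: $\gamma^{n,k}$ is not itself of the form $(\gamma,\rho)$, so $j=n$ genuinely must be excluded from the power claim, and the first descent $n\to n-2$ has to be fed the bare sequence $\gamma^{n,k}$ rather than a power of a pair (this is precisely the role of the case $j+2=n$). The only other point requiring care is verifying that the one-step relation holds \emph{simultaneously} for $\gamma$ and $\rho$ with the same resulting block $\left(\gamma^{m-2,\mu-2}_{[0\cdot\cdot2^{m-2}-1]},\rho^{m-2,\mu-2}_{[0\cdot\cdot2^{m-2}-1]}\right)^{2}$; this coincidence is what allows the pair structure to propagate unchanged down the induction, and it rests on the parallel shapes of Eqs.~(\ref{6})--(\ref{9}) and (\ref{10})--(\ref{13}).
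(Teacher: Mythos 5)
Your proof is correct and follows essentially the same route as the paper's: a top-down induction descending through $J$ in steps of $2$, whose base case ($j+2=n$) and induction step both rest on reading off Eqs.~(\ref{6})--(\ref{13}) that $A^{-2}$ applied to the length-$2^{j}$ prefix of either $\gamma^{j,k-n+j}$ or $\rho^{j,k-n+j}$ yields the same block $\left(\gamma^{j-2,k-n+j-2}_{[0\cdot\cdot2^{j-2}-1]},\rho^{j-2,k-n+j-2}_{[0\cdot\cdot2^{j-2}-1]}\right)^{2}$. Your packaging of this as a single ``one-step relation'' applied simultaneously to $\gamma$ and $\rho$ is exactly the computation the paper performs inside its induction stage, so the two arguments coincide in substance.
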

\begin{proof}
With the condition $j=n$, trivially the sequence $A^{j-n}\gamma^{n,k}=\gamma^{n,k}$ is $2^{j+1}$-periodic.
For $j\in J\setminus \{n\}$, by making use of a top-down induction-based argument over  $j$, we prove that $A^{j-n}\gamma^{n,k}$ is a concatenation power of  $\gamma^{j,k-n+j}_{[0\cdot\cdot2^{j}-1]}\rho^{j,k-n+j}_{[0\cdot\cdot2^{j}-1]}$.

\smallbreak
-- The base case corresponds to $j=n-2$. Let $i\in [0,2^{n}-1]$, $q\in [0,3]$, and $r\in [0, 2^{n-2}-1]$ st. $i=q2^{n-2}+r$. 
By substituting $n-2$ to $n$ and $k-2$ to $k$ in  Eqs.  (\ref{6})--(\ref{13}), we obtain the following identities: 
\begin{eqnarray} 
A^{-2}\gamma^{n,k}_{[r]}=\gamma^{n-2,k-2}_{[r]}\nonumber \\
A^{-2}\gamma^{n,k}_{[2^{n-2}+r]}=\rho^{n-2,k-2}_{[r]}\nonumber\\
A^{-2}\gamma^{n,k}_{[2\cdot 2^{n-2}+r]}=\gamma^{n-2,k-2}_{[r]}\nonumber\\
A^{-2}\gamma^{n,k}_{[3 \cdot 2^{n-2}+r]}=\rho^{n-2,k-2}_{[r]}.\nonumber
\end{eqnarray}
As a consequence, regarding sequences of words, each of the following eqs. holds:
 \begin{eqnarray}
A^{-2}\gamma^{n,k}_{[0\cdot\cdot2^{n-2}-1]}=\gamma^{n-2,k-2}_{[0\cdot\cdot2^{n-2}-1]}\nonumber\\
A^{-2}\gamma^{n,k}_{[2^{n-2}\cdot\cdot2\cdot 2^{n-2}-1]}=\rho^{n-2,k-2}_{[0\cdot\cdot2^{n-2}-1]}\nonumber\\
 A^{-2}\gamma^{n,k}_{[2\cdot 2^{n-2}\cdot\cdot3\cdot 2^{n-2}-1]}=\gamma^{n-2,k-2}_{[0\cdot\cdot2^{n-2}-1]}\nonumber\\
 A^{-2}\gamma^{n,k}_{[3\cdot 2^{n-2}\cdot\cdot2^{n}-1]}=\rho^{n-2,k-2}_{[0\cdot\cdot2^{n-2}-1]}.\nonumber
\end{eqnarray}
Wrt. sequence  concatenation, this implies:

 $A^{-2}\gamma^{n,k}=A^{-2}\gamma^{n,k}_{[0\cdot\cdot2^{n}-1]}=\left(\gamma^{n-2,k-2}_{[0\cdot\cdot2^{n-2}-1]},\rho^{n-2,k-2}_{[0\cdot\cdot2^{n-2}-1]}\right)^{2}$.

\noindent
Since the length of each of the sequences $\gamma^{n-2,k-2}_{[0\cdot\cdot2^{n-2}-1]},\rho^{n-2,k-2}_{[0\cdot\cdot2^{n-2}-1]}$ is $2^{n-2}=2^j$, the prop. of Lemma \ref{gamma-power} holds. 

\bigbreak
-- For the induction stage, we assume that, 
for some $j\in J\setminus \{n_0,n\}$, the sequence $A^{j-n}\gamma^{n,k}$ is a concatenation power of $\left(\gamma^{j,k-n+j}_{[0\cdot\cdot2^{j}-1]},\rho^{j,k-n+j}_{[0\cdot\cdot2^{j}-1]}\right)$.
With this condition, the sequence 
  $A^{j-n-2}\gamma^{n,k}=A^{-2}\left(A^{j-n}\gamma^{n,k}\right)$, for its part, is a power of $A^{-2}\left(\gamma^{j,k-n+j}_{0\cdot\cdot2^{j}-1},\rho^{j,k-n+j}_{0\cdot\cdot2^{j}-1}\right)$.
Let $i\in [0,2^j-1]$, $q\in [0,3]$, and $r\in [0,2^{j-2}-1]$ st. $i=q2^{j-2}+r$.
By substituting $j$ to $n+2$ and $k-n+j$ to $k+2$ in 
Eqs. (\ref{6})--(\ref{13}), we obtain:
\begin{eqnarray}
~~~~~~~~ A^{-2}\gamma^{j,k-n+j}_{[r]}=\gamma^{j-2,k-n+j-2}_{[r]}\nonumber\\
~~A^{-2}\gamma^{j,k-n+j}_{[2^{j-2}+r]}=\rho^{j-2,k-n+j-2}_{[r]}\nonumber\\
 A^{-2}\gamma^{j,k-n+j}_{[2\cdot 2^{j-2}+r]}=\gamma^{j-2,k-n+j-2}_{[r]}\nonumber\\
 ~~~~~A^{-2}\gamma^{j,k-n+j}_{[3\cdot 2^{j-2}+r]}=\rho^{j-2,k-n+j-2}_{[r]}.\nonumber
\end{eqnarray}
Therefore, the following equations hold:
 \begin{eqnarray}
~~~~~~~~ A^{-2}\gamma^{j,k-n+j}_{[0\cdot\cdot2^{j-2}-1]}=\gamma^{j-2,k-n+j-2}_{[0\cdot\cdot2^{j-2}-1]}\nonumber\\
~~A^{-2}\gamma^{j,k-n+j}_{[2^{j-2}\cdot\cdot2\cdot 2^{j-2}-1]}=\rho^{j-2,k-n+j-2}_{[0\cdot\cdot2^{j-2}-1]}\nonumber\\
 A^{-2}\gamma^{j,k-n+j}_{[2\cdot 2^{j-2}\cdot\cdot3\cdot 2^{j-2}-1]}=\gamma^{j-2,k-n+j-2}_{[0\cdot\cdot2^{j-2}-1]}\nonumber\\
 ~~~~~A^{-2}\gamma^{j,k-n+j}_{[3\cdot 2^{j-2}\cdot\cdot2^{j}-1]}=\rho^{j-2,k-n+j-2}_{[0\cdot\cdot2^{j-2}-1]}.\nonumber
\end{eqnarray}
This implies $A^{-2}\gamma^{j,k-n+j}_{[0\cdot\cdot2^{j-2}-1]}=\left(\gamma^{j-2,k-n+j-2}_{[0\cdot\cdot2^{j-2}-1]},\rho^{j-2,k-n+j-2}_{[0\cdot\cdot2^{j-2}-1]}\right)^2$.\\
Similarly, we have:
 \begin{eqnarray}
~~~~~~~~ A^{-2}\rho^{j,k-n+j}_{[0\cdot\cdot2^{j-2}-1]}=\gamma^{j-2,k-n+j-2}_{[0\cdot\cdot2^{j-2}-1]}\nonumber\\
~~A^{-2}\rho^{j,k-n+j}_{[2^{j-2}\cdot\cdot2\cdot 2^{j-2}-1]}=\rho^{j-2,k-n+j-2}_{[0\cdot\cdot2^{j-2}-1]}\nonumber\\
 A^{-2}\rho^{j,k-n+j}_{[2\cdot 2^{j-2}\cdot\cdot3\cdot 2^{j-2}-1]}=\gamma^{j-2,k-n+j-2}_{[0\cdot\cdot2^{j-2}-1]}\nonumber\\
 ~~~~~A^{-2}\rho^{j,k-n+j}_{[3\cdot 2^{j-2}\cdot\cdot2^{j}-1]}=\rho^{j-2,k-n+j-2}_{[0\cdot\cdot2^{j-2}-1]}\nonumber
\end{eqnarray}
Therefore we have $A^{-2}\rho^{j,k-n+j}_{[0\cdot\cdot2^{j-2}-1]}=\left(\gamma^{j-2,k-n+j-2}_{[0\cdot\cdot2^{j-2}-1]},\rho^{j-2,k-n+j-2}_{[0\cdot\cdot2^{j-2}-1]}\right)^2$.
We obtain:
 \begin{eqnarray}
A^{-2}\left(\gamma^{j,k-n+j}_{[0\cdot\cdot2^{j}-1]},\rho^{j-n,k-n+j}_{[0\cdot\cdot2^{j}-1]}\right)=\left( \gamma^{j-2,k-n+j-2}_{[0\cdot\cdot2^{j-2}-1]},\rho^{j-2,k-n+j-2}_{[0\cdot\cdot2^{j-2}-1]} \right)^4.\nonumber
 \end{eqnarray}
Consequently,  since it is a concatenation power of  $A^{-2}\left(\gamma^{j,k-n+j}_{[0\cdot\cdot2^{j}-1]},\rho^{j,k-n+j}_{[0\cdot\cdot2^{j}-1]}\right)$,
 the sequence  $A^{j-n-2}\gamma^{n,k}$  is a concatenation  power of the sequence:

  $\left(\gamma^{j-2,k-n+j-2}_{[0\cdot\cdot2^{j-2}-1]},\rho^{j-2,k-n+j-2}_{[0\cdot\cdot2^{j-2}-1]}\right)$.
Since the length of this last sequence  is $2\cdot 2^{j-2}=2^{(j-2)+1}$, the sequence $A^{j-n-2}\gamma^{n,k}$ itself  has period  $2^{(j-2)+1}$ that is,  the prop. of Lemma \ref{gamma-power} also holds for $j-2$.
This completes the proof.
\end{proof}
\subsection{Mapping the structure of $\gamma^{n,k}$}
\label{Matrix-C}
 Eqs. (\ref{6})--(\ref{13}) leads to compute $\gamma^{n,k}$ by recursively applying  a series of left-concatenation by words in $A^2$.
As previously announced, in the spirit of Sect. \ref{A>3bis} we introduce a matrix, namely $C$.
The row  index is $i\in [0, 2^n-1]$, the column index being $j\in J=\{n,n-2,\cdots, n_0+2,n_0\}$. 
More precisely, given $i\in [0,2^n-1]$ we set:
\begin{eqnarray}
\label{gamma-prod-C}
\gamma^{n,k}_{[i]}=C^{[n]}_{[i]}C^{[n-2]}_{[i]}\cdots C^{[j+2]}_{[i]}  C^{[j]}_{[i]}\cdots C^{[n_0+2]}_{[i]}C^{[n_0]}_{[i]},\\
{\rm with}~C^{[n_0]}_{[i]}\in A^{n_0}, ~{\rm and}~ C^{[j]}_{[i]}\in A^2~{\rm for~every~}j\in J\setminus\{ n_0\}.\nonumber
\end{eqnarray}
The reasons that can be  invoked for adopting such a reverse-order notation are the same that for the matrix $H$ from Sect. \ref{A>3bis}.
The row of index is $i\in [0, 2^n-1]$ is:  
\begin{eqnarray}
\left(C^{[n]}_{[i]},C^{[n-2]}_{[i]},\cdots C^{[j+2]}_{[i]},  C^{[j]}_{[i]}, \cdots, C^{[n_0+2]}_{[i]},C^{[n_0]}_{[i]}\right)
\end{eqnarray}
In addition,  we denote by $\mu(i,j)$ the unique integer in $[0,2^{j+1}-1]$ st. $i=\mu(i,j)\bmod 2^{j+1}$. 
As a consequence of Lemma \ref{gamma-power}, we obtain the following result, which is  the counterpart of  Lemma \ref{H-period} from Sect. \ref{A>3bis}:
\begin{lem}
\label{C-period}
With the preceding notation each of the following props. holds:

\smallbreak
 {\rm (i)}  For every $j\in J\setminus\{n_0,n\}$, we have $C^{[j]}_{[0\cdot\cdot2^{n}-1]}={\rm P}_2\left(A^{j-n}\gamma^{n,k}\right)$.
In addition the sequence 
$C^{[j]}_{[0\cdot\cdot2^n-1]}$ is a concatenation power of $C^{[j]}_{[0\cdot\cdot2^{j+1}-1]}$.

\smallbreak
{\rm (ii)} For every $i\in [0,2^n-1]$ we have:

~~~~~~~~~~~~~~~
$
C^{[n_0]}_{[i]}=
\begin{cases}
C^{[n_0]}_{[i]}= \gamma^{n_0,1}_{[\mu(i,n_0)]}~~{\it if}~~\mu(i,n_0)\in [0,2^{n_0}-1]\\
C^{[n_0]}_{[i]}= \rho^{n_0,1}_{[\mu(i,n_0)-2^{n_0}]}~~{\it if}~~\mu(i,n_0)\in [2^{n_0},2^{n_0+1}-1].
\end{cases}
$

\smallbreak

{\rm (iii)}   For every index pair $i\in [0,2^n-1]$, $j\in J\setminus\{n_0\}$, we have:

~~~~~~~~~~~~~~~
$
C^{[j]}_{[i]}=
\begin{cases}
{\rm P}_2\left(\gamma^{j,n-j+k}_{[\mu(i,j)]}\right)~~{\it if}~~\mu(i,j)\in [0,2^j-1]\\
{\rm P}_2\left(\rho^{j,n-j+k}_{[\mu(i,j)]-2^j}\right)~~{\it if}~~\mu(i,j)\in [2^j,2^{j+1}-1].
\end{cases}
$
\end{lem}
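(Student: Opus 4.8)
Lemma \ref{C-period} is a direct corollary of **Lemma \ref{gamma-power}**, so let me work out how the periodicity statement transfers to the individual matrix entries.

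The plan is to prove all three parts by reading off the entries of the matrix $C$ from the periodicity structure of the truncated sequences established in Lemma \ref{gamma-power}. The key observation tying everything together is the defining factorization (\ref{gamma-prod-C}): the word $\gamma^{n,k}_{[i]}$ decomposes (reading left to right, i.e.\ from the highest position) into the block $C^{[n]}_{[i]}$ of length $2$, then $C^{[n-2]}_{[i]}$, and so on down to the final block $C^{[n_0]}_{[i]}$ of length $n_0$. Consequently, for $j\in J\setminus\{n_0\}$, the block $C^{[j]}_{[i]}$ is precisely the length-$2$ prefix of the suffix obtained by deleting the first $n-j$ characters of $\gamma^{n,k}_{[i]}$, i.e.\ $C^{[j]}_{[i]}={\rm P}_2\bigl(A^{j-n}\gamma^{n,k}_{[i]}\bigr)$; and $C^{[n_0]}_{[i]}=A^{n_0-n}\gamma^{n,k}_{[i]}$ is the whole remaining suffix. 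This is the exact analogue of the identity $H^{[j]}_{[i]}={\rm P}_1(A^{j-n}h^{n,k}_{[i]})$ used in the proof of Lemma \ref{H-period}, and it is what lets me invoke Lemma \ref{gamma-power}.

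First I would prove (i). Fixing $j\in J\setminus\{n_0,n\}$, the entrywise identity above gives $C^{[j]}_{[0\cdot\cdot2^n-1]}={\rm P}_2\bigl(A^{j-n}\gamma^{n,k}\bigr)$ as a sequence, which is the first assertion. For the second, Lemma \ref{gamma-power} states that $A^{j-n}\gamma^{n,k}$ is a concatenation power of $\bigl(\gamma^{j,k-n+j}_{[0\cdot\cdot2^j-1]},\rho^{j,k-n+j}_{[0\cdot\cdot2^j-1]}\bigr)$, a sequence of length $2^{j+1}$; applying the length-$2$ prefix map ${\rm P}_2$ entrywise preserves being a concatenation power, so $C^{[j]}_{[0\cdot\cdot2^n-1]}$ is a concatenation power of its own initial block $C^{[j]}_{[0\cdot\cdot2^{j+1}-1]}$. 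Parts (ii) and (iii) then amount to identifying that fundamental block explicitly. Since $A^{j-n}\gamma^{n,k}$ has period $2^{j+1}$, its entry of index $i$ equals its entry of index $\mu(i,j)$, so I reduce to $\mu(i,j)\in[0,2^{j+1}-1]$. Writing the fundamental block as the concatenation $\bigl(\gamma^{j,k-n+j},\rho^{j,k-n+j}\bigr)$, an index $\mu(i,j)$ in $[0,2^j-1]$ lands in the $\gamma$-half and an index in $[2^j,2^{j+1}-1]$ lands in the $\rho$-half with local offset $\mu(i,j)-2^j$; applying ${\rm P}_2$ (resp.\ taking the whole suffix, for $j=n_0$) yields exactly the two cases displayed in (iii) (resp.\ (ii)).

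The only genuinely delicate point will be the bookkeeping at the bottom row $j=n_0$ for part (ii): here the block has length $n_0$ rather than $2$, so instead of ${\rm P}_2$ I take the entire remaining suffix $A^{n_0-n}\gamma^{n,k}_{[i]}$, and I must confirm that the fundamental period at this level is still governed by $\bigl(\gamma^{n_0,1},\rho^{n_0,1}\bigr)$ of length $2^{n_0+1}$. This is where I would lean on the base case $j=n-2$ of Lemma \ref{gamma-power} together with the initial-data equations (\ref{gamma-init})--(\ref{gamma-init1}) defining $\gamma^{n_0,1}$ and $\rho^{n_0,1}$, so that the two halves of the period are correctly identified as these explicit reflected/shifted reflected-Gray sequences rather than arbitrary length-$2^{n_0}$ blocks. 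Everything else is the routine transfer of a periodicity statement from full words to their fixed-length prefixes, exactly as in Lemma \ref{H-period}.
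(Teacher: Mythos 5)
Your proposal is correct and follows essentially the same route as the paper: identify $C^{[j]}_{[i]}={\rm P}_2\left(A^{j-n}\gamma^{n,k}_{[i]}\right)$ from the factorization (\ref{gamma-prod-C}), transfer the concatenation-power structure of Lemma \ref{gamma-power} through ${\rm P}_2$ applied entrywise, and reduce via $\mu(i,j)$ to the fundamental block of length $2^{j+1}$ split into its $\gamma$- and $\rho$-halves. One minor remark: for part (ii) the paper simply instantiates Lemma \ref{gamma-power} at $j=n_0$ (noting $k-n+n_0=1$), rather than leaning on the base case $j=n-2$ as you suggest; since that lemma is stated for all $j\in J\setminus\{n\}$, this direct instantiation is all that is needed and your argument goes through.
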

\begin{proof}
(i) Let $j\in J\setminus\{n_0,n\}$ and $i\in [0,2^{n-1}]$.
According to Eq. (\ref{gamma-prod-C}) we have 
$C^{[j]}_{[i]}={\rm P}_2\left(\left(C^{[n]}_{[i]}\cdots C^{[j+2)]}_{[i]}\right)^{-1}\gamma^{n,k}_{[i]}\right)$.

\noindent
Since the sequence
 $\left(C^{[n]}_{[i]},C^{[n-2]}_{[i]},\cdots, C^{[n-(n-j-2])}_{[i]}\right)$ has length $\frac{n-j-2}{2}+1$, 
we have
$\left | C^{[n]}_{[i]}\cdots C^{[j+2])}_{[i]}\right|=2\left(\frac{n-j-2}{2}+1\right)=n-j$, thus $C^{[j]}_{[i]}={\rm P}_2\left(A^{j-n}\gamma^{n,k}_{[i]}\right)$.
As a consequence, we obtain $ C^{[j]}_{[0\cdot\cdot 2^n-1]}
={\rm P}_2\left(A^{j-n}\gamma^{n,k}\right)$.

\noindent
As a direct consequence, we have   $C^{[j]}_{[0\cdots 2^{j+1}-1]}={\rm P}_2\left(A^{j-n}\gamma^{n,k}_{[0\cdot\cdot 2^{j+1}-1]}\right)$. 
According to Lemma \ref{gamma-power}, 
 the sequence $A^{j-n}\gamma^{n,k}$ is  a power of  $\left(\gamma^{j,k-n+j}_{[0\cdot\cdot2^{j}-1]},\rho^{j,k-n+j}_{[0\cdot\cdot2^{j}-1]}\right)$, whence
$ C^{[j]}_{[0\cdot\cdot 2^n-1]}={\rm P}_2\left(A^{j-n}\gamma^{n,k}\right)$ is a  power of ${\rm P}_2\left(\gamma^{j,k-n+j}_{[0\cdot\cdot2^{j}-1]},\rho^{j,k-n+j}_{[0\cdot\cdot2^{j}-1]}\right)$.
In addition, since the sequence $A^{j-n}\gamma^{n,k}_{[0\cdot\cdot 2^{j+1}-1]}$ has length $2^{j+1}$, we have 
$A^{j-n}\gamma^{n,k}_{[0\cdot\cdot 2^{j+1}-1]}=\left(\gamma^{j,k-n+j}_{[0\cdot\cdot 2^{j}-1]},\rho^{j,k-n+j}_{[0\cdot\cdot 2^{j}-1]}\right)$.

\noindent
This implies
$C^{[j]}_{[0\cdots 2^{j+1}-1]}={\rm P}_2\left(A^{j-n}\gamma^{n,k}_{[0\cdot\cdot 2^{j+1}-1]}  \right)={\rm P}_2\left(\gamma^{j,k-n+j}_{[0\cdot\cdot 2^{j}-1]},\rho^{j,k-n+j}_{[0\cdot\cdot 2^{j}-1]}\right) $.
Consequently,  ${\rm P}_2\left(A^{j-n}\gamma^{n,k}\right)$ is a power of
${\rm P}_2\left(\gamma^{j,k-n+j}_{[0\cdot\cdot 2^{j}-1]},\rho^{j,k-n+j}_{[0\cdot\cdot 2^{j}-1]}\right)$ that is, $ C^{[j]}_{[0\cdot\cdot 2^n-1]}$ is a concatenation power of $C^{[j]}_{[0\cdot\cdot 2^{j+1}-1]}$. 
This completes the proof of prop. (i). 

\smallbreak\noindent
(ii) According to Eq. (\ref{gamma-prod-C}), we have $C^{[n_0]}_{[0\cdots 2^{n_0+1}-1]}=\left(\gamma^{n_0,1}_{[0\cdots 2^{n_0}-1]},\rho^{n_0,1}_{[0\cdots 2^{n_0}-1]}\right)$.
By   taking $j=n_0\in J$ in the statement of Lemma \ref{gamma-power}, we observe that the sequence $A^{n_0-n}\gamma^{n,k}$ is  a power of $\left(\gamma^{n_0,1}_{[0\cdot\cdot 2^{n_0}-1]},\rho^{n_0,1}_{[0\cdot\cdot 2^{n_0}-1]}\right)$
(we have $k-n+n_0=1)$. 
Consequently, the condition  $\mu(i,n_0)\in [0,2^{n_0}-1]$, implies   $C^{[n_0]}_{[i]}= \gamma^{n_0,1}_{[\mu(i,n_0)]}$.
Similarly, the condition  $\mu(i,n_0)\in [2^{n_0},2^{n_0+1}-1]$, implies   $C^{[n_0]}_{[i]}= \rho^{n_0,1}_{[\mu(i,n_0)-2^{n_0}]}$.
This establishes  the prop. (ii) of Lemma \ref{C-period}.

\smallbreak\noindent
(iii) Let $i\in [0,2^n-1]$. 
According to the  prop. (i) of our lemma, $C^{[j]}_{[0\cdot\cdot2^{n}-1]}$ is  $2^{j+1}$-periodic. By the definition of $\mu(i,j)$ this implies $C^{[j]}_{[i]}=C^{[j]}_{[\mu(i,j)]}$.
In addition, $C^{[j]}_{[\mu(i,j)]}$ is the term of index $\mu(i,j)$ in the sequence ${\rm P}_2\left(\gamma^{j,k-n+j}_{[0\cdot\cdot2^{j}-1]}\rho^{j,k-n+j}_{[0\cdot\cdot2^{j}-1]}\right)$, hence the prop. (iii) holds.
\end{proof}

\bigbreak\noindent
According to Lemma \ref{C-period},  
given $i\in [0,2^{n_0+1}-1]$ the following equations hold :
\begin{eqnarray}
\label{Eq-C}
 C^{[n_0]}_{[i]}=\left\{ 
\begin{array}{l}
\gamma^{n_0,1}_{[i]}~~{\rm if}~~~i\in [0, 2^{n_0}-1]\\
\rho^{n_0,1}_{[i]} ~~{\rm if}~~~i\in [2^{n_0}, 2^{n_0+1}-1]
\end{array}
\right.
\end{eqnarray}
In particular, we have  $C^{[n_0]}_{[0]}=\gamma^{n_0,1}_{[0]}$.

\smallbreak\noindent
Let $i\in [0,2^n-1]$, $j\in J\setminus\{n,n_0\}$, and let
 $q'$, $r'$ be  the  unique integer pair st. $\mu(i,j)=q'2^{j-2}+r'$, with $r'\in [0,2^{j-2}-1]$.
It follows from $\mu(i,j)\in [0,2^{j+1}-1]$ that we have $q'\in [0,7]$, furthermore according to Eqs. (\ref{6})--(\ref{13}) each of the following identities holds:

\begin{equation}
\label{Equations-C1}
\begin{array} {c} 
~~ \gamma^{j,k-n+j}_{[r']}=\theta^{r'}(00)\gamma^{j-2,k-n+j-2}_{[r']}\\
~~ \gamma^{j,k-n+j}_{[2^{j-2}+r']}=\theta^{r'}(01)\rho^{j-2,k-n+j-2}_{[r']}\\
\gamma^{j,k-n+j}_{[2.2^{j-2}+r']}=\theta^{r'}(11)\gamma^{j-2,k-n+j-2}_{[r']}\\
\gamma^{j,k-n+j}_{[3.2^{j-2}+r']}=\theta^{r'}(10)\rho^{j-2,k-n+j-2}_{[r']}\\
\rho^{j,k-n+j}_{[4\cdot2^{j-2}+r']}=\theta^{r'}(10)\gamma^{j-2,k-n+j-2}_{[r']}\\
\rho^{j,k-n+j}_{[5\cdot 2^{j-2}+r']}=\theta^{r'}(11)\rho^{j-2,k-n+j-2}_{[r']}\\
\rho^{j,k-n+j}_{[6.2^{j-2}+r']}=\theta^{r'}(01)\gamma^{j-2,k-n+j-2}_{[r']}\\
\rho^{j,k-n+j}_{[7.2^{j-2}+r']}=\theta^{r'}(00)\rho^{j-2,k-n+j-2}_{[r']}.
\end{array}
\end{equation}
According to the prop. (iii) of  Lemma \ref{C-period}, we obtain the following equations:
\begin{equation}
\label{Equations-C}
\begin{array} {c} 
~~~~~~~~C^{[j]}_{[r']}=\theta^{r'}(00);\\
~~C^{[j]}_{[2^{j-2}+r']}=\theta^{r'}(01); \\
C^{[j]}_{[2\cdot 2^{j-2}+r']}=\theta^{r'}(11); \\
C^{[j]}_{[3\cdot 2^{j-2}+r']}=\theta^{r'}(10);\\
C^{[j]}_{[4\cdot 2^{j-2}+r']}=\theta^{r'}(10); \\
C^{[j]}_{[5\cdot 2^{j-2}+r']}=\theta^{r'}(11); \\
C^{[j]}_{[6\cdot 2^{j-2}+r']}=\theta^{r'}(01); \\
C^{[j]}_{[7\cdot 2^{j-2}+r']}=\theta^{r'}(00).
\end{array}
\end{equation}
In addition, for $j=n$,  the following identities holds:
\begin{equation}
\label{Equations-Cn}
\begin{array} {c} 
~~~~~~~~C^{[n]}_{[r']}=\theta^{r'}(00);\\
~~C^{[n]}_{[2^{n-2}+r']}=\theta^{r'}(01); \\
C^{[n]}_{[2\cdot 2^{n-2}+r']}=\theta^{r'}(11); \\
C^{[n]}_{[3\cdot 2^{n-2}+r']}=\theta^{r'}(10).
\end{array}
\end{equation}
Thanks to the prop. (i) of Lemma \ref{C-period}, the eqs. (\ref{Equations-C}), (\ref{Equations-Cn}) directly provide the value of the component $C^{[j]}_{[i]}$, for every $i\in [0,2^{n-1}]$.
However, from an algorithmic point of view, in order to compute the whole matrix $C$, all the words $C^{[j]}_{[i]}$ should be memorized, for all the integer pairs $i\in [0,2^{j+1}-1]$, $j\in J$.
In other words we are still a long way from our goal. 
In view of that, in  what follows  we shall deepen the structure of Eqs. (\ref{Equations-C}), (\ref{Equations-Cn}). 
First of all we note that, given some index $i\in [1,2^{n-1}]$, exactly one of the two following conds. occurs:

\medbreak
\noindent
{\underline {\it Condition $i\ne 0\bmod 2^{j+1}$}}\\
With the preceding notation, this condition corresponds to $\mu(i,j)\ge 1$,  thus  $r'\ge 1$. 
According to  Eqs. (\ref{Equations-C}), (\ref{Equations-Cn}), for every $j\in J\setminus\{n_0\}$ and for each integer pair $q'\in [0,7]$, $r'\in [0,2^{j-2}-1]$, 
a unique word $x\in A^2$ exists st. $C^{[j]}_{[q'^{j-2}+r']}=\theta^{r'}(x)$ and $C^{[j]}_{[q'^{j-2}+(r'-1)]}=\theta^{r'-1}(x)$, therefore the following equation holds:
\begin{eqnarray}
\label{Induction-1}
C^{[j]}_{[q'2^{j-2}+r']}=\theta\left(C^{[j]}_{[q'2^{j-2}+(r'-1)]}\right)~~(1\le r'\le 2^{j-2}-1).
\end{eqnarray}

\medbreak
\noindent
{\underline{\it Condition $i=0\bmod 2^{j+1}$}}\\
Actually, with this condition, which  is equivalent to $r'=0$, i.e. $\mu(i,j)=0$, we are not able to directly establish a formula similar to Eq. (\ref{Induction-1}).
For instance  in Eqs. (\ref{Equations-C}), on the first hand, by taking $q'=3$, we have  
$C^{[j]}_{[3\cdot 2^{j-2}-1]}=C^{[j]}_{[2\cdot 2^{j-2}+(2^{j-2}-1)]}=\theta^{2^{j-2}-1}(11)=00$ and $C^{[j]}_{[q'3\cdot 2^{j-2}]}=10$.
On the other hand, by taking $q'=6$, although still we have  
$C^{[j]}_{[6\cdot 2^{j-2}-1]}=C^{[j]}_{[5\cdot 2^{j-2}+(2^{j-2}-1)]}=\theta^{5^{j-2}-1}(11)=00$,  we have in fact $C^{[j]}_{[q'6^{j-2}]}=01$.
In other words, under current conditions, the formulas (\ref{Equations-C}) and (\ref{Equations-Cn})
 cannot provide sufficient information to express $C^{[j]}_{[q'2^{j-2}]}$ directly, starting with $C^{[j]}_{[q'2^{j-2}-1]}$.
\subsection{Some breakthrough  thanks to a new matrix}
In order to gather the missing information, we introduce a second matrix, namely $Q$. 
With the  above  notation, given an index pair $i\in [0,2^n-1]$, $j\in J$, we set $Q^{[j]}_{[i]}=q'$ that is, 
$\mu(i,j)=Q^{[j]}_{[i]}2^{j-2}+r'$.
The following equations 
come from  Eqs. (\ref{Equations-C}): 

\begin{equation}
\label{Equations-q-C}
\begin{array} {c} 
~~~~~~~~~~~~~~~~~  \left(Q^{[j]}_{[r']},C^{[j]}_{[r']}\right)=\left(0,\theta^{r'}(00)\right)\\
~~~\left(Q^{[j]}_{[2^{j-2}+r']},C^{[j]}_{[2^{j-2}+r']}\right)=\left(1,\theta^{r'}(01)\right) \\
\left(Q^{[j]}_{[2\cdot 2^{j-2}+r']},C^{[j]}_{[2\cdot 2^{j-1}+r']}\right)=\left(2,\theta^{r'}(11)\right)\\
~\left(Q^{[j]}_{[3\cdot 2^{j-2}+r']},C^{[j]}_{[3\cdot^{j-2}+r']}\right)=\left(3,\theta^{r'}(10)\right)\\
%
~~~~~~~~ \left(Q^{[j]}_{[2^{j}+r']},C^{[j]}_{[2^{j}+r']}\right)=\left(4,\theta^{r'}(10)\right)\\
\left(Q^{[j]}_{[5\cdot 2^{j-2}+r']},C^{[j]}_{[5\cdot 2^{j-2}+r']}\right)=\left(5,\theta^{r'}(11)\right)\\
\left(Q^{[j]}_{[6\cdot 2^{j-2}+r']},C^{[j]}_{[6\cdot 2^{j-2}+r']}\right)=\left(6,\theta^{r'}(01)\right)\\
\left(Q^{[j]}_{[7\cdot 2^{j-2}+r']},C^{[j]}_{[7\cdot 2^{j-2}+r']}\right)=\left(7,\theta^{r'}(00)\right).
\end{array}
\end{equation}

\medskip
\noindent
In addition, for $j=n$, the following equations come from Eqs.  (\ref{Equations-Cn}):

\begin{equation}
\label{Equations-q-Cn}
\begin{array} {c} 
~~~~~~~~~~~~~~~~   \left(Q^{[n]}_{[r']},C^{[n]}_{[r']}\right)=\left(0,\theta^{r'}(00)\right) \\
~~~\left(Q^{[n]}_{[2^{n-2}+r']},C^{[n]}_{[2^{n-2}+r']}\right)=\left(1,\theta^{r'}(01)\right) \\
~~~ \left(Q^{[n]}_{[2^{n-1}+r']},C^{[n]}_{[2^{n-1}+r']}\right)=\left(2,\theta^{r'}(10)\right) \\
\left(Q^{[n]}_{[3\cdot 2^{n-2}+r']},C^{[n]}_{[3\cdot 2^{n-2}+r']}\right)=\left(3,\theta^{r'}(11)\right).
\end{array}
\end{equation}

\medskip
\noindent
Regarding periodicity of the sequences, the following prop. comes from  Lemma \ref{C-period}:
\begin{lem}
\label{q-C-period}
For every $j\in J\setminus\{n_0\}$ the sequence $\left(Q^{[j]}_{[i]},C^{[j]}_{[i]}\right)_{0\le i\le 2^n-1}$ is $2^{j+1}$-periodic.
\end{lem}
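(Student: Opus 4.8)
The plan is to derive Lemma~\ref{q-C-period} directly from the periodicity already recorded in Lemma~\ref{C-period}, thinking of the paired sequence $\left(Q^{[j]}_{[i]},C^{[j]}_{[i]}\right)_{0\le i\le 2^n-1}$ as a single object whose two coordinates must both be shown to repeat with period $2^{j+1}$. First I would fix $j\in J\setminus\{n_0\}$ and recall that, by prop.~(i) of Lemma~\ref{C-period}, the sequence $C^{[j]}_{[0\cdot\cdot 2^n-1]}$ is a concatenation power of $C^{[j]}_{[0\cdot\cdot 2^{j+1}-1]}$; this is exactly the statement that the $C$-coordinate satisfies $C^{[j]}_{[i+2^{j+1}]}=C^{[j]}_{[i]}$ for all admissible $i$, i.e. it is $2^{j+1}$-periodic in the sense of the definition given in Section~\ref{A>3bis}. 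So the $C$-coordinate requires no new work.

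The remaining point is that the $Q$-coordinate is also $2^{j+1}$-periodic. Here I would invoke the definition $Q^{[j]}_{[i]}=q'$, where $\mu(i,j)=q'2^{j-2}+r'$ with $r'\in[0,2^{j-2}-1]$, together with the fact that $\mu(i,j)$ is the residue of $i$ modulo $2^{j+1}$. Since replacing $i$ by $i+2^{j+1}$ leaves $\mu(i,j)$ unchanged, it leaves both $q'$ and $r'$ unchanged, and hence $Q^{[j]}_{[i+2^{j+1}]}=Q^{[j]}_{[i]}$ whenever $i+2^{j+1}\le 2^n-1$. Thus $Q^{[j]}_{[0\cdot\cdot 2^n-1]}$ is $2^{j+1}$-periodic for the same structural reason as $C$: both coordinates depend on $i$ only through $\mu(i,j)$.

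The cleanest way to package this is to observe, via Eqs.~(\ref{Equations-q-C}) (and Eqs.~(\ref{Equations-q-Cn}) for the case $j=n$), that each pair $\left(Q^{[j]}_{[i]},C^{[j]}_{[i]}\right)$ with $i\in[0,2^{j+1}-1]$ is determined by the data $(q',r')$ coming from the decomposition $\mu(i,j)=q'2^{j-2}+r'$; consequently the finite sequence $\left(Q^{[j]}_{[i]},C^{[j]}_{[i]}\right)_{0\le i\le 2^n-1}$ is the concatenation power $\left(\left(Q^{[j]}_{[i]},C^{[j]}_{[i]}\right)_{0\le i\le 2^{j+1}-1}\right)^{2^{n-j-1}}$, which establishes the asserted periodicity. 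I expect no real obstacle here: the only care needed is bookkeeping, namely checking that the length $2^n$ is an integer multiple of the block length $2^{j+1}$ (true since $j\le n$ and all quantities are powers of two) so that the periodicity condition holds on the full range $i\in[0,2^n-1-2^{j+1}]$, and treating $j=n$ from Eqs.~(\ref{Equations-q-Cn}) exactly as the generic case. The lemma then follows at once for every $j\in J\setminus\{n_0\}$.
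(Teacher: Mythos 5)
Your proposal is correct and takes essentially the same route as the paper: the paper likewise handles the two coordinates separately, getting $C^{[j]}$ from prop.~(i) of Lemma~\ref{C-period} and $Q^{[j]}$ from the fact that $\mu(i,j)$, hence its Euclidean quotient by $2^{j-2}$, depends on $i$ only modulo $2^{j+1}$. The one slip is your bookkeeping claim that $2^n$ is an integer multiple of $2^{j+1}$ for all $j\le n$ --- this fails for $j=n$ (and Lemma~\ref{C-period}(i) also excludes $j=n$), but the case is vacuously fine because the paper's definition of $c$-periodicity covers any sequence of length at most $c$, a boundary case the paper's own proof glosses over in exactly the same way.
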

\begin{proof}
Let $i_1,i_2\in [0,2^n-1]$ st. $i_1-i_2=0\bmod 2^{j+1}$.
By definition the integers $\mu(i_1,j),\mu(i_2,j)\in [0,2^{j+1}-1]$ are equal, whence $Q^{[j]}_{[i_1]}$, $Q^{[j]}_{[i_2]}$, their corresponding euclidian quotients by $2^{j-2}$ are equal.  
In addition, according to prop. (i) of Lemma \ref{C-period}, the sequence $C^{[j]}_{[0\cdot\cdot2^n-1]}$ is $2^{j+1}$-periodic, therefore we have $C^{[j]}_{[i_1]}=C^{[j]}_{[i_2]}$.
\end{proof}
\bigbreak
\noindent
In view of  Eqs. (\ref{Equations-q-C}), (\ref{Equations-q-Cn}), and Lemma \ref{q-C-period}, we introduce  the following $8$-element cycle: 
$$\pi=\left(\left(0,00\right), \left(1,01\right),\left(2,11\right),\left(3,10\right),\left(4,10\right),\left(5,11\right),\left(6,01\right),\left(7,00\right)\right)$$
that is,   $\pi(\left(0,00\right)=\left(1,01\right), \pi\left(1,01\right)=\left(2,11\right), \cdots, \pi\left(6,01\right)=\left(7,00\right), \pi\left(7,00\right)=(\left(0,00\right) $.

\bigbreak\noindent
Recall that we have $\mu(i,j)=Q^{[j]}_{[i]}2^{j-2}+r'$.
\begin{lem}\label{C}
With the preceding notation, for every integer pair $i\in [1,2^{n}-1]$, $j\in J\setminus\{n_0\}$, the following  equation holds:
\begin{eqnarray}
\left(Q^{[j]}_{[i]}, C^{[j]}_{[i]}\right)=
\begin{cases}
\left(Q^{[j]}_{[i-1]},\theta\left(C^{[j]}_{[i-1]}\right)\right)~~{\rm if}~~r'\ne 0,\\
\pi\left(Q^{[j]}_{[i-2^{j-2}]}, C^{[j]}_{[i-2^{j-2}]}\right)~~{\rm otherwise.}
\end{cases}
\nonumber
\end{eqnarray}
\end{lem}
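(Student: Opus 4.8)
The plan is to treat the two branches of the formula separately. In both I would track how the reduced index $\mu(i,j)$ — hence the quotient $Q^{[j]}_{[i]}$ and the remainder $r'$ in $\mu(i,j)=Q^{[j]}_{[i]}2^{j-2}+r'$ — evolves when $i$ is decreased, and then read off the effect on the pair $\left(Q^{[j]}_{[i]},C^{[j]}_{[i]}\right)$ using the defining equations (\ref{Equations-C}), (\ref{Equations-Cn}) and the $2^{j+1}$-periodicity from Lemma \ref{q-C-period}. The guiding remark is that, by the very definition of $\mu$, both $Q^{[j]}_{[i]}$ and $C^{[j]}_{[i]}$ depend on $i$ only through $\mu(i,j)$, so every identity can be verified on the single period $[0,2^{j+1}-1]$.

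First I would dispatch the case $r'\neq 0$. Here $\mu(i,j)=Q^{[j]}_{[i]}2^{j-2}+r'\ge 1$, so $i\not\equiv 0\bmod 2^{j+1}$ and therefore $\mu(i-1,j)=\mu(i,j)-1=Q^{[j]}_{[i]}2^{j-2}+(r'-1)$ with $r'-1\in[0,2^{j-2}-1]$. This at once gives $Q^{[j]}_{[i-1]}=Q^{[j]}_{[i]}$, matching the first coordinate. For the second coordinate I would apply Eq. (\ref{Induction-1}) at $q'=Q^{[j]}_{[i]}$, namely $C^{[j]}_{[q'2^{j-2}+r']}=\theta\left(C^{[j]}_{[q'2^{j-2}+(r'-1)]}\right)$, and transport both sides back from the period via Lemma \ref{q-C-period}; this yields $C^{[j]}_{[i]}=\theta\left(C^{[j]}_{[i-1]}\right)$, which is exactly the first branch.

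Next I would treat $r'=0$, i.e. $\mu(i,j)=q'2^{j-2}$ with $q'=Q^{[j]}_{[i]}$. Computing the shifted index, $i-2^{j-2}\equiv(q'-1)2^{j-2}\bmod 2^{j+1}$, so $\mu(i-2^{j-2},j)=(q'-1)2^{j-2}$ when $q'\ge 1$ and $\mu(i-2^{j-2},j)=7\cdot 2^{j-2}$ when $q'=0$. In either case the remainder of $i-2^{j-2}$ is again $0$ and $Q^{[j]}_{[i-2^{j-2}]}$ equals $q'-1$ reduced modulo $8$. Reading the defining equations at remainder $0$ — the $Q$-value being the quotient and the $C$-value coming from (\ref{Equations-C}), resp. (\ref{Equations-Cn}) when $j=n$ — shows that both $\left(Q^{[j]}_{[i]},C^{[j]}_{[i]}\right)$ and $\left(Q^{[j]}_{[i-2^{j-2}]},C^{[j]}_{[i-2^{j-2}]}\right)$ are the elements of the $8$-cycle $\pi$ indexed by their respective $Q$-values; since $\pi$ advances that index by one, it maps the latter pair onto the former, which is the second branch.

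The delicate points are the modular bookkeeping for $\mu$ — verifying that the step $i\mapsto i-2^{j-2}$ produces the clean wraparound $q'=0\mapsto q'=7$ encoded by $\pi\left(7,00\right)=\left(0,00\right)$, and that $i-2^{j-2}\ge 0$ in all cases — together with reconciling the top row $j=n$, which is ruled by the four-line equation (\ref{Equations-Cn}) rather than the eight-line (\ref{Equations-C}) and for which only $q'\in[0,3]$ occurs. The observation that makes the argument uniform is that, at remainder $0$, the pairs produced by the defining equations form exactly an initial segment of the cycle $\pi$, so a single cyclic step of $\pi$ realizes the passage from $i-2^{j-2}$ to $i$ in every case.
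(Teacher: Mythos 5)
Your proof is correct and takes essentially the same route as the paper's: reduction to the fundamental period via the $2^{j+1}$-periodicity of Lemma \ref{q-C-period}, reading the pairs at remainder $0$ off Eqs.\ (\ref{Equations-q-C})/(\ref{Equations-q-Cn}) as consecutive elements of the cycle $\pi$ (with the four-line case $j=n$ being an initial segment of $\pi$), and a separate treatment of the wraparound $q'=0\mapsto 7$, which is the paper's subcase $i=0\bmod 2^{j+1}$. Your only deviation is organizational: by computing $\mu(i-2^{j-2},j)$ directly modulo $2^{j+1}$ you merge the paper's two subcases $i\in[1,2^{j+1}-1]$ and $i>2^{j+1}$, $i\ne 0\bmod 2^{j+1}$ into a single step, a harmless streamlining.
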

\begin{proof}
Let $j\in J\setminus\{n_0\}$.
According to the value of $r'\in [0,2^{j}-1]$ exactly one of the two following conditions occurs:

\medbreak
\noindent
(i) {\underline{\it Condition  $r'\ne 0$}}\\
According to  Eqs.  (\ref{Equations-q-C}), (\ref{Equations-q-Cn}), 
we obtain  $\left(Q^{[j]}_{[\mu(i,j)]},C^{[j]}_{[\mu(i,j)]}\right)=\left(Q^{[j]}_{[\mu(i,j)-1]},\theta\left(C^{[j]}_{[\mu(i,j)-1]}\right)\right)=\left(Q^{[j]}_{[\mu(i-1,j)]}\theta\left(C^{[j]}_{[\mu(i-1,j)]}\right)\right)$.
According to Lemma \ref{q-C-period}, we  have $C^{[j]}_{[i]}=C^{[j]}_{[\mu(i,j)]}$ and $C^{[j]}_{[i-1]}=C^{[j]}_{[\mu(i-1,j)]}$:
this implies  $\left(Q^{[j]}_{[i]}, C^{[j]}_{[i]}\right)=\left(Q^{[j]}_{[i-1]},\theta\left(C^{[j]}_{[i-1]}\right)\right)$.

\bigbreak\noindent
(ii) {\underline {\it Condition  $r'=0$}}\\
According to the value of the index $i$, exactly one of the three following cases occurs:

\medbreak (ii.i) {\underline {\it The case where we have  $i\in [1,2^{j+1}-1]$}}

According to Eqs.  (\ref{Equations-q-C}) each of the following identities hold (in the case where we have $j=n$, only the first four hold):
\begin{equation}
\label{Equations-q-C-r'=0}
\begin{array} {c} 
~~~~~\left(Q^{[j]}_{[2^{j-2}]},C^{[j]}_{[2^{j-2}]}\right)=\left(1,01\right) \\
~\left(Q^{[j]}_{[2\cdot 2^{j-2}]},C^{[j]}_{[2\cdot 2^{j-1}]}\right)=\left(2,11\right)\\ 
~~\left(Q^{[j]}_{[3\cdot 2^{j-2}]},C^{[j]}_{[3\cdot^{j-2}]}\right)=\left(3,10\right)\\
~~~~~~~~~~\left(Q^{[j]}_{[2^{j}]},C^{[j]}_{[2^{j}]}\right)=\left(4,10\right)\\
~\left(Q^{[j]}_{[5\cdot 2^{j-2}]},C^{[j]}_{[5\cdot 2^{j-2}]}\right)=\left(5,11\right)\\
~\left(Q^{[j]}_{[6\cdot 2^{j-2}]},C^{[j]}_{[6\cdot 2^{j-2}]}\right)=\left(6,01\right)\\
~\left(Q^{[j]}_{[7\cdot 2^{j-2}]},C^{[j]}_{[7\cdot 2^{j-2}]}\right)=\left(7,00\right).
\end{array}
\end{equation}
According to Eqs. (\ref{Equations-q-C-r'=0}), for every $q'\in [1,7]$ we have:
 $$\left(Q^{[j]}_{[q'\cdot 2^{j-2}]},C^{[j]}_{[q'\cdot 2^{j-2}]}\right)=\pi\left(Q^{[j]}_{[(q'-1)\cdot 2^{j-2}]},C^{[j]}_{[(q'-1)\cdot 2^{j-2}]}\right).$$
On the other hand, by definition $i\in [1,2^{j+1}-1]$ implies $i=\mu(i,j)=Q^{[j]}_{[i]}\cdot 2^{j-2}+r'=q'\cdot 2^{j-2}+r'=q'\cdot 2^{j-2}$,
thus $i-2^{j-2}=(q'-1)2^{j-2}$.
We obtain:
$$\left(Q^{[j]}_{[i]},C^{[j]}_{[i]}\right)=\left(Q^{[j]}_{[q'\cdot 2^{j-2}]},C^{[j]}_{[q'\cdot 2^{j-2}]}\right)=
\pi\left(Q^{[j]}_{[i-2^{j-2}]},C^{[j]}_{[i-2^{j-2}]}\right).$$

(ii.ii) {\underline {\it  The case where $i=0\bmod 2^{j+1}$}}

On the one hand, according to Lemma \ref{q-C-period}, we have 
 $\left(Q^{[j]}_{[i]},C^{[j]}_{[i]}\right)=
\left(Q^{[j]}_{[0]},C^{[j]}_{[0]}\right)=(0,00)$.
On the other hand, we have  $i-2^{j-2}=-2^{j-2}\bmod 2^{j+1}=2^{j+1}-2^{j-2}\bmod 2^{j+1}$,
thus $i-2^{j-2}=8\cdot 2^{j-2}-2^{j-2}\bmod 2^{j+1}=7\bmod 2^{j+1}$.
According to Lemma \ref{q-C-period} and Eqs. (\ref{Equations-q-C-r'=0}), we obtain $\left(Q^{[j]}_{[i-2^{j-2}]},C^{[j]}_{[i-2^{j-2}]}\right)=\left(Q^{[j]}_{[7\cdot 2^{j-2}]},C^{[j]}_{[7\cdot 2^{j-2}]}\right)=(7,00)$: 
once more we have  $\left(Q^{[j]}_{[i]},C^{[j]}_{[i]}\right)=\pi\left(Q^{[j]}_{[i-2^{j-2}]},C^{[j]}_{[i-2^{j-2}]}\right)$.

\medbreak
(ii.iii) {\underline {\it The case where $i\in [2^{j+1}+1,2^{n}-1]$, with $i\ne 0\bmod 2^{j+1}$}}

By definition we have $\mu(i,j)\in [1,2^{j+1}-1]$.
On the one hand, by substituting  $\mu(i,j)$ to $i$ in the preceding case (ii.i), we obtain:
$\left(Q^{[j]}_{[\mu(i,j)]},C^{[j]}_{[\mu(i,j)]}\right)=\left(Q^{[j]}_{[\mu(i,j)-2^{j-2}]},C^{[j]}_{[\mu(i,j)-2^{j-2}]}\right)$.
On the other hand, by the definition of $\mu$ we have $i=q\cdot 2^{j+1}+\mu(i,j)$, thus $i-2^{j-2}=q\cdot 2^{j+1}+\left(\mu(i,j)-2^{j-2}\right)$.
Since we have  $r'=0$ and $\mu(i,j)\ge 1$ a positive integer $q'\ge1$ exists st. $\mu(i,j)=q'\cdot2^{j-2}$.
It follows from  $\mu(i,j)-2^{j-2}=(q'-1)2^{j-2} \ge 0$ and $\mu(i,j)-2^{j-2}\le 2^{j+1}-1$ that $\mu(i,j)-2^{j-2}\in [0,2^{j+1}-1]$.
By the definition of $\mu$ we obtain $\mu(i-2^{j-2},j)= \mu(i,j)-2^{j-2}$, thus $\left(Q^{[j]}_{[\mu(i,j)-2^{j-2}]},C^{[j]}_{[\mu(i,j)-2^{j-2}]}\right)=\left(Q^{[j]}_{[\mu(i-2^{j-2},j)]},C^{[j]}_{[\mu(i-2^{j-2},j)]}\right)$.
According to Lemma \ref{q-C-period}, we  obtain:
 $$\left(Q^{[j]}_{[i]},C^{[j]}_{[i]}\right)= \left(Q^{[j]}_{[\mu(i,j)]},C^{[j]}_{[\mu(i,j)]}\right)=
\left(Q^{[j]}_{[\mu(i-2^{j-2},j)]},C^{[j]}_{[\mu(i-2^{j-2},j)]}\right)= \pi\left(Q^{[j]}_{[i-2^{j-2}]},C^{[j]}_{[i-2^{j-2}]}\right).$$
This completes the proof of Lemma \ref{C}.
\end{proof}
\bigbreak\noindent

Lemma \ref{C} shows that the condition $r'=0$ plays a prominent part in the computation of $\left(Q^{[j]}_{[i]}, C^{[j]}_{[i]}\right)$.
The following result provides some precision:
\begin{lem}
\label{mu-ou-i}
With the preceding notation, the three following conditions are equivalent:

{\rm (i)}  $r'=0$;

{\rm (ii)} $\mu(i,j)=0\bmod 2^{j-2}$;

{\rm (iii)} $i=0\bmod 2^{j-2}$.
\end{lem}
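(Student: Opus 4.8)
The plan is to prove the three-way equivalence by splitting it into the two links (i)$\Leftrightarrow$(ii) and (ii)$\Leftrightarrow$(iii), each of which reduces to an elementary divisibility observation. The whole argument parallels that of Lemma \ref{r(i,j)-i}, the analogous statement established for the case $|A|\ge 3$, so I expect it to be short and purely arithmetic.

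First I would settle (i)$\Leftrightarrow$(ii). Recall that $r'$ was introduced precisely as the remainder in the Euclidean division $\mu(i,j)=Q^{[j]}_{[i]}\,2^{j-2}+r'$ with $0\le r'\le 2^{j-2}-1$. Hence $r'$ is nothing but the residue of $\mu(i,j)$ modulo $2^{j-2}$, and therefore $r'=0$ holds if and only if $\mu(i,j)=0\bmod 2^{j-2}$. No computation is needed beyond recalling the definition of $r'$.

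Next I would prove (ii)$\Leftrightarrow$(iii). By the very definition of $\mu(i,j)$ we have $i=\mu(i,j)\bmod 2^{j+1}$, i.e. $2^{j+1}$ divides $i-\mu(i,j)$. Since $j\in J\setminus\{n_0\}$ forces $j\ge n_0+2$ and $n_0\ge 2$, we have $j-2\le j+1$, so $2^{j-2}$ divides $2^{j+1}$ and a fortiori divides $i-\mu(i,j)$; that is, $i=\mu(i,j)\bmod 2^{j-2}$. Reducing this congruence immediately gives that $i=0\bmod 2^{j-2}$ holds if and only if $\mu(i,j)=0\bmod 2^{j-2}$, which is condition (ii). Chaining the two equivalences then yields (i)$\Leftrightarrow$(ii)$\Leftrightarrow$(iii).

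I do not anticipate a genuine obstacle. The only point worth flagging is the divisibility $2^{j-2}\mid 2^{j+1}$ used in the second link, which relies on the range of $j$; this is guaranteed by $n_0\ge 2$, so the argument is valid for every $j\in J\setminus\{n_0\}$, including $j=n$ (where $\mu(i,n)=i$ and the statement becomes a tautology). Everything else is a routine manipulation of congruences of exactly the same kind already carried out in Lemma \ref{r(i,j)-i}.
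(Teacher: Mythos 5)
Your proof is correct and takes essentially the same route as the paper: (i)$\Leftrightarrow$(ii) is read off from the Euclidean division $\mu(i,j)=Q^{[j]}_{[i]}2^{j-2}+r'$ with $r'\in[0,2^{j-2}-1]$, and (ii)$\Leftrightarrow$(iii) from the congruence $i=\mu(i,j)\bmod 2^{j+1}$ combined with $2^{j+1}=8\cdot 2^{j-2}$, which the paper writes out via explicit integers $m,m'$ where you invoke divisibility of the moduli. One small nit: the inequality $j-2\le j+1$ you flag is vacuously true; what the divisibility $2^{j-2}\mid 2^{j+1}$ actually requires is $j-2\ge 0$, which indeed holds since $j\ge n_0+2\ge 4$ for every $j\in J\setminus\{n_0\}$.
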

\begin{proof}
It follows from $\mu(i,j)=Q^{[j]}_{[i]}2^{j-2}+r'$, with $r'\in [0,2^{j-2}-1]$, that  the two conditions $\mu(i,j)=0\bmod 2^{j-2}$ and $r'=0$ are equivalent.
%
According to the definition of $\mu$,
some integer $m\in{\mathbb N}$ exists st. $i=\mu(i,j)+m 2^{j+1}$.
As a consequence, if we have $i=0\bmod 2^{j-2}$, some integer $m'\in{\mathbb N}$ exists st. $\mu(i,j)+m 2^{j+1}=m'2^{j-2}$, thus $\mu(i,j)=(m'-8m)2^{j-2}$ that is,  $\mu(i,j)=0\bmod 2^{j-2}$.
Conversely if $m'\in{\mathbb N}$ exists st. $\mu(i,j)=m' 2^{j-2}$, we obtain $i=m'2^{j-2}+m2^{j+1}$ that is, $i=(m'+8m)2^{j-2}$, thus  $i=0\bmod 2^{j-2}$.
\end{proof}
\subsection{The loopless algorithm}
Nevertheless, we have not fully achieved our objective: indeed, in  the statement of Lemma \ref{C}, in order to compute  $\left(Q^{[j]}_{[i]}, C^{[j]}_{[i]}\right)$, 
the condition $i=0\bmod 2^{j-2}$ imposes to memorize the component $\left(Q^{[j]}_{[i-2^{j-2}]}, C^{[j]}_{[i-2^{j-2}]}\right)$.
Our goal is to prove that such a computation can be actually done by only referring to the pair $\left(Q^{[j]}_{[i-1]}, C^{[j]}_{[i-1]}\right)$.
In order to do so  we need to introduce some additional concept: 
we denote by $\phi$ be the partial mapping onto $[0,7]\times A^2$ defined  by $\phi\left(q',\theta^{-1}(c)\right)=\pi\left(q',c\right)$, for each pair $(q',c)$ in the cycle $\pi$.
By definition  $\phi$ takes the following values:

\begin{equation}
\label{phi}
\arraycolsep=5pt
\begin{array} {|l|*{9}{r@{}l|}}\hline
(q',c) && \left(0,11\right) & & \left(1 ,10\right) & & \left(2,00\right) & & \left(3,01\right) & & \left(4,01\right) & & \left(5,00\right) & &\left(6,10\right) & &\left(7,11\right) \\ \hline
\phi(q',c) &&  \left(1,01\right) & & \left(2,11\right) & & \left(3,10 \right) & &\left(4,10\right) & &\left(5,11\right) & & \left(6,01\right) & & \left(7,00\right)& & \left(0,00\right) \\ \hline 
\end{array}
\end{equation}

\bigbreak
\noindent
The following property is the basis to an iterative algorithm to compute the whole sequence $\gamma^{n,k}$:
\begin{proposit}
\label{CC}
With the preceding notation, for every $i\in [1,2^{n}-1]$, $j\in J\setminus\{n_0\}$, the following  identity holds:
\begin{eqnarray}
\left(Q^{[j]}_{[i]}, C^{[j]}_{[i]}\right)=
\begin{cases}
\left(Q^{[j]}_{[i-1]},\theta\left(C^{[j]}_{[i-1]}\right)\right)~~{\rm if}~~i\ne 0\bmod 2^{j-2}\\
\phi\left(Q^{[j]}_{[i-1]}, C^{[j]}_{[i-1]}\right)~~ {\rm otherwise.}
\end{cases}
\nonumber
\end{eqnarray}
\end{proposit}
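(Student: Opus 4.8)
The plan is to obtain the statement directly from Lemma~\ref{C}, after using Lemma~\ref{mu-ou-i} to rewrite its case split. Lemma~\ref{mu-ou-i} tells us that the condition $r'\ne 0$ is equivalent to $i\ne 0\bmod 2^{j-2}$; hence in the first case the formula $\left(Q^{[j]}_{[i]},C^{[j]}_{[i]}\right)=\left(Q^{[j]}_{[i-1]},\theta\left(C^{[j]}_{[i-1]}\right)\right)$ provided by Lemma~\ref{C} is already the first line of the claim, and nothing remains to prove there. All the work is concentrated in the case $i=0\bmod 2^{j-2}$ (that is, $r'=0$), where Lemma~\ref{C} yields $\left(Q^{[j]}_{[i]},C^{[j]}_{[i]}\right)=\pi\left(Q^{[j]}_{[i-2^{j-2}]},C^{[j]}_{[i-2^{j-2}]}\right)$, and I must show that this right-hand side equals $\phi\left(Q^{[j]}_{[i-1]},C^{[j]}_{[i-1]}\right)$.

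To do this I would locate the two indices $i-1$ and $i-2^{j-2}$ inside the current period of length $2^{j+1}$. Setting $q'=Q^{[j]}_{[i]}$ and using $r'=0$, the index $i$ occupies the position $\mu(i,j)=q'2^{j-2}$, so $i-2^{j-2}$ occupies the base position $(q'-1)2^{j-2}$ of the previous block and $i-1$ occupies its last position $(q'-1)2^{j-2}+(2^{j-2}-1)$, all residues being read modulo $2^{j+1}$ (so $q'-1$ is taken modulo $8$ when $q'=0$). By Lemma~\ref{q-C-period} the pairs $\left(Q^{[j]}_{[\cdot]},C^{[j]}_{[\cdot]}\right)$ depend only on these residues, and Eqs.~(\ref{Equations-q-C}),~(\ref{Equations-q-Cn}) read off the values $Q^{[j]}_{[i-1]}=Q^{[j]}_{[i-2^{j-2}]}=q'-1$, while $C^{[j]}_{[i-2^{j-2}]}=c$ and $C^{[j]}_{[i-1]}=\theta^{2^{j-2}-1}(c)$, where $c$ is the base character of block $q'-1$, so that $(q'-1,c)$ is a node of the cycle $\pi$. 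The decisive observation is a parity one: since $j\in J\setminus\{n_0\}$ with $n_0\ge 2$, we have $j\ge n_0+2\ge 4$, hence $2^{j-2}$ is even and $2^{j-2}-1$ is odd; as $\theta$ is an involution on the binary alphabet, $\theta^{2^{j-2}-1}=\theta=\theta^{-1}$. Therefore $C^{[j]}_{[i-1]}=\theta^{-1}(c)$, which is exactly the input shape required by the definition $\phi\left(q',\theta^{-1}(c)\right)=\pi(q',c)$. Applying it to the node $(q'-1,c)$ gives $\phi\left(Q^{[j]}_{[i-1]},C^{[j]}_{[i-1]}\right)=\phi\left(q'-1,\theta^{-1}(c)\right)=\pi(q'-1,c)=\pi\left(Q^{[j]}_{[i-2^{j-2}]},C^{[j]}_{[i-2^{j-2}]}\right)$, which by Lemma~\ref{C} equals $\left(Q^{[j]}_{[i]},C^{[j]}_{[i]}\right)$, closing the argument.

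I expect the main obstacle to be bookkeeping rather than conceptual depth: one must check that the block label $q'-1$ and the last-position description survive reduction modulo $2^{j+1}$ in the boundary case $q'=0$ (i.e. $i=0\bmod 2^{j+1}$), where both $i-1$ and $i-2^{j-2}$ wrap into block~$7$ and the relevant transition is the closing arrow $\pi(7,00)=(0,00)$ of the cycle; here Lemma~\ref{q-C-period} guarantees the wrap is harmless. A second small point to verify is the case $j=n$, for which $\mu(i,n)=i$ and only blocks $q'\in\{0,1,2,3\}$ occur, so that $\pi$ is used only along its first four nodes and no wrap arises for $i\in[1,2^n-1]$. The conceptual content is simply that $\phi$ has been engineered to read the \emph{last} character of a block, namely $\theta^{-1}$ of its base character courtesy of the odd exponent $2^{j-2}-1$, and thereby to reproduce $\pi$ from data available at index $i-1$ alone.
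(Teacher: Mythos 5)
Your proof is correct, and it is a genuine reorganization of the paper's argument rather than a replica of it. After the common first step (using Lemma~\ref{mu-ou-i} to identify the case split of Lemma~\ref{C} with the one in the statement, which settles the case $i\ne 0\bmod 2^{j-2}$ immediately), the paper does \emph{not} route through the $\pi$-formula of Lemma~\ref{C} again: it re-runs that lemma's three-case analysis (first $i\in[1,2^{j+1}-1]$, then the wrap case $i=0\bmod 2^{j+1}$, then the general case reduced via Lemma~\ref{q-C-period}), checking the seven transitions $\left(Q^{[j]}_{[q'2^{j-2}]},C^{[j]}_{[q'2^{j-2}]}\right)=\phi\left(Q^{[j]}_{[q'2^{j-2}-1]},C^{[j]}_{[q'2^{j-2}-1]}\right)$ one by one against the explicit table (\ref{Equations-q-C1}) and leaving the decisive computation $C^{[j]}_{[q'2^{j-2}-1]}=\theta^{2^{j-2}-1}(x)=\theta(x)$ implicit behind ``straightforward to verify'' (it surfaces only in the wrap case, where $\left(7,\theta^{2^{j-2}-1}(00)\right)=(7,11)$ is computed). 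You instead take Lemma~\ref{C} as a black box and prove the single bridging identity $\pi\left(Q^{[j]}_{[i-2^{j-2}]},C^{[j]}_{[i-2^{j-2}]}\right)=\phi\left(Q^{[j]}_{[i-1]},C^{[j]}_{[i-1]}\right)$, which Lemma~\ref{q-C-period} reduces to one residue computation inside a period: both indices land in block $q'-1$, their $Q$-values agree, and the parity observation that $2^{j-2}-1$ is odd (valid because $j\ge n_0+2\ge 4$) together with the involutivity of $\theta$ on the binary alphabet yields $C^{[j]}_{[i-1]}=\theta^{-1}\left(C^{[j]}_{[i-2^{j-2}]}\right)$, exactly the input shape in the definition of $\phi$. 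What your route buys is uniformity and transparency: one computation, plus the $q'=0$ wrap through the closing arrow $\pi(7,00)=(0,00)$ and the $j=n$ restriction to the first four nodes of $\pi$ (both of which you flag and handle correctly), replaces the paper's enumerations, and it isolates the conceptual point that $\phi$ reads $\theta^{-1}$ of a block's base character precisely because a block has even length $2^{j-2}$. One small remark: for $j=n$ your base values follow Eqs.~(\ref{Equations-Cn}) (blocks $2,3$ carrying $11,10$), which is the reading consistent with the nodes of $\pi$ and with the paper's own proof; the printed Eqs.~(\ref{Equations-q-Cn}) swap these two values, an apparent typo that your argument rightly ignores.
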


\begin{proof}
According to  Lemmas \ref{C}, \ref{mu-ou-i} we restrain to the case where we have $i=0\bmod 2^{j-2}$ that is, with the preceding notation,
$r'=0$ and  $\mu(i,j)=q'\cdot  2^{j-2}$. 
As in proof of Lemma \ref{q-C-period} exactly one of the three following conditions holds:
\medbreak
\noindent
(i) {\underline {\it The case where we have $i\in [1, 2^{j+1}-1]$}}\\
With this condition we have $ i=\mu(i,j)\ge 1$. 
Firstly, we assume $j<n$. According to Eqs. (\ref{Equations-q-C-r'=0}), and by the definition of $\phi$
each of the following identities holds: 
\begin{equation}
\begin{array}{c}
\label{Equations-q-C1}%
~~~~\left(Q^{[j]}_{[2^{j-2}]},C^{[j]}_{[2^{j-2}]}\right)=\left(1,01\right)=\pi(0,00)=\phi\left(0,11\right)
\\
 ~\left(Q^{[j]}_{[2\cdot 2^{j-2}]},C^{[j]}_{[2\cdot2^{j-1}]}\right)=\left(2,11\right)=\pi(1,01)=\phi\left(1,10\right)
\\
~\left(Q^{[j]}_{[3\cdot 2^{j-2}}],C^{[j]}_{[3\cdot 2^{j-2}]}\right)=\left(3,10\right)=\pi(1,11)=\phi\left(2,00\right)\\
~\left(Q^{[j]}_{[4\cdot 2^{j-2}]},C^{[j]}_{[4\cdot 2^{j-2}]}\right)=\left(4,10\right)=\pi(3,10)=\phi\left(3,01\right)\\
~\left(Q^{[j]}_{[5\cdot 2^{j-2}]},C^{[j]}_{[5\cdot 2^{j-2}]}\right)=\left(5,11\right)=\pi(4,10)=\phi\left(4,01\right)\\
~\left(Q^{[j]}_{[6\cdot 2^{j-2}]},C^{[j]}_{[6\cdot 2^{j-2}]}\right)=\left(6,01\right)=\pi(5,11)=\phi\left(5,00\right)\\
~\left(Q^{[j]}_{[7\cdot 2^{j-2}]},C^{[j]}_{[7\cdot 2^{j-2}]}\right)=\left(7,00\right)=\pi(6,01)=\phi\left(6,10\right).
\end{array}
\end{equation}
In the case where we have $j=n$, only the first four  equations hold.
It is straightforward to verify that , for each $q'\in [1,7]$, we have, 
$\left(Q^{[j]}_{[q'2^{j-2}]},C^{[j]}_{[q'2^{j-2}]}\right)=\phi\left(Q^{[j]}_{[q'2^{j-2}-1)]},C^{[j]}_{[q'2^{j-2}-1)]}\right)$ that is,
$\left(Q^{[j]}_{i},C^{[j]}_{[i]}\right)=\phi\left(Q^{[j]}_{[i-1]}, (Q^{[j]}_{[i-1]} \right)$.

\medbreak\noindent
(ii) {\underline {\it The case where  $i=0\bmod 2^{j+1}$}}\\ 
 Let $q$ be the unique positive integer st. $i=q\cdot 2^{j+1}$.
  On the one hand, according to Lemma \ref{q-C-period}, we have 
 $\left(Q^{[j]}_{[i]},C^{[j]}_{[i]}\right)=
\left(Q^{[j]}_{[0]},C^{[j]}_{[0]}\right)=(0,00)=\phi(7,11)$.
On the other hand, we have  $i-1=q\cdot 2^{j+1}-1=(q-1)2^{j+1}+(2^{j+1}-1)$.  
It follows from $j\ge 1$ that $2^{j+1}-1\in [0,  2^{j+1}-1]$, whence we have $\mu(i-1,j)=2^{j+1}-1$.
This implies $\left(Q^{[j]}_{[i-1]},C^{[j]}_{[i-1]}\right)=
\left(Q^{[j]}_{[2^{j+1}-1]},C^{[j]}_{[2^{j+1}-1]}\right)$:
we are in the condition of Eqs. (\ref{Equations-q-C}) with $q'=7$ and $r'=2^{j-2}-1$.
We obtain $\left(Q^{[j]}_{[2^{j+1}-1]},C^{[j]}_{[2^{j+1}-1]}\right)=\left(Q^{[j]}_{7\cdot 2^{j-2}+(2^{j-2}-1)]},C^{[j]}_{[7\cdot 2^{j-2}+(2^{j-2}-1)]}\right)=\left(7,\theta^{2^{j-2}-1}(00)\right)=(7,11)$.
As a consequence, once more we have $\left(Q^{[j]}_{[i]},C^{[j]}_{[i]}\right)=\phi\left(Q^{[j]}_{[i-1]},C^{[j]}_{[i-1]}\right)$.

\medbreak\noindent
(iii) {\underline {\it The case where  $i\in [2^{j+1}+1,2^{n}-1]$ and $i\ne 0\bmod 2^{j+1}$}}\\
On the one hand, with this condition we have $\mu(i,j)\in [1,2^{j+1}-1]$. By substituting $\mu(i,j)$ to $i$ in  the preceding case (i) we obtain:

 $\left(Q^{[j]}_{[\mu(i,j)]},C^{[j]}_{[\mu(i,j)]}\right)=\phi\left(Q^{[j]}_{[\mu(i,j)-1]},C^{[j]}_{[\mu(i,j)-1]}\right)$.

On the other hand,   by the definition of $\mu$ we have $i=q\cdot 2^{j+1}+\mu(i,j)$:
 this implies $i-1=q\cdot2^{j+1}+\left(\mu(i,j)-1\right)$.  
It follows from $i\ne 0\bmod 2^{j+1}$ that $\mu(i,j)-1\in [0,2^{j+1}-1]$, therefore we have $\mu(i,j)-1=\mu(i-1,j)$.
As a consequence, we obtain $\left(Q^{[j]}_{[\mu(i,j)]},C^{[j]}_{[\mu(i,j)]}\right)=\phi\left(Q^{[j]}_{[\mu(i-1,j)]},C^{[j]}_{[\mu(i-1,j)]}\right)$. 
According to Lemma \ref{q-C-period}, this implies $\left(Q^{[j]}_{[i]},C^{[j]}_{[i]}\right)=\phi\left(Q^{[j]}_{[i-1]},C^{[j]}_{[i-1]}\right)$: this completes the proof.
\end{proof}
\bigbreak\noindent
According to the result of Proposition \ref{CC} we obtain an iteration-based method for computing the sequence $\gamma^{n,k}$ (see Algorithm 2). 
Recall that we set $C^{[n_0]}_{[0]}=\gamma^{n_0,1}$. 
From the point of view of implementation,  in the spirit of Algorithm 1, 
for every $i\in [0,2^n-1]$ the two following objects:

-- The component $C^{[n_0]}_{[i]}$ 

-- the row  $\left(\left(Q^{[n]}_{[i]},C^{[n]}_{[i]}\right),\left(Q^{[n-2]}_{[i]},C^{[n-2]}_{[i]}\right),\cdots, \left(Q^{[n_0+2]}_{[i]},C^{[n_0+2]}_{[i]}\right)\right)$,

are memorized in the corresponding  generic components:

-- ${\cal C}^{[n_0]}$

-- $\left(\left({\cal Q}^{[n]},{\cal C}^{[n]}\right),\left({\cal Q}^{[n-2]},{\cal C}^{[n-2]}\right),\cdots, \left({\cal Q}^{[n_0+2]},{\cal C}^{[n_0+2]}\right)\right)$.

Each time the counter $i$ is incremented these generic components are updated.
According to Eqs. \ref{Eq-C}, the column $C^{[n_0]}$ takes the following expression:
\begin{equation}
\label{C0}
C^{[n_0]}_{[0\cdot\cdot2^{n_0+1}-1]}=\left(\gamma^{n_0,1},\rho^{n_0,1}\right),~~
C^{[n_0]}=\left(\gamma^{n_0,1},\rho^{n_0,1}\right)^{2^{n-n_0-1}}.
\end{equation}
From this point of view, we start the computation by setting:
 ${\cal C}^{n_0}=C^{[n_0]}_{[0]}=\gamma^{n_0,1}_{[0]}$.
\medbreak\noindent
{\it Some comments about Algorithm 2}\\
The variable $b$  takes values in the  two-symbol set $\{\gamma,\rho\}$: 
its role is to determine which of  Eqs. (\ref{Equations-q-C}), (\ref{Equations-q-Cn}) should be applied  in order to  compute the value of the pair $\left(Q^{[n_0}_{[i]},C^{[n_0]}_{[i]}\right)$ (lines 7--11).
According to the prop. (ii) of Lemma \ref{C-period},  the variable $b$, which is initialized to $\gamma$, is actualized each time  the integer $\mu_0=\mu(i,2^{n_0+2})$ meets some element of $\{ 2^{n_0},2^{n_0+1}\}$ (lines 12--17).
  The result of Proposition \ref{CC}, for its part, is applied at lines 19--26.
\medbreak
\noindent
{\it Questions related to complexity}\\
The study is similar to the one of Sect. \ref{A>3bis}, and it leads to similar conclusions. 
Beforehand we note that, in any case, the alphabet $A$ and the permutation $\theta$ should be computed in a preprocessing phase.

\smallbreak
-- Regarding the generic row $\left(\left({\cal Q}^{[n]},{\cal C}^{[n]}\right),\cdots, \left({\cal Q}^{[n_0+1]},{\cal C}^{[n_0+1]}\right)\right)$, 
there is a positive integer, say $\ell$ (the maximum cost of each operation over every component),
st.  updating the sequence requires at most $\ell(n-n_0-1)= \ell k$ insertions. 
Consequently, when the counter $i$ reaches the value $i_{max}+1$, the total amount of operations is at most $2^n\ell k$.
 
\smallbreak
-- In order to compute the component  ${\cal C}^{[n_0]}$, as for Algorithm 1 there are  two possible approaches: 

\smallbreak
(a) Firstly, for each value of $i\in [1,2^{n_0}-1]$ we apply the instruction (\ref{choix-j-g}) from Algorithm (b) (see
the preliminaries). Computing each of the finite sequences  $\gamma^{n_0,1}$,  $\rho^{n_0,1}$,
classically requires an amount of   $2^{n_0}+2^{n_0-1}+\cdots+2\le 2^{n_0}$
one-character   substitutions, therefore for computing
the whole column $C^{[n_0]}$, the total cost of the preceding operations
is bounded by $2^{n-n_0}\cdot 2\cdot 2^{n_0}=2^{n+1}$.
Consequently the total amount of operations is bounded by $2^{n+1}\ell k+2^{n+1}=2^{n+1}(\ell k+1)$.
This leads to a computation in  amortized time of $2^{-n}2^{n+1}(\ell k+1)=O(1)$, with  space linear in $n$.

\smallbreak
(b)  The second approach consists in implementing in a preprocessing phase the sequences $\gamma^{n_0,1}$, $\rho^{n_0,1}$ and  the mappings $\pi$, $\phi$ :
such an implementation requires  space $O(n_02^{n_0})$ and, as indicated above,  a total amount of $O(2^{n_0})$ substitutions, with space  $O(n_02^{n_0})$.
After that, in the processing phase, updating ${\cal C}^{n_0}$ will be performed by
a constant number of requests to $\gamma^{n_0,1}$, $\gamma^{n_0,1}$, $\pi$ and $\phi$, say $\ell_1$ (see lines 19-23).
Consequently, in the processing phase updating the matrices ${\cal C}^{[n..n_0+1]}$ , ${\cal Q}^{[n..n_0+1]}$ requires  at most  $\ell +\ell_1k$ operations:
with this second strategy  of implementation  Algorithm 2 is loopless and  requires space linear in $n+n_02^{n_0}$.
\begin{figure}[H]
\begin{center}
\includegraphics[width=15cm,height=12.5cm]{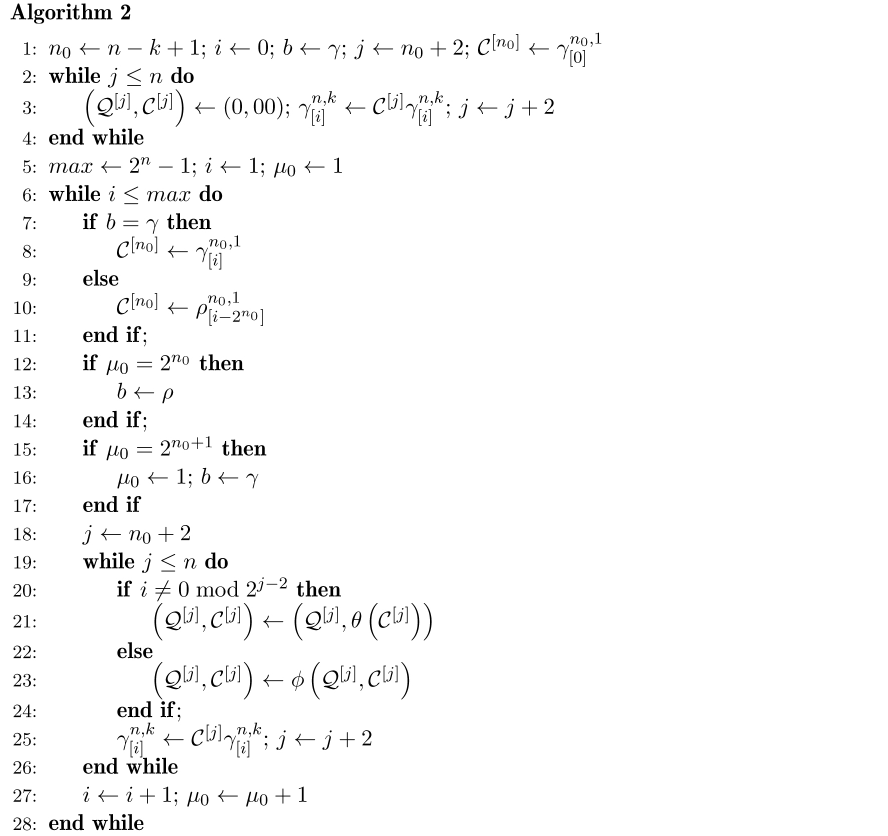}
\end{center}
\end{figure}
\medbreak
\begin{figure}[H]
\begin{center}
\includegraphics[width=12cm,height=7cm]{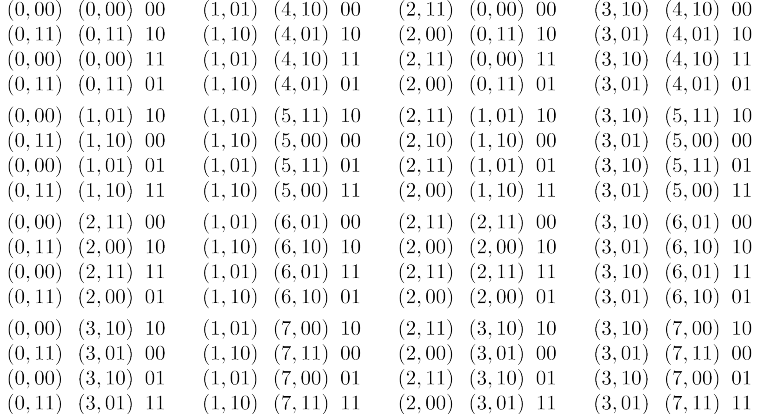}
\end{center}
\end{figure}

\section{The case where we have $|A|=2$ and  $k$ even}
\label{Cons}
Let $k$ be a positive even integer.
Beforehand,  we remind some classical algebraic interpretation of the substitution $\sigma_k$  in the framework of the binary alphabet $A=\{0,1\}$.
Denote by $\oplus$ the addition in the group ${\mathbb Z}/2{\mathbb Z}$ with identity $0$. 
Given a positive integer $n$, 
and  $w,w'\in A^n$, define $w\oplus w'$ as the unique word of $A^n$ st.
$(w\oplus w')_i=w_i\oplus w'_i$, for each $i\in [1,n]$. With this notation the sets $A^n$ and $({\mathbb Z}/2{\mathbb Z})^n$ are in one-to-one correspondence. Moreover we have
$w'\in\sigma_k(w)$ iff. some word $u\in A^n$ exists st. $|u|_1=k$ and $w=w' {\oplus} u$:  since $k$ is even,  we obtain $|w|_1=|w'|_1\bmod 2$.
Consequently, given a  $\sigma_k$-Gray cycle $\left(\alpha_{[i]}\right)_{0\le i\le m}$, 
for each $i\in [0,m]$ the equation $\left|\alpha_{[i]}\right|_1=\left|\alpha_{[0]}\right|_1\bmod 2$ holds.
As a corollary, setting ${\rm Even}_1^n=\{w\in A^*:|w|_1=0 \bmod 2\}$ and ${\rm Odd}_1^n=\{w\in A^*:|w|_1=1\bmod 2\}$, we obtain the following property:
\begin{lem}
\label{pary}
With the condition of Section \ref{Cons},  given a  $\sigma_k$-Gray cycle $\alpha$ over $X$, either we have $X\subseteq {\rm Even}_1^n$, or we have $X\subseteq {\rm Odd}_1^n$.
\end{lem}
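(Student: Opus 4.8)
The statement is an immediate corollary of the parity computation carried out just above it, and the plan is to exploit condition \ref{iva} to identify $X$ with the set of terms of $\alpha$, then invoke the already-established invariance of the parity of $|\cdot|_1$ along the cycle. First I would recall that, by condition \ref{iva} in the definition of a $\sigma_k$-Gray cycle, every word of $X$ occurs as a term $\alpha_{[i]}$ of the sequence $\alpha=\left(\alpha_{[i]}\right)_{0\le i\le m}$; conversely, each term is by construction a word of $X$, so that $X=\bigcup_{0\le i\le m}\{\alpha_{[i]}\}$. Thus it suffices to control the value of $|\alpha_{[i]}|_1\bmod 2$ over all indices $i$.

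Next I would reuse the key algebraic observation recorded before the statement: since $k$ is even, whenever $\alpha_{[i]}\in\sigma_k\left(\alpha_{[i-1]}\right)$ there is a word $u\in A^n$ with $|u|_1=k$ and $\alpha_{[i-1]}=\alpha_{[i]}\oplus u$; because $|u|_1$ is even, this forces $\left|\alpha_{[i]}\right|_1=\left|\alpha_{[i-1]}\right|_1\bmod 2$. A routine induction along the cycle, using condition \ref{ivb} to connect consecutive terms, then yields $\left|\alpha_{[i]}\right|_1=\left|\alpha_{[0]}\right|_1\bmod 2$ for every $i\in[0,m]$, so the parity of the number of occurrences of $1$ is constant across the whole cycle.

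Finally I would conclude with the dichotomy: the common parity $\left|\alpha_{[0]}\right|_1\bmod 2$ is either $0$ or $1$. In the former case every term, hence every word of $X$, lies in ${\rm Even}_1^n$; in the latter case $X\subseteq{\rm Odd}_1^n$. There is no genuine obstacle here — the entire content of the lemma is carried by the parity-preservation property of $\sigma_k$ for even $k$, which is already in hand; the only thing the proof adds is the passage from ``consecutive terms share a parity'' to ``all terms of the cycle share a parity'', and this is supplied by conditions \ref{iva} and \ref{ivb} together with the induction above.
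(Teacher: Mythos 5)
Your proposal is correct and follows exactly the route the paper takes: the paper derives the lemma as an immediate corollary of the observation that $w'\in\sigma_k(w)$ with $k$ even forces $|w|_1=|w'|_1\bmod 2$ (via $w=w'\oplus u$ with $|u|_1=k$), propagated along the cycle to give $\left|\alpha_{[i]}\right|_1=\left|\alpha_{[0]}\right|_1\bmod 2$ for all $i$, with conditions \ref{iva} and \ref{ivc} identifying $X$ with the set of terms. The only difference is that you spell out the induction and the identification $X=\bigcup_i\{\alpha_{[i]}\}$ explicitly, which the paper leaves implicit.
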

Given an even integer $n$, we define the sequences $\gamma^{n,k}$ and $\underline{\gamma}^{n,k}$ as indicated in the following:
\begin{eqnarray}
\label{Construct}
(\forall i\in [0, 2^{n-1}-1])~~\gamma^{n,k}_{[i]}=\theta^i(0)\gamma^{n-1,k-1}_{[i]}~~{\rm and}
~~\underline{\gamma}^{n,k}_{[i]}=\theta^i(1)\gamma^{n-1,k-1}_{[i]}.
\end{eqnarray}
 According to Proposition \ref{Gamma-n-k-odd}, since $k-1$ is an odd integer the sequence $\gamma^{n-1,k-1}$
is a  $\sigma_{k-1}$-Gray cycle over $A^{n-1}$.

For instance, we have  $\gamma^{6,4}_{[0]}=0 00 000$, $\underline\gamma^{6,4}_{[0]}=1 00 000$, $\gamma^{6,4}_{[1]}=111100$, and $\underline\gamma^{6,4}_{[1]}=011100$.
\begin{proposit}
\label{Gamma-n-k-even}
The sequence  $\gamma^{n,k}$ (resp., $\underline{\gamma}^{n,k}$) is a $\sigma_k$-Gray cycle over  ${\rm Even}_1^{n}$ (resp., ${\rm Odd}^{n}_1$).
\end{proposit}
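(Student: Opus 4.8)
The plan is to reduce everything to the already‑established properties of the $\sigma_{k-1}$-Gray cycle $\gamma^{n-1,k-1}$. Since $k$ is even, $k-1$ is odd, so Proposition \ref{Gamma-n-k-odd} guarantees that $\gamma^{n-1,k-1}$ is a $\sigma_{k-1}$-Gray cycle over $A^{n-1}$; in particular it is a permutation of the $2^{n-1}$ words of $A^{n-1}$. According to Eq. (\ref{Construct}), each $\gamma^{n,k}_{[i]}$ is obtained by prepending the single character $\theta^i(0)$ to $\gamma^{n-1,k-1}_{[i]}$, and likewise $\underline{\gamma}^{n,k}_{[i]}$ by prepending $\theta^i(1)$. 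I will verify Conditions \ref{ivb} and \ref{ivc} directly, then identify the image of each sequence with ${\rm Even}_1^n$ (resp. ${\rm Odd}_1^n$) by invoking Lemma \ref{pary} and a cardinality count, which simultaneously yields Condition \ref{iva}. I treat $\gamma^{n,k}$ in detail; the argument for $\underline{\gamma}^{n,k}$ is identical after replacing $0$ by $1$ in the prepended character.

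\textbf{Condition \ref{ivb} (adjacency).} Here lies the heart of the argument, a Hamming‑distance bookkeeping. For $i\in[1,2^{n-1}-1]$, the words $\gamma^{n,k}_{[i]}$ and $\gamma^{n,k}_{[i-1]}$ carry the length‑$(n-1)$ suffixes $\gamma^{n-1,k-1}_{[i]}$ and $\gamma^{n-1,k-1}_{[i-1]}$, which are at Hamming distance $k-1$ because $\gamma^{n-1,k-1}$ satisfies Condition \ref{ivb}. Their leading characters are $\theta^i(0)$ and $\theta^{i-1}(0)$; since $\theta$ is the transposition $(0\,1)$, one has $\theta^i(0)=\theta\!\left(\theta^{i-1}(0)\right)\neq\theta^{i-1}(0)$, contributing exactly one further mismatch. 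Hence the two words are at Hamming distance $1+(k-1)=k$, i.e. $\gamma^{n,k}_{[i]}\in\sigma_k\!\left(\gamma^{n,k}_{[i-1]}\right)$. For the cyclic closure I compare $\gamma^{n,k}_{[0]}$ with $\gamma^{n,k}_{[2^{n-1}-1]}$: their suffixes $\gamma^{n-1,k-1}_{[0]}$ and $\gamma^{n-1,k-1}_{[2^{n-1}-1]}$ are again at distance $k-1$ by the cyclic part of Condition \ref{ivb} for $\gamma^{n-1,k-1}$, while the leading characters are $\theta^0(0)=0$ and $\theta^{2^{n-1}-1}(0)$; as $2^{n-1}-1$ is odd, $\theta^{2^{n-1}-1}(0)=1\neq 0$, giving one more mismatch and total distance $k$.

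\textbf{Condition \ref{ivc} (distinctness) and the image.} If $\gamma^{n,k}_{[i]}=\gamma^{n,k}_{[i']}$, then comparing their length‑$(n-1)$ suffixes gives $\gamma^{n-1,k-1}_{[i]}=\gamma^{n-1,k-1}_{[i']}$, whence $i=i'$ because $\gamma^{n-1,k-1}$ satisfies Condition \ref{ivc}. Thus the $2^{n-1}$ terms of $\gamma^{n,k}$ are pairwise distinct and, by what precedes, form a $\sigma_k$-Gray cycle over their image $X$. To pin down $X$, I first note that $\gamma^{n-1,k-1}_{[0]}=0^{n-1}$ — this is immediate by induction on the odd parameter, from the base value $\gamma^{n_0,1}_{[0]}=0^{n_0}$ (Eq. (\ref{termes-initiaux})) together with Eq. (\ref{6}) taken at $r=0$ — so that $\gamma^{n,k}_{[0]}=\theta^0(0)\,0^{n-1}=0^n$ has even weight. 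By Lemma \ref{pary} a $\sigma_k$-Gray cycle is entirely contained in ${\rm Even}_1^n$ or entirely in ${\rm Odd}_1^n$; since the anchor term $0^n$ lies in ${\rm Even}_1^n$, we get $X\subseteq{\rm Even}_1^n$. Finally $|X|=2^{n-1}=|{\rm Even}_1^n|$ forces $X={\rm Even}_1^n$, which is exactly Condition \ref{iva}. The same reasoning applied to $\underline{\gamma}^{n,k}$, whose anchor term is $\underline{\gamma}^{n,k}_{[0]}=\theta^0(1)\,0^{n-1}=10^{n-1}\in{\rm Odd}_1^n$, yields a $\sigma_k$-Gray cycle over ${\rm Odd}_1^n$.

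\textbf{Main obstacle.} The distance bookkeeping is routine once one observes that prepending a transposed character always flips the leading bit, so the distance rises by exactly one; the only point demanding care is that this also holds for the wrap‑around pair, which is why I record that $2^{n-1}-1$ is odd. The genuinely non‑formal step is the identification of the image: Conditions \ref{ivb} and \ref{ivc} only certify that $\gamma^{n,k}$ is a $\sigma_k$-Gray cycle over \emph{some} $2^{n-1}$-element set, and it is Lemma \ref{pary} together with the parity of a single anchor word that upgrades this to the precise statement that the image equals ${\rm Even}_1^n$ (resp. ${\rm Odd}_1^n$).
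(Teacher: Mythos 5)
Your proof is correct and follows essentially the same route as the paper's: Condition \ref{ivc} from injectivity of the length-$(n-1)$ suffixes, Condition \ref{ivb} from the distance-$(k-1)$ suffixes plus the flipped prepended character (including the wrap-around pair), and Condition \ref{iva} via Lemma \ref{pary} together with the cardinality count $2^{n-1}=\left|{\rm Even}_1^n\right|$. Your only additions are to make explicit two details the paper leaves implicit, namely that $2^{n-1}-1$ is odd (so the wrap-around leading characters differ) and that the anchor term $\gamma^{n,k}_{[0]}=0^n$ (via $\gamma^{n-1,k-1}_{[0]}=0^{n-1}$) selects ${\rm Even}_1^n$ rather than ${\rm Odd}_1^n$.
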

\begin{proof}
(i) According to Eqs. (\ref{Construct}), 
 since $\gamma^{n-1,k-1}$ satisfies Cond. \ref{ivc}, by construction both the sequences $\gamma^{n,k}$ and $\underline{\gamma}^{n,k}$ also satisfy \ref{ivc}.

\medbreak\noindent
(ii) By Lemma \ref{pary}, we have $\bigcup_{0\le i\le 2^{n}-1}\left\{\gamma^{n,k}\right\}\subseteq{\rm Even}_1^{n}$
and $\bigcup_{0\le i\le 2^{n}-1}\left\{\underline{\gamma}^{n,k}\right\}\subseteq {\rm Odd}_1^{n}$.
In addition, according to Eqs. (\ref{Construct}), we have  $\left|\gamma^{n,k}\right|=\left|\underline{\gamma}^{n,k}\right|=\left|\gamma^{n-1,k-1}\right|=2^{n-1}=\left| {\rm Even}_1^{n}\right|$. This implies 
 $\bigcup_{0\le i\le 2^{n}-1}\left\{\gamma^{n,k}\right\}= {\rm Even}_1^{n}$
and $\bigcup_{0\le i\le 2^{n}-1}\left\{\underline{\gamma}^{n,k}\right\}= {\rm Odd}_1^{n}$ that is, both the sequences $\gamma^{n,k}$ and $\underline\gamma^{n,k}$ satisfy Cond. \ref{iva}.

\medbreak\noindent
(iii) Let $i\in [1, 2^{n-1}-1]$. Since $\gamma^{n-1,k-1}$ satisfies \ref{ivb}, we have $\gamma^{n-1,k-1}_{[i]}\in\sigma_{k-1}\left(\gamma^{n-1,k-1}_{[i-1]}\right)$.
According to Eqs. (\ref{Construct}), the initial characters of $\gamma^{n,k}_{[i]}$
and  $\gamma^{n,k}_{[i-1]}$ (resp.,  $\underline\gamma^{n,k}_{[i]}$
and  $\underline\gamma^{n,k}_{[i-1]}$) are different, hence we have $\gamma^{n,k}_{[i]}\in\sigma_k\left(\gamma^{n,k}_{[i-1]}\right)$ and $\underline\gamma^{n,k}_{[i]}\in\sigma_k\left(\underline\gamma^{n,k}_{[i-1]}\right)$.
In addition, once more according to Eqs. (\ref{Construct}) it follows from $\gamma^{n-1,k-1}_{[0]}\in\sigma_{k-1}\left(\gamma^{n-1,k-1}_{[2^{n-1}-1]}\right)$
that $\gamma^{n,k}_{[0]}=0\gamma^{n-1,k-1}_{[0]}\in\sigma_{k}\left(1\gamma^{n-1,k-1}_{[2^{n-1}-1]}\right)\subseteq\sigma_k\left(\gamma^{n,k}_{[2^{n-1}-1]}\right)$, hence $\gamma^{n,k}$ satisfies Cond. \ref{ivb}.
Similarly,   $\underline\gamma^{n-1,k-1}_{[0]}\in\sigma_{k-1}\left(\underline\gamma^{n-1,k-1}_{[2^{n-1}-1]}\right)$ implies
$\underline\gamma^{n,k}_{[0]}\in\sigma_k\left(\underline\gamma^{n,k}_{[2^{n-1}-1]}\right)$, hence $\underline\gamma^{n,k}$ satisfies Cond. \ref{ivb}.
\end{proof}\noindent 
We have now examined each of the different possibilities.
The following statement summarizes the results. 
\begin{theorem} 
\label{H-maximal}
 
Given a finite alphabet  $A$,  $k\ge 1$, and $n\ge k$, 
there is a loopless algorithm that allows to compute some specific maximum length $\sigma_k$-Gray cycle.
In addition 
exactly one the following conditions  holds:
\begin{eqnarray}
\lambda_{A,\sigma_k}(n)=\left\{
\begin{array}{ccccc}
|A^n|&|A|\ge 3, n\ge k\\
2&|A|=2, n=k\\
|A|^n&~|A|=2, n\ge k+1, k~ {\rm is~ odd}\\
~~~ |A|^{n-1}&~~~ |A|=2, n\ge k+1,  k~ {\rm is~ even.}
\end{array}
\right.
\nonumber
\end{eqnarray}
\end{theorem}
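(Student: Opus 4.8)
The plan is to assemble Theorem \ref{H-maximal} from the four constructions already established, matching each with a tight upper bound on $\lambda_{A,\sigma_k}(n)$. First I would record the two structural reductions that make every case tractable. Since $w'\in\sigma_k(w)$ forces $|w'|=|w|$, any set $X$ carrying a $\sigma_k$-Gray cycle is uniform, say $X\subseteq A^m$ with $m\le n$; moreover $\sigma_k(w)=\emptyset$ whenever $|w|<k$, and a one-element cycle would require $w\in\sigma_k(w)$ by Condition \ref{ivb}, which is impossible for $k\ge 1$. Hence every candidate cycle lives in some $A^m$ with $k\le m\le n$, so $\lambda_{A,\sigma_k}(n)$ equals the largest cardinality of such an $X$ over all admissible $m$, and it suffices to maximise this bound and to exhibit a cycle attaining it.

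For the two \emph{full-space} cases the upper bound is immediate. If $|A|\ge 3$ and $n\ge k$, the largest uniform subset of $A^{\le n}$ is $A^n$ itself, of cardinality $|A^n|=|A|^n$; Proposition \ref{Cycle-Age3-n-k} exhibits the cycle $h^{n,k}$ realizing this value, and Algorithm 1 (second implementation strategy) computes it looplessly, so $\lambda_{A,\sigma_k}(n)=|A^n|$. Likewise, if $|A|=2$, $k$ is odd and $n\ge k+1$, Proposition \ref{Gamma-n-k-odd} gives the cycle $\gamma^{n,k}$ over the whole of $A^n$, of length $2^n=|A|^n$, computed looplessly by Algorithm 2; again the ambient bound $|A|^n$ is attained, so $\lambda_{A,\sigma_k}(n)=|A|^n$.

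The two remaining binary cases require genuine upper-bound arguments. When $|A|=2$ and $n=k$, the relation $\sigma_k=\sigma_n$ sends each $w\in A^n$ to its unique complement $\theta(w)$; consequently a cycle must satisfy $w_{[i]}=\theta(w_{[i-1]})$, whence $w_{[2]}=\theta^2(w_{[0]})=w_{[0]}$, so no cycle of length exceeding $2$ can satisfy Condition \ref{ivc}. Since $(x,\theta(x))$ is a valid two-element cycle, $\lambda_{A,\sigma_k}(n)=2$. When $|A|=2$, $k$ is even and $n\ge k+1$, Lemma \ref{pary} forces any $\sigma_k$-Gray cycle over $X\subseteq A^m$ to satisfy $X\subseteq{\rm Even}_1^m$ or $X\subseteq{\rm Odd}_1^m$, whence $|X|\le 2^{m-1}\le 2^{n-1}$; thus $\lambda_{A,\sigma_k}(n)\le 2^{n-1}=|A|^{n-1}$. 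Proposition \ref{Gamma-n-k-even} realizes this value with $\gamma^{n,k}$ over ${\rm Even}_1^n$, and since Eq. (\ref{Construct}) builds $\gamma^{n,k}$ from the odd-case cycle $\gamma^{n-1,k-1}$ by prepending the single character $\theta^i(0)$, which merely flips at each step, the loopless Algorithm 2 for $\gamma^{n-1,k-1}$ extends to a loopless computation of $\gamma^{n,k}$; hence $\lambda_{A,\sigma_k}(n)=|A|^{n-1}$.

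The routine part is the case split together with the observation that the full-space constructions saturate the trivial cardinality bound. The only steps carrying real content are the two upper bounds: the complementation argument pinning the $n=k$ binary value at $2$, and the parity invariant of Lemma \ref{pary} capping the even-$k$ value at $2^{n-1}$. I would take care to confirm in each case that no uniform $A^m$ with $m<n$ can do better, which is immediate once $|X|\le 2^{m-1}$ (even $k$) or $|X|\le|A|^m$ (otherwise) is noted, so that the maximum over admissible $m$ is indeed attained at $m=n$ (or at $m=k=n$ in the degenerate binary case), completing the identification of $\lambda_{A,\sigma_k}(n)$.
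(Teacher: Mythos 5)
Your proposal is correct and follows essentially the same route as the paper: the same four-way case split, the same appeals to Proposition \ref{Cycle-Age3-n-k}, Proposition \ref{Gamma-n-k-odd}, Proposition \ref{Gamma-n-k-even} and Lemma \ref{pary} for the constructions and the even-$k$ upper bound, and the same loopless implementations of Algorithms 1 and 2 (with the extension of Algorithm 2 via Eq. (\ref{Construct}) for even $k$). The only differences are welcome elaborations of points the paper leaves implicit, namely spelling out the complementation argument $\sigma_n(w)=\{\theta(w)\}$ that pins the binary $n=k$ case at $2$, and explicitly checking that no uniform subset of $A^m$ with $k\le m<n$ can beat the bound attained at $m=n$.
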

\begin{proof}
Notice that, in the case where we have $|A|=2$, with $n$ being an  even integer, according to Eqs. (\ref{Construct}), and Proposition \ref{Gamma-n-k-odd}, 
Algorithm 2 can be easily extended in  a method  computing  $\gamma^{n,k}_{[i]}$ by starting with $\gamma^{n,k}_{[i-1]}$.
As a consequence, according to the studies in Sects. \ref{A>3bis}, \ref{S4bis}, in any case there is an iterated-basis algorithms generating some specific maximum length $\sigma_k$-Gray cycle.
As indicated above, according to the implementation of $h^{n_0,1}$, $\gamma^{n_0,1}$, and $\rho^{n_0,1}$, that algorithm can run in constant amortized-time or in constant time.

In what follows, we examine the length of the corresponding $\sigma_k$-Gray cycles.
Recall that if some  $\sigma_k$-Gray cycle exists over $X\subseteq A^{\le n}$, necessarily $X$ is a uniform set that is, the inclusion $X\subseteq A^m$ holds for some $m\le n$, whence
 in any case we have $\lambda_{A,\sigma_k}(n)\le |A|^n$.

-- According to Proposition \ref{H-maximal},  if we have $|A|\ge 3$ and  $n\ge k$, a $\sigma_k$-Gray cycle exists over $A^n$, whence we have $\lambda_{A,\sigma_k}(n)=[A|^n$.

-- Similarly, according to  Proposition  \ref{Gamma-n-k-odd}, the cond. $|A|=2$, $n\ge k+1$, $k$ is odd implies that a  $\sigma_k$-Gray cycle exists over $A^n$, hence
 we have $\lambda_{A,\sigma_k}(n)=[A|^n$.

-- As indicated in the preamble of Sect. \ref{S4},  the cond. $|A|=2$ with $n=k$ trivially implies  $\lambda_{A,\sigma_k}(n)=2$.

-- Finally, according to  Lemma \ref{pary},
given a binary alphabet  $A$, if $k$ is  even  we have $\lambda_{A,\sigma_k}(n)\le 2^{n-1}$ therefore,
according to   Proposition \ref{Gamma-n-k-even} the cond. $|A|=2, n\ge k+1$, $k$ is even implies that a $\sigma_k$-Gray cycle 
 exists in $A^{n-1}$ that is  we have $\lambda_{A,\sigma_k}(n)=|A|^{n-1}$.
\end{proof}

\smallbreak\noindent
Algorithms (c) and (d) allow to construct maximum length Gray sequences  st. the Hamming distance of two consecutive terms is exactly $k$.
We close the study by examining the case where Gray cycles are defined with a weaker constraint. 

\medbreak\noindent
{\it $k$-Gray codes}\\
These sequences are commonly defined as Gray sequences 
where two consecutive  terms have distance at most $k$.
In the context of our study, $k$-Gray cycles are actually $\Sigma_k$-Gray cycles, where we set $(w,w')\in\Sigma_k$ iff.
the Hamming distance of $w$ and $w'$ is not greater than $k$.

Note that we have  $\Sigma_k=id_{A^*}\cup \sigma_1\cup\cdots\cup\sigma_k$, thus $\sigma_1\subseteq \Sigma_k$.
As a consequence, the two  notions of $\sigma_1$-Gray cycle and $k$-Gray cycle are identical. In particular, we have $|A^n|=\lambda_{A,\Sigma_k}(n)=\lambda_{A,\sigma_1}(n)$.
Furthermore, according to Theorem \ref{H-maximal}, each of Algorithms (c), (d) generates a $k$-Gray cycle of length $|A|^n$ iff. exactly one of the two following conds. holds:

~~~~~~~~~$|A|\ge 3$, $n\ge k$

~~~~~~~~~$|A|=2$, $n\ge k+1$,  $k$ is odd.

In the case where we have $|A|=2$, $n\ge k+1$,  $k$ is even or $A=2$, $n=k$, once more according to Theorem \ref{H-maximal}, our algorithms cannot compute any $k$-Gray cycle.
\section*{Further development}
The present investigations could be done 
 in the framework of other word  binary relations $\tau$, such as
other edit relations, as defined in \cite{N21}, or relations connected to the so-called prefix or factor distances \cite{CP02,N22}.
One  could also characterize maximum length Gray cycles over $X$, with $X$ describing some noticeable  families of sets such as variable-length codes.

\bibliographystyle{plain}
\bibliography{mysmallbibFULL}

\end{document}